\documentclass[a4paper,11pt]{article}
\usepackage{jheppub}
\usepackage[utf8]{inputenc}
\usepackage{url}
\usepackage{mathrsfs}  
\usepackage{pdflscape}

\usepackage{refcount}
\usepackage{booktabs}
\usepackage{todonotes}
\usepackage{enumitem}
\usepackage{adjustbox}
\usepackage{afterpage}
\usepackage{dsfont}
\usepackage{nameref}

\makeatletter
\newcommand\level[1]{%
  \ifcase#1\relax\expandafter\chapter\or
    \expandafter\section\or
    \expandafter\subsection\or
    \expandafter\subsubsection\else
    \def\next{\@level{#1}}\expandafter\next
  \fi}
\newcommand{\@level}[1]{%
  \@startsection{level#1}
    {#1}
    {\z@}%
    {-3.25ex\@plus -1ex \@minus -.2ex}%
    {1.5ex \@plus .2ex}%
    {\normalfont\normalsize\bfseries}}

\newcounter{level4}[subsubsection]
\@namedef{thelevel4}{\thesubsubsection.\arabic{level4}}
\@namedef{level4mark}#1{}
\count@=4
\loop\ifnum\count@<100
  \begingroup\edef\x{\endgroup
    \noexpand\newcounter{level\number\numexpr\count@+1\relax}[level\number\count@]
    \noexpand\@namedef{thelevel\number\numexpr\count@+1\relax}{%
      \noexpand\@nameuse{thelevel\number\count@}.\noexpand\arabic{level\number\numexpr\count@+1\relax}}
    \noexpand\@namedef{level\number\numexpr\count@+1\relax mark}####1{}}
  \x
  \advance\count@\@ne
\repeat
\makeatother
\setcounter{secnumdepth}{3}

\usepackage[utf8]{inputenc} 
\usepackage{amsmath}
\usepackage{amsthm}
\usepackage{amsfonts,mathrsfs} 
\usepackage{xcolor}
\usepackage[english]{babel} 
\usepackage[autostyle]{csquotes}

\newtheorem*{theorem}{Theorem}

\newlist{inparaenum}{enumerate}{3}
\setlist[inparaenum,1]{label=\arabic*.}
\setlist[inparaenum,2]{label=\emph{\alph*})}
\setlist[inparaenum,3]{label=\emph{\roman*})}

\numberwithin{equation}{section}

\makeatletter
\newcommand\footnoteref[1]{\protected@xdef\@thefnmark{\ref{#1}}\@footnotemark}
\makeatother

\usepackage{xstring}
\usepackage{tikz} 
\usepackage{booktabs}
\usepackage{xcolor}
\usetikzlibrary{decorations.pathreplacing} 
\usetikzlibrary{decorations.pathmorphing} 
\usetikzlibrary{decorations.markings} 
\usetikzlibrary{arrows} 
\usetikzlibrary{shapes} 
\usetikzlibrary{matrix} 
\usetikzlibrary{positioning} 
\usetikzlibrary{shapes.geometric}
\usetikzlibrary{calc}

\tikzstyle{brane}=[draw]
\tikzset{D7/.style={circle, draw=black, inner sep=0pt, fill=white, minimum size=3mm}}
\tikzset{hasse/.style={circle, fill,inner sep=2pt}}
\tikzset{flavor/.style={regular polygon,regular polygon sides=4,inner sep=2.5pt, draw}}
\tikzset{gauge/.style={circle, draw,inner sep=2.5pt}}
\tikzset{gauger/.style={circle, draw=red, fill=red, inner sep=2.5pt}}
\tikzset{gaugeb/.style={circle, draw=lightblue, fill=lightblue, inner sep=2.5pt}}
\tikzset{emptygauge/.style={circle, draw=gray, dashed, inner sep=2.5pt}}
\tikzset{bd/.style={circle, draw=black, inner sep=0pt, fill=black, minimum size=2mm}}
\tikzset{wd/.style={circle, draw=black, inner sep=0pt, fill=white, minimum size=2mm}}
\tikzset{Dynkin/.style={circle, draw=black, inner sep=0pt, fill=white, minimum size=2mm}}
\tikzstyle{ligne}=[draw, thick] 
\tikzset{doublearrow/.style={ draw=black!75, color=black!75, thick, double distance=3pt, }}

\colorlet{lightblue}{blue!65}


\preprint{Imperial/TP/21/AH/06}

\title{Coulomb Branch Global Symmetry and Quiver Addition}

\author{Kirsty Gledhill}
\author{and Amihay Hanany }
\affiliation{Theoretical Physics Group, The Blackett Laboratory, Imperial College London, Prince Consort Road
London, SW7 2AZ, UK}
\emailAdd{k.gledhill20@imperial.ac.uk}
\emailAdd{a.hanany@imperial.ac.uk}

\abstract{To date, the best effort made to simply determine the Coulomb branch global symmetry of a theory from a $3d$ $\mathcal{N}=4$ quiver is by applying an algorithm based on its balanced gauge nodes. This often gives the full global symmetry, but there have been many cases seen where it instead gives only a subgroup. This paper presents a method for constructing several families of $3d$ $\mathcal{N}=4$ unitary quivers where the true global symmetry is enhanced from that predicted by the balance algorithm, motivated by the study of Coulomb branch Hasse diagrams. This provides a rich list of examples on which to test improved algorithms for unfailingly identifying the Coulomb branch global symmetry from a quiver.}

\begin{document}
\maketitle

\section{Introduction}
\label{sec:Intro}
The particles appearing in a quantum field theory are thought of as excitations of the vacuum configuration. The moduli space of a theory, the space of all gauge invariant vacua, is therefore of great interest to study. Since Noether's work in the early $1900s$ \cite{Noether:1918}, symmetries have played a role of utmost importance in physics: inducing conserved currents and charges which identify quantum numbers that are then used to label states. To properly discern the spectrum of a theory then, we need to know the quantum numbers that will be used to describe them, and thus must learn about the global symmetry of the moduli space.\\

\noindent In $1996$, Douglas and Moore \cite{Douglas:1996sw} showed that the information in a Lagrangian theory could be boiled down and encoded in a diagram called a quiver \cite{Nakajima:1994nid}. Recent history has seen much development in the understanding of quiver theories\footnote{Quiver and quiver theory are often used synonymously. For example, ``the Coulomb branch of a quiver" really means the Coulomb branch of the theory the quiver represents.}, and in this paper it is the global symmetries of Coulomb branches of $3d$ $\mathcal{N}=4$ quiver gauge theories that we focus on. In theories with eight supercharges, the Coulomb branch is the subset of the moduli space which is parameterised by the vacuum expectation values of the scalars in the vector multiplets of the theory. It is an algebraic variety, as it is the set of variables which cause the vacuum equations to yield zero subject to the relation of gauge invariance. It can also be understood as a geometric object, giving information about the effective K\"ahler potential. In $2016$, Nakajima and collaborators conjectured in \cite{Braverman:2016wma} that the Coulomb branches of $3d$ $\mathcal{N}=4$ quiver gauge theories are symplectic singularities\footnote{Roughly speaking, according to Beauville's definition, a symplectic singularity is an algebraic variety with a symplectic form which contains zero or more singular points at which the symplectic form is degenerate.} \cite{2000InMat.139..541B}. Many techniques have been developed to study these algebro-geometric objects, but the one of greatest importance in this paper is the monopole formula \cite{Cremonesi:2013lqa}.\\

\noindent In three dimensions, at a generic point on the Coulomb branch, the gauge symmetry is broken to $U(1)^r$ where $r$ is the rank of the gauge group of the theory \cite{PhysRevLett.13.585,Seiberg:1994rs}. Moreover, for every $U(1)$ factor the gauge group contains, there is an associated conserved current, which arises due to a global topological symmetry\footnote{The name topological is due to the fact that the existence of this symmetry is just a consequence of the geometric construction of a gauge field in spacetime.}. For the case of $\mathcal{N}=4$, there is also an $SU(2)_C$ global $R$-symmetry.\footnote{The Higgs branch also experiences an $SU(2)_H$ $R$-symmetry, and so the full moduli space has $SU(2)_C \ \times \ SU(2)_H$ $R$-symmetry.} The monopole formula gives the refined Hilbert series of the Coulomb branch by introducing a fugacity for each of these symmetries, $z$ and $t$ respectively, to grade the gauge invariant operators of the quiver theory by their charges under these symmetries. Note that this formula only works for \textit{good} and \textit{ugly} theories, as defined in \cite{Gaiotto:2008ak}, and in such cases the Coulomb branch is found to be a HyperK\"ahler cone. The Hilbert series is just another name for a partition function or generating function: in this case, the coefficients of $t$ at power $2 \Delta$ give the characters of the representations under the global symmetry $G$ of the gauge invariant operators of the theory with conformal dimension $\Delta$.\footnote{Note that the convention we adopt is to grade the fugacity $t$ by twice the conformal dimension. This is in contrast to some other works, in which $t$ is graded by just one copy of $\Delta$.} The point that is of particular interest to us in this work is that the $t^2$ coefficient is always the character of the adjoint representation of the global symmetry \cite{1992math......4227B}, so to determine the Coulomb branch global symmetry of a quiver one can simply compute the monopole formula to order $t^2$.\\

\noindent Whilst this is a very useful and reliable method, it takes time to implement, and in fact for more complicated quivers it can actually be too computationally challenging to find. We would like to realise the Coulomb branch global symmetry simply from looking at the quiver for the theory in question. By and large, this can be done by applying an algorithm \cite{Cabrera:2018uvz} based on the quiver's balanced \textit{gauge} nodes \cite{Gaiotto:2008ak}, but there are cases for which this algorithm only produces a strict subgroup of the Coulomb branch global symmetry \cite{Mekareeya:2017jgc,Bourget:2020mez}. Due to the failure of the balance algorithm to always give the full Coulomb branch global symmetry, there is a need for a new simple method to quickly determine it by just looking at a quiver.\\

\noindent In an effort to overcome this problem we turn to the Hasse diagram, which can also be used as a tool to learn about the global symmetry of a quiver. It characterises the structure of a moduli space by visualising its stratification as a symplectic singularity into symplectic leaves and transverse slices \cite{Bourget:2019aer,Grimminger:2020dmg}. From a physical perspective, the Hasse diagram is a kind of phase diagram, where by a phase of a theory we mean the collection of all possible vacua with the same set of massless states. Each symplectic leaf represents a different phase, and each transverse slice tells us which moduli are adjusted to move between the leaves (phases) that it connects.\\

\noindent The global symmetries of the lowest elementary slices in the Hasse diagram form a subgroup of the non-Abelian global symmetry, and so finding these lowest slices can aid us in our pursuit of determining the global symmetry of a quiver. Quiver subtraction \cite{Cabrera:2018ann} can be used to find the Coulomb branch Hasse diagram of a quiver. It turns out that, for Coulomb branch Hasse diagrams which contain as a leaf a particular type of quiver $Q_a$, the non-Abelian global symmetry of the Coulomb branch matches that predicted by the Hasse diagram, but is slightly enhanced from that predicted by the balance. This means that we can reverse engineer the process of quiver subtraction, starting from $Q_a$, to obtain quivers whose Coulomb branch global symmetry is enhanced from that predicted by the balance. The family of quivers $Q_a$ that are important in this process are a certain set of quivers containing adjoint matter, many of whom's Coulomb branches appear in the Kostant Brylinski classification \cite{1992math......4227B}. The reverse process of quiver subtraction, quiver addition, has been introduced before in \cite{Bourget:2021siw}. We ammend the method given there to include how to perform quiver addition on quivers $Q_a$ to remove the adjoint matter. This construction provides a rich range of examples on which to test new proposed algorithms to determine the Coulomb branch global symmetry directly from a quiver. \\

\noindent The paper is organised as follows. In Section \ref{sec: CB GS} we review the method of finding the Coulomb branch global symmetry from the balance, and some basics of Hasse diagrams. In Section \ref{sec: addition method}, the ammendment to the balance algorithm for the quivers $Q_a$ is stated, a full list of such quivers $Q_a$ (on which quiver addition can be performed to result in a quiver with enhanced Coulomb branch global symmetry) is provided, and the method of performing this addition is given. In Sections \ref{sec:enhancement to Bn}, \ref{sec:enhancement to G2}, \ref{sec:enhancement to Dn} and \ref{sec:enhancement to A2}, the quivers with enhanced Coulomb branch global symmetry which arise as the results of performing quiver addition on the quivers $Q_a$ given in Table \ref{tab:adj hyper Qs} of Section \ref{sec:quivers with adj matter} are listed. The global symmetries predicted by the balance algorithm in these sections contain a factor that is enhanced to $B_n$, $G_2$, $D_n$ and $A_2$\footnote{The algebraic and Dynkin names for Lie groups, for example $B_n$ and $SO(2n+1)$, will be used synonymously throughout this paper.} respectively. Appendix \ref{app:quiver subtraction} reviews the method of quiver subtraction on quivers both with and without adjoint matter. Appendix \ref{app:fugmap} details the computation of fugacity maps to make explicit the global symmetry using the refined Hilbert series. In Appendix \ref{app:discrete projection}, a discussion and illustration of performing a discrete quotient on the refined Hilbert series is given.\\

\noindent Before we begin, we note the following points which will apply throughout the paper:
\begin{itemize}
    \item The quivers considered below will be restricted to $3d$ $\mathcal{N}=4$ unitary quiver gauge theories for which the monopole formula can be used to compute the Coulomb branch Hilbert series; that is, unitary quivers whose Coulomb branch is a HyperK\"{a}hler cone. For example, the Coulomb branches of many unitary quivers containing only regular matter\footnote{\label{footnote:reg matter}Regular matter is used to refer to that contained in quivers for which any two connected nodes, $n_i$ and $n_j$, the Cartan matrix has either the $ij$ or $ji$ entry as $-1$. This is extended from the definition of just having bifundamental matter to include non-simply laced edges.} which satisfy $\text{gcd}(\text{gauge node ranks})>1$ are not HyperK\"{a}hler cones,\footnote{The reason for this is not known, and so it could be that there exist counterexamples. We exclude the study of those that we \textit{know} diverge.} and so these will not be considered.
    \item Since we exclusively study the global symmetry of Coulomb branches in this note, in the interest of brevity, we will omit the ``Coulomb branch" from Coulomb branch global symmetry, Coulomb branch Hilbert series or Coulomb branch Hasse diagram throughout. 
    \item When discussing algebraic varieties, we will refer to this variety by both its magnetic quiver and the name of the variety itself interchangeably, and vice versa.
\end{itemize}

\section{Coulomb Branch Global Symmetry}
\label{sec: CB GS}

\noindent As mentioned in the introduction, we would like to be able to determine the global symmetry of a quiver without detailed computation. By and large, this can be done for the quivers\footnote{Recall that we only study $3d$ $\mathcal{N}=4$ unitary quivers in this paper.} that contain only regular matter,\footnoteref{footnote:reg matter} $Q_r$, by applying an algorithm based on its balanced \textit{gauge} nodes \cite{Gaiotto:2008ak, Cabrera:2018uvz}. We will refer to the global symmetry determined from this method as the \textbf{balance global symmetry (BGS)}. Although in most cases the BGS \textit{is} the true global symmetry, there are also many instances in which the BGS is instead a strict subset of it, see for instance \cite{Mekareeya:2017jgc,Bourget:2020mez}. In these cases the true global symmetry is enhanced from the BGS, and hence is called the \textbf{enhanced global symmetry (EGS)}. In Section \ref{sec: HDs} we will see how Hasse diagrams can also be used to investigate the global symmetry, but for now we start by reviewing the algorithm for evaluating the BGS.

\subsection{Balance and Global Symmetry}
\label{sec:BGS}
The concept of the balance of nodes in a quiver originated from quivers which have a brane system \cite{Hanany:1996ie}. The balance of a unitary node is related to the net charge difference of the two NS5 branes which trap the D3 branes that form the gauge group corresponding to the node. In three dimensions, this quantity is of interest because it leads to extra operators appearing as we flow to the infrared, which elicit more symmetries than seen in the UV theory.\\

\noindent A node is termed balanced if its excess is equal to zero. The definitions of unbalanced nodes, and more specifically overbalanced and underbalanced nodes, all follow naturally. The excess of a node $i$ in a quiver is given by the total number of flavours it sees, minus twice its rank. That is, if $f_i$ is the number of flavours $i$ sees and $r_i$ is its rank, then its excess $e_i$ is given by \cite{Gaiotto:2008ak}
\begin{equation}\label{excess defn}
    e_i = f_i - 2r_i.
\end{equation}
When we say the number of flavours $i$ sees, we refer to the number of hypermultiplets transforming under the gauge group represented by the node $i$ in a quiver, if no other nodes were gauge nodes. For example, in the quiver for (the closure of)\footnote{We will largely omit this ``closure of" when discussing varieties as nilpotent orbits in the future for ease, but it should always be this.} the minimal nilpotent orbit of $G_2$, given by the balanced affine Dynkin diagram of $G_2$,
\begin{equation}
    \begin{tikzpicture}[x=1cm,y=.8cm]
\node (g1) at (-1,0) [gauge,label=below:{$1$},label={[red]above:{$i_1$}}] {};
\node (g3) at (0,0) [gauge,label=below:{$2$},label={[red]above:{$i_2$}}] {};
\node (g4) at (1,0) [gauge,label=below:{$1$},label={[red]above:{$i_3$}}] {};
\node (g5) at (1.5,0) {$,$};
\draw (g1)--(g3);
\draw[transform canvas = {yshift=-2pt}] (g4)--(g3);
\draw[transform canvas = {yshift=0pt}] (g4)--(g3);
\draw[transform canvas = {yshift =2pt}] (g4)--(g3);
\draw (0.4,0.2)--(0.6,0)--(0.4,-0.2);
\end{tikzpicture}
\end{equation}
the node on the left sees $f_{i_1}=2$ hypermultiplets in its fundamental representation and thus its excess is $e_{i_1}=2-(2 \times 1)=0$, the node in the centre sees $f_{i_2}=1 + (3 \times 1)=4$ hypermultiplets (in its fundamental representation) and thus its excess is $e_{i_2}=4-(2 \times 2)=0$, and the node on the right sees $f_{i_3}=2$ hypermultiplets in its fundamental representation and thus its excess is $e_{i_3}=2-(2 \times 1)=0$. Because all gauge nodes in this quiver\footnote{Here, all the nodes in the quiver are gauge nodes. Quivers for which this is the case are termed \textit{unframed}. If a quiver contains at least one flavour node, it is termed \textit{framed}.} are balanced, we say the quiver is a balanced quiver. If a quiver contains unbalanced gauge nodes, then after ungauging (see below) the balanced nodes naturally split into connected subsets which must take the form of a Dynkin diagram of some Lie algebra. We call these balanced sub-quivers \textit{sub-Dynkin diagrams}.\\

\noindent We are almost ready to describe how to read the global symmetry of quivers of type $Q_r$ from the balance of their nodes, but there is one more important concept which needs introducing: ungauging. In an unframed unitary quiver, there is an overall $U(1)$ which acts trivially and thus needs to be ungauged. For the brane picture, this means the brane system is translationally invariant, and so ungauging this $U(1)$ corresponds to fixing an origin. When computing the Hilbert series using the monopole formula, this manifests as setting a single magnetic charge for one of the gauge nodes to zero\footnote{If we ungauge on a rank one gauge node, this causes it to turn into a rank one flavour node.}  \cite{Cremonesi:2013lqa,Hanany:2020jzl}. The same Coulomb branch is obtained regardless of which (long) node is ungauged in a (non-) simply laced quiver. Ungauging on a short node\footnote{This is a slight abuse of terminology: by ``ungauging on a short node" what we actually mean is to promote the short node to a long node, and then ungauge this long node in the resulting quiver. The phrase ``ungauging on a short node" is meaningless as short nodes do not have a gauge theoretic interpretation.} in a non-simply laced quiver can only be done if the short node is of rank one,\footnote{This is because ungauging on a short node of rank greater than one causes the lattice of magnetic charges to deform such that it no longer corresponds to the weights of a Lie group.} and yields a different Coulomb branch to that obtained by ungauging on a long node. Throughout this paper, in the case of non-simply laced quivers, the Coulomb branch we discuss will always be that obtained from ungauging on a long node. This concept of ungauging will appear in the BGS algorithm, which we are now ready to present.

\level{4}*{BGS algorithm}
\label{Alg: BGS alg}
\noindent The balance global symmetry for a unitary quiver containing only regular matter and whose Coulomb branch is a HyperK\"{a}hler cone, $Q_r$, is computed as follows:
\begin{inparaenum}
    \item If $Q_r$ is framed, and the gauge nodes in it comprise $s$ balanced sub-Dynkin diagrams $D_i$ of the simply connected Lie groups $G_i$, and $k$ unbalanced nodes, then the BGS is given by 
        \begin{equation} \label{BGS framed}
            BGS_{Q_r}^{framed}= \prod_{i=1}^{s} G_i \times U(1)^k.
        \end{equation}
    \item If $Q_r$ is unframed, there are two scenarios:
    \begin{inparaenum}
        \item If $Q_r$ is balanced, then it must be either the affine or twisted affine quiver for some simply connected Lie group $G$ (these quivers are given in Tables \ref{tab:affine Dds} and \ref{tab:twisted affine Dds} respectively), in which case
            \begin{equation}
                BGS_{Q_r}=G.
            \end{equation}
        \item If $Q_r$ is unbalanced, ungauge on an unbalanced node\footnote{Recall that in this note, we only study those quivers for which we ungauge on a long node.} of rank one if such a node exists, or of any rank if not. If this node was of rank one, $Q_r$ will now be framed, and the BGS can be read using step $1$. If this node was of rank greater than one, the quiver will still be unframed and the BGS is given by
            \begin{equation} \label{BGS unframed after ungauging}
                 BGS_{Q_r}^{unframed}= \prod_{i=1}^{s} G_i \times U(1)^{k-1}.
            \end{equation}
        where the definitions of $s$, $G_i$ and $k$ are as in step $1$.
    \end{inparaenum}
\end{inparaenum}

\clearpage

\begin{table}
    \centering
    
 \hspace*{-0.5cm}\begin{tabular}{|c|c|} \hline
 
Coulomb branch & Affine Quiver\\

\hline

$\begin{array}{c}
	a_k\\
	\end{array}$ &
$\raisebox{-.5\height}{\begin{tikzpicture}[x=1cm,y=.8cm]
\node (g8) at (-1,0) [gauge,label=below:{$1$}] {};
\node (g5) at (0,0) {$\cdots$};
\node (g6) at (1,0) [gauge,label=below:{$1$}] {};
\node (g7) at (0,1) [gauge,label=above:{$1$}] {};
\draw (g8)--(g5)--(g6)--(g7)--(g8);
\draw [decorate,decoration={brace,mirror,amplitude=6pt}] (-1,-0.8) --node[below=6pt] {$k$} (1,-0.8);
\end{tikzpicture}}$ \\
\hline

$\begin{array}{c}
	b_k\\
	\\
	k \geq 3
	\end{array}$ &
$\raisebox{-.5\height}{\begin{tikzpicture}[x=1cm,y=.8cm]
\node (g2) at (-2,0) [gauge,label=below:{$1$}] {};
\node (g3) at (-1,0) [gauge,label=below:{$2$}] {};
\node (g4) at (0,0) {$\cdots$};
\node (g5) at (1,0) [gauge,label=below:{$2$}] {};
\node (g12) at (2,0) [gauge,label=below:{$1$}] {};
\node (g11) at (-1,1) [gauge,label=above:{$1$}] {};
\draw (g2)--(g3)--(g4)--(g5);
\draw (g3)--(g11);
\draw[transform canvas={yshift=-1.5pt}] (g5)--(g12);
\draw[transform canvas={yshift=1.5pt}] (g5)--(g12);
\draw (1.4,-0.2)--(1.6,0)--(1.4,0.2);
\draw [decorate,decoration={brace,mirror,amplitude=6pt}] (-1,-0.8) --node[below=6pt] {$k-2$} (1,-0.8);
\end{tikzpicture}}$ \\
\hline

$\begin{array}{c}
	c_k\\
	\\
	k \geq 2
	\end{array}$ &
$\raisebox{-.5\height}{\begin{tikzpicture}[x=1cm,y=.8cm]
\node (g2) at (-2,0) [gauge,label=below:{$1$}] {};
\node (g3) at (-1,0) [gauge,label=below:{$1$}] {};
\node (g4) at (0,0) {$\cdots$};
\node (g5) at (1,0) [gauge,label=below:{$1$}] {};
\node (g12) at (2,0) [gauge,label=below:{$1$}] {};
\draw (g3)--(g4)--(g5);
\draw[transform canvas={yshift=-1.5pt}] (g2)--(g3);
\draw[transform canvas={yshift=1.5pt}] (g2)--(g3);
\draw (-1.6,-0.2)--(-1.4,0)--(-1.6,0.2);
\draw[transform canvas={yshift=-1.5pt}] (g5)--(g12);
\draw[transform canvas={yshift=1.5pt}] (g5)--(g12);
\draw (1.6,-0.2)--(1.4,0)--(1.6,0.2);
\draw [decorate,decoration={brace,mirror,amplitude=6pt}] (-1,-0.8) --node[below=6pt] {$k-1$} (1,-0.8);
\end{tikzpicture}}$ \\
\hline

$\begin{array}{c}
	d_k\\
	\\
	k\geq 4
	\end{array}$ &
$\raisebox{-.5\height}{\begin{tikzpicture}[x=1cm,y=.8cm]
\node (g2) at (-2,0) [gauge,label=below:{$1$}] {};
\node (g3) at (-1,0) [gauge,label=below:{$2$}] {};
\node (g4) at (0,0) {$\cdots$};
\node (g5) at (1,0) [gauge,label=below:{$2$}] {};
\node (g12) at (2,0) [gauge,label=below:{$1$}] {};
\node (g13) at (1,1) [gauge,label=above:{$1$}] {};
\node (g11) at (-1,1) [gauge,label=above:{$1$}] {};
\draw (g2)--(g3)--(g4)--(g5)--(g12);
\draw (g13)--(g5);
\draw (g11)--(g3);
\draw [decorate,decoration={brace,mirror,amplitude=6pt}] (-1,-0.8) --node[below=6pt] {$k-3$} (1,-0.8);
\end{tikzpicture}}$ \\
\hline

$\begin{array}{c}
	e_6
	\end{array}$ &
$\raisebox{-.5\height}{\begin{tikzpicture}[x=1cm,y=.8cm]
\node (g2) at (-2,0) [gauge,label=below:{$1$}] {};
\node (g3) at (-1,0) [gauge,label=below:{$2$}] {};
\node (g4) at (0,0) [gauge,label=below:{$3$}] {};
\node (g5) at (1,0) [gauge,label=below:{$2$}] {};
\node (g12) at (2,0) [gauge,label=below:{$1$}] {};
\node (g10) at (0,1) [gauge,label=right:{$2$}] {};
\node (g11) at (0,2) [gauge,label=right:{$1$}] {};
\draw (g2)--(g3)--(g4)--(g5)--(g12);
\draw (g4)--(g10)--(g11);
\end{tikzpicture}}$ \\
\hline

$\begin{array}{c}
	e_7
	\end{array}$ &
$\raisebox{-.5\height}{\begin{tikzpicture}[x=1cm,y=.8cm]
\node (g3) at (-3,0) [gauge,label=below:{$1$}] {};
\node (g4) at (-2,0) [gauge,label=below:{$2$}] {};
\node (g13) at (-1,0) [gauge,label=below:{$3$}] {};
\node (g5) at (0,0) [gauge,label=below:{$4$}] {};
\node (g12) at (1,0) [gauge,label=below:{$3$}] {};
\node (g6) at (2,0) [gauge,label=below:{$2$}] {};
\node (g10) at (3,0) [gauge,label=below:{$1$}] {};
\node (g7) at (0,1) [gauge,label=above:{$2$}] {};
\draw (g3)--(g4)--(g13)--(g5)--(g12)--(g6)--(g10);
\draw (g7)--(g5);
\end{tikzpicture}}$ \\
\hline

$\begin{array}{c}
	e_8
	\end{array}$ &
$\raisebox{-.5\height}{\begin{tikzpicture}[x=1cm,y=.8cm]
\node (g3) at (-3.5,0) [gauge,label=below:{$1$}] {};
\node (g4) at (-2.5,0) [gauge,label=below:{$2$}] {};
\node (g5) at (-1.5,0) [gauge,label=below:{$3$}] {};
\node (g6) at (-0.5,0) [gauge,label=below:{$4$}] {};
\node (g7) at (0.5,0) [gauge,label=below:{$5$}] {};
\node (g8) at (1.5,0) [gauge,label=below:{$6$}] {};
\node (g9) at (2.5,0) [gauge,label=below:{$4$}] {};
\node (g10) at (3.5,0) [gauge,label=below:{$2$}] {};
\node (g11) at (1.5,1) [gauge,label=above:{$3$}] {};
\draw (g3)--(g4)--(g5)--(g6)--(g7)--(g8)--(g9)--(g10);
\draw (g8)--(g11);
\end{tikzpicture}}$ \\
\hline

$\begin{array}{c}
    \\
	f_4\\
	\\
	\end{array}$ &
$\raisebox{-.5\height}{\begin{tikzpicture}[x=1cm,y=.8cm]
\node (g2) at (-2,0) [gauge,label=below:{$1$}] {};
\node (g3) at (-1,0) [gauge,label=below:{$2$}] {};
\node (g4) at (0,0) [gauge,label=below:{$3$}] {};
\node (g5) at (1,0) [gauge,label=below:{$2$}] {};
\node (g12) at (2,0) [gauge,label=below:{$1$}] {};
\draw (g2)--(g3)--(g4);
\draw (g12)--(g5);
\draw[transform canvas={yshift=-1.5pt}] (g4)--(g5);
\draw[transform canvas={yshift=1.5pt}] (g4)--(g5);
\draw (0.4,-0.2)--(0.6,0)--(0.4,0.2);
\end{tikzpicture}}$ \\
\hline

$\begin{array}{c}
    \\
    g_2\\
    \\
    \end{array}$ &
$\raisebox{-.5\height}{\begin{tikzpicture}[x=1cm,y=.8cm]
\node (g3) at (-1,0) [gauge,label=below:{$1$}] {};
\node (g4) at (0,0) [gauge,label=below:{$2$}] {};
\node (g5) at (1,0) [gauge,label=below:{$1$}] {};
\draw (g3)--(g4);
\draw[transform canvas = {yshift=-2pt}] (g4)--(g5);
\draw[transform canvas = {yshift=0pt}] (g4)--(g5);
\draw[transform canvas = {yshift =2pt}] (g4)--(g5);
\draw (0.4,0.2)--(0.6,0)--(0.4,-0.2);
\end{tikzpicture}}$ \\
\hline

\end{tabular}
 \caption{Quivers of the affine Dynkin diagrams. The first column lists the Coulomb branch varieties, which are all the minimal nilpotent orbits $g_k$ of the semi-simple Lie groups $G_k$, of the quivers in the second column when ungauged on a long node. These quivers will be referred to as ``affine $g_k$".}
    \label{tab:affine Dds}
\end{table}

\begin{table}[hbt!]
    \centering
    
 \hspace*{-0.5cm}\begin{tabular}{|c|c|} \hline
 
Coulomb Branch & Twisted Affine Quiver\\

\hline

$\begin{array}{c}
	\\
	a_2\\
	\\
	\end{array}$ &
$\raisebox{-.5\height}{\begin{tikzpicture}[x=1cm,y=.8cm]
\node (g1) at (-0.5,0) [gauge,label=below:{$2$}] {};
\node (g2) at (0.5,0) [gauge,label=below:{$1$}] {};
\draw[transform canvas={yshift=-1pt}] (g1)--(g2);
\draw[transform canvas={yshift=-3pt}] (g1)--(g2);
\draw[transform canvas={yshift=3pt}] (g1)--(g2);
\draw[transform canvas={yshift=1pt}] (g1)--(g2);
\draw (-0.1,0.2)--(0.1,0)--(-0.1,-0.2);
\end{tikzpicture}}$ \\
\hline

$\begin{array}{c}
	\\
	a_{2k-1}\\
	\\
	\end{array}$ &
$\raisebox{-.5\height}{\begin{tikzpicture}[x=1cm,y=.8cm]
\node (g1) at (-2,0) [gauge,label=below:{$2$}] {};
\node (g2) at (-1,0) [gauge,label=below:{$2$}] {};
\node (g3) at (0,0) {$\cdots$};
\node (g4) at (1,0) [gauge,label=below:{$2$}] {};
\node (g8) at (1,1) [gauge,label=above:{$1$}] {};
\node (g5) at (2,0) [gauge,label=below:{$1$}] {};
\draw (g2)--(g3)--(g4)--(g8);
\draw (g4)--(g5);
\draw[transform canvas={yshift=-1.5pt}] (g1)--(g2);
\draw[transform canvas={yshift=1.5pt}] (g1)--(g2);
\draw (-1.6,0.2)--(-1.4,0)--(-1.6,-0.2);
\draw [decorate,decoration={brace,mirror,amplitude=6pt}] (-1,-0.8) --node[below=6pt] {$k-2$} (1,-0.8);
\end{tikzpicture}}$\\
\hline

$\begin{array}{c}
    \\
	a_{2k}\\
	\\
	k\geq 2
	\end{array}$ &
$\raisebox{-.5\height}{\begin{tikzpicture}[x=1cm,y=.8cm]
\node (g1) at (-2,0) [gauge,label=below:{$2$}] {};
\node (g2) at (-1,0) [gauge,label=below:{$2$}] {};
\node (g3) at (0,0) {$\cdots$};
\node (g4) at (1,0) [gauge,label=below:{$2$}] {};
\node (g8) at (2,0) [gauge,label=below:{$1$}] {};
\draw (g2)--(g3)--(g4);
\draw[transform canvas={yshift=-1.5pt}] (g1)--(g2);
\draw[transform canvas={yshift=1.5pt}] (g1)--(g2);
\draw (-1.6,0.2)--(-1.4,0)--(-1.6,-0.2);
\draw[transform canvas={yshift=-1.5pt}] (g4)--(g8);
\draw[transform canvas={yshift=1.5pt}] (g4)--(g8);
\draw (1.4,0.2)--(1.6,0)--(1.4,-0.2);
\draw [decorate,decoration={brace,mirror,amplitude=6pt}] (-1,-0.8) --node[below=6pt] {$k-1$} (1,-0.8);
\end{tikzpicture}}$\\
\hline

$\begin{array}{c}
	\\
	d_4\\
	\\
	\end{array}$ &
$\raisebox{-.5\height}{\begin{tikzpicture}[x=1cm,y=.8cm]
\node (g1) at (-1,0) [gauge,label=below:{$3$}] {};
\node (g2) at (0,0) [gauge,label=below:{$2$}] {};
\node (g3) at (1,0) [gauge,label=below:{$1$}] {};
\draw (g2)--(g3);
\draw [transform canvas={yshift=-2pt}] (g1)--(g2);
\draw [transform canvas={yshift=0pt}] (g1)--(g2);
\draw [transform canvas={yshift=2pt}] (g1)--(g2);
\draw (-0.6,0.2)--(-0.4,0)--(-0.6,-0.2);
\end{tikzpicture}}$ \\
\hline

$\begin{array}{c}
    \\
	d_{k+1}\\
	\\
	k\geq 4
	\end{array}$ &
$\raisebox{-.5\height}{\begin{tikzpicture}[x=1cm,y=.8cm]
\node (g1) at (-2,0) [gauge,label=below:{$1$}] {};
\node (g2) at (-1,0) [gauge,label=below:{$2$}] {};
\node (g3) at (0,0) {$\cdots$};
\node (g4) at (1,0) [gauge,label=below:{$2$}] {};
\node (g8) at (2,0) [gauge,label=below:{$1$}] {};
\draw (g2)--(g3)--(g4);
\draw[transform canvas={yshift=-1.5pt}] (g2)--(g1);
\draw[transform canvas={yshift=1.5pt}] (g2)--(g1);
\draw (-1.4,0.2)--(-1.6,0)--(-1.4,-0.2);
\draw[transform canvas={yshift=-1.5pt}] (g4)--(g8);
\draw[transform canvas={yshift=1.5pt}] (g4)--(g8);
\draw (1.4,0.2)--(1.6,0)--(1.4,-0.2);
\draw [decorate,decoration={brace,mirror,amplitude=6pt}] (-1,-0.8) --node[below=6pt] {$k-1$} (1,-0.8);
\end{tikzpicture}}$\\
\hline

$\begin{array}{c}
	e_{6}\\
	\\
	k\geq 2
	\end{array}$ &
$\raisebox{-.5\height}{\begin{tikzpicture}[x=1cm,y=.8cm]
\node (g1) at (-2,0) [gauge,label=below:{$2$}] {};
\node (g2) at (-1,0) [gauge,label=below:{$4$}] {};
\node (g3) at (0,0) [gauge,label=below:{$3$}] {};
\node (g4) at (1,0) [gauge,label=below:{$2$}] {};
\node (g8) at (2,0) [gauge,label=below:{$1$}] {};
\draw (g1)--(g2);
\draw (g3)--(g4)--(g8);
\draw[transform canvas={yshift=-1.5pt}] (g2)--(g3);
\draw[transform canvas={yshift=1.5pt}] (g2)--(g3);
\draw (-0.6,0.2)--(-0.4,0)--(-0.6,-0.2);
\end{tikzpicture}}$\\
\hline

\end{tabular}

 \caption{Quivers of the twisted affine Dynkin diagrams. The first column lists the Coulomb branch varieties, which are all the minimal nilpotent orbits $g_k$ of the semi-simple Lie groups $G_k$, of the quivers in the second column when ungauged on a long node. These quivers will be referred to as ``twisted affine $g_k$".}
    \label{tab:twisted affine Dds}
\end{table}

\clearpage

\noindent As said above, the Hilbert series and thus global symmetry is invariant regardless of which long node the ungauging takes place on. The reason for the discrepency in the power of $U(1)$ between the framed and unframed case in steps $2a$ and $2b$ respectively is because in the former case, the ungauging physically removes a gauge node and thus changes $k$ to $k-1$, as one gauge node becomes a flavour. This doesn't happen in the latter case, and thus this must be manually accounted for in (\ref{BGS unframed after ungauging}).\\

\noindent However, as noted in the introduction, this algorithm does not always yield the full global symmetry. To see this explicitly, consider the following example.

\paragraph{Example} Consider the following quiver
\begin{equation}\label{double e6 RHS example quiver}
    \begin{tikzpicture}[x=1cm,y=.8cm]
    \node (g0) at (-3,0) {$Q=$};
    \node (g2) at (-2,0) [gauge,label=below:{$2$}] {};
    \node (g3) at (-1,0) [gauge,label=below:{$4$}] {};
    \node (g4) at (0,0) [gauge,label=below:{$6$}] {};
    \node (g5) at (1,0) [gauge,label=below:{$8$}] {};
    \node (g6) at (2,0) [gauge, label=below:{$5$}] {};
    \node (g7) at (1,1) [gauge, label=left:{$5$}] {};
    \node (g8) at (1,2) [gauger, label=left:{$2$}] {};
    \node (g9) at (3,0) [gauger, label=below:{$2$}] {};
    \draw (g2)--(g3)--(g4)--(g5)--(g6)--(g9);
    \draw (g5)--(g7)--(g8);
    \end{tikzpicture}
\end{equation}
where the red indicates that the corresponding node is unbalanced. Here, after ungauging on one of the unbalanced nodes the quiver is still unframed, and there are two unbalanced nodes and one balanced sub-$D_6$ Dynkin diagram. Thus we read off the BGS
\begin{equation}
BGS_Q=\textcolor{blue}{SO(12)} \times U(1).
\end{equation}
However if we compute the refined Hilbert series to order $t^2$ for $Q$ we find that, after the appropriate fugacity map (see Appendix \ref{app:fugmap} for more details),
\begin{equation}
    HS_Q = 1 + (1+[0,1,0,0,0,0]_{B_6})t^2 + \mathcal{O}(t^4),
\end{equation}
which means that the global symmetry is actually enhanced to
\begin{equation}
EGS_Q=\textcolor{blue}{SO(13)} \times U(1).
\end{equation}
Blue is used to highlight the factor of the global symmetry that is enhanced. Here we see explicitly the failure of the BGS to give the full global symmetry. \hfill $\square$

\subsection{Hasse Diagrams} \label{sec: HDs}
One can also glean information about the non-Abelian part of the global symmetry of a quiver via studying its Hasse diagram \cite{1982:KP,Bourget:2019aer,Grimminger:2020dmg}. The Hasse diagram characterises the structure of a moduli space by visualising its stratification as a symplectic singularity into so-called symplectic leaves and transverse slices. Physically, a symplectic leaf corresponds to a certain set of massless states (or \textit{phase}) admitted by the theory on this section of the moduli space. A transverse slice is then the set of moduli one needs to tune to move from one phase of the theory to another. The symmetry of the lowest elementary slices of the Hasse diagram form a subgroup of the non-Abelian part of the global symmetry.\\

\noindent The Hasse diagrams for the Coulomb branch and Higgs branch are equivalently constructed in different ways: the former uses a ``bottom-up" approach, while the latter uses a ``top-down" one. The equivalence of these approaches can be seen from the brane picture or the $3d$ mirror of a theory, in the cases where these exist. The bottom-up approach is to use the Higgs mechanism to see what moduli need to be adjusted to give different combinations of massive and massless gauge bosons until the gauge group is maximally broken, and then use this to build up the picture of symplectic leaves and transverse slices. However this approach only works when the theory has a classical Lagrangian description. For theories where this is not the case, for instance due to a strong coupling limit, there is an alternative approach if a magnetic quiver\footnote{A magnetic quiver for an algebraic variety is a quiver whose Coulomb branch is equal to the variety. An electric quiver, or Higgs quiver, for an algebraic variety is a quiver whose Higgs branch is equal to the variety.} exists. If this is the case, the top-down method of finding the Coulomb branch Hasse diagram for the magnetic quiver can be used. The method for doing this is derived from the brane picture of the theory. Tuning moduli to move from a leaf with a certain set of massless states to another leaf with additional massless states corresponds to performing a Kraft-Processi transition \cite{Cabrera:2016vvv,Cabrera:2017njm}, where one alligns a set of branes. When a minimal number of moduli are tuned (or equivalently a minimal number of branes are alligned) to move to a different phase of the theory, the transverse slice corresponding to these moduli is called an elementary slice. From these minimal transitions, all composite ones can be obtained. The process of alligning branes has been translated into an operation on magnetic quivers called quiver subtraction \cite{Cabrera:2018ann}. Subtracting magnetic quivers of elementary slices from the magnetic quiver of a theory constructs its Hasse diagram. Appendix \ref{app:quiver subtraction} gives the current rules for quiver subtraction and an example of using it to compute a Hasse diagram.\\

\noindent In this work, quiver subtraction and its reverse process, quiver addition, are exploited to find quivers with enhanced Coulomb branch global symmetry. Exactly how this is done is explained in the next section. Although quiver subtraction was derived from brane systems, in the absence of a brane picture the rules for subtracting quivers can often still be implemented, although a physical motivation for doing so is absent and the Hasse diagram derived cannot be verified. We can however confirm that, for all quivers studied in this paper, at least the lowest slices in their conjectured Hasse diagrams are correct, as they agree with the correct global symmetries computed by the monopole formula.\\

\noindent This concludes the recap of the relevant topics needed to proceed with the rest of the paper. We now move on to describe the method by which we construct quivers with EGS.\\

\section{Constructing Quivers with Enhanced Global Symmetry} \label{sec: addition method}

The gauging of discrete symmetries in supersymmetric quiver gauge theories has received much attention in recent years. The notion of gauging continuous symmetries is a familiar one: introducing fabricated degrees of freedom to elicit mathematical trickery in order to simplify problems. Many theories also have global discrete symmetries, such as time reversal or charge conjugation, and so exploring the potential to gauge these symmetries should be of equal interest. In the context of the moduli spaces of supersymmetric quiver gauge theories, this has been explored in  \cite{Hanany:2018vph,Hanany:2018cgo,Hanany:2018dvd,Bourget:2020bxh}, for example. The outer automorphisms of the quiver itself are obvious discrete global symmetries of the theory, and thus we could choose to gauge them. In the string theory picture, this corresponds to moving branes on top of one another, inducing extra massless strings stretching between them. A result of this is the following conjecture:

\begin{equation}\label{bouquet to adjoint conjecture}
    \mathcal{C}\left( \raisebox{-.5\height}{\begin{tikzpicture}[x=1cm,y=.8cm]
    \node (g0) at (-2,0) [gauge,label=left:{$k$}] {};
    \node (g1) at (-2.5,0.8) [gauge,label=above:{$1$}] {};
    \node (g2) at (-2,0.8) {$\cdots$};
    \node (g4) at (-1.5,0.8) [gauge,label=above:{$1$}] {};
    \node (g5) at (-2,-1) {$\vdots$};
    \node (g6) at (-2,-1.3) {};
    \draw (g5)--(g0)--(g1);
    \draw (g0)--(g4);
    \draw [decorate,decoration={brace,amplitude=6pt}] (-2.5,1.6) --node[above=6pt] {$n$} (-1.5,1.6);
    \end{tikzpicture}}
    \right)\Big/S_n \, =\, \mathcal{C}\left(
    \raisebox{-.5\height}{\begin{tikzpicture}[x=1cm,y=.8cm]
    \node (f0) at (2,0) [gauge,label=left:{$k$}] {};
    \node (f1) at (2,1.1) [gauge,label=left:{$n$}] {};
    \node (f5) at (2,-1) {$\vdots$};
    \node (f6) at (2,-1.3) {};
    \draw (f5)--(f0)--(f1);
    \draw (f1) to [out=45,in=135,looseness=10] (f1);
    \end{tikzpicture}}
    \right),
\end{equation}
where $\mathcal{C}(Q)$ is used to denote the Coulomb branch of the quiver $Q$. This conjecture has been proven on the level of the Hilbert series \cite{Hanany:2018cgo}, but this is not sufficient to prove for good that the two Coulomb branches are the same.\\

\noindent Let's be explicit about what this tells us. The quiver on the left hand side, which we'll call $A$, has an $S_n$ outer automorphism permuting the bouquet of $n$ identical $U(1)$ nodes, and so this $S_n$ is a discrete global symmetry of the theory described by $A$. Upon gauging this symmetry, we add further constraining relations to the ring of gauge invariant chiral operators on the Coulomb branch, and thus the Coulomb branch of this gauged theory will be an $S_n$ quotient of that of $A$. Moreover, the magnetic quiver for this discretely gauged theory is found by compiling the bouquet in $A$ together into one single $U(n)$ node with a hypermultiplet in the adjoint representation, as shown in the quiver on the right hand side of (\ref{bouquet to adjoint conjecture}), which we'll call $B$. The intuition for this comes from the brane picture discussed above. From here on out, we refer to such a hypermultiplet (i.e. one that is in the adjoint representation of some gauge group) as an adjoint hypermultiplet, or more vaguely as adjoint matter. They are represented in the quiver by a loop going to and from the node that they are in the adjoint representation of. The conjecture (\ref{bouquet to adjoint conjecture}) has been shown to be true in many cases, and a counterexample has not yet been found, so it is believed to be true in general. For details on how to find an $S_n$ quotient of a moduli space in terms of its Hilbert series, see Appendix \ref{app:discrete projection} in which an example for $n=2$ is illustrated.\\

\noindent In Section \ref{sec:quivers with adj matter} we look at quivers with adjoint matter (whose Coulomb branches, by the above, can be seen as discrete quotients of quivers containing only regular matter\footnoteref{footnote:reg matter}) and see how to read their BGS, before constructing the full set of balanced quivers for which each node has at most one adjoint hypermultiplet. It is these quivers from which in the following sections we will construct, via quiver addition, quivers with only regular matter that enjoy an EGS. The process of this construction is outlined in Section \ref{sec: quiver addition}.

\subsection{Quivers with Adjoint Matter} \label{sec:quivers with adj matter}

The main result of this work is as follows: there are balanced quivers with adjoint matter which we call $Q_a$, listed in Table \ref{tab:adj hyper Qs},\footnote{Note that, as explained in the bullet points at the end of Section \ref{sec:quivers with adj matter}, the quiver with global symmetry $SU(2n+1)$ in Table \ref{tab:adj hyper Qs} is not part of the set of quivers $Q_a$ which we study in this note, as it is unclear how to perform quiver addition in this case.} that induce an EGS. By this, we mean that we have found many quivers, derived by performing quiver addition on $Q_a$ to ``absorb" the adjoint hypermultiplet (we will see what this means more concretely in Section \ref{sec: quiver addition}), which experience a symmetry that is enhanced from that predicted by the BGS. Since the adjoint matter is absorbed in this process, the resulting quivers are of type $Q_r$, as discussed at the beginning of Section \ref{sec: CB GS}. The Hasse diagram of such a quiver $Q_r$ will contain the Hasse diagram of the quiver $Q_a$ it was derived from, and so the global symmetry of $Q_a$ will be a subgroup of the global symmetry of $Q_r$, as explained in Section \ref{sec: HDs}. This motivates us to study the quivers $Q_a$, and in particular their global symmetry. However, the above prescription for reading the BGS does not specify what happens when there are gauge nodes with adjoint matter in the quiver. We will call such gauge nodes $a_i$, $i=1,...,l$, where $l \in \mathbb{N}$ is the number of nodes with adjoint matter. One might naively think to just ``ignore" the adjoint hypermultiplets (other than their contribution to the balance of $a_i$), and proceed with the algorithm given in Section \ref{sec:BGS}. However, as the example below shows, this leads to us identifying incorrect global symmetries. To correct this, we propose the following extension to the BGS algorithm for balanced quivers containing an adjoint hypermultiplet.

\level{4}*{Claim: BGS refinement (the simply laced adjoint hypermultiplet)}
\label{Claim: Alg: BGS refinement}
Let $Q_a$ be a balanced\footnote{This algorithm does also work for some unbalanced quivers with hypermultiplets as specified here. However it does not work for all such quivers, and hence we restrict the validity of the algorithm to only the balanced subset.} quiver which has $l$ nodes that each have a single adjoint hypermultiplet, and no nodes with more than one. Label these nodes $a_i$ and their ranks $r_{a_i}$, for $i=1,...,l$. Furthermore, impose that the nodes $a_i$ are only connected to other gauge nodes in $Q_a$ via simply laced edges. Then the global symmetry of $Q_a$ can be determined by performing the following steps:
\begin{enumerate}
    \item Replace the simply laced connections of the $a_i$ to other gauge node(s) of $Q_a$ by a non-simply laced edge of multiplicity $r_{a_i}$ such that $a_i$ are the short nodes.
    \item Remove the adjoint hypermultiplets attached to all nodes $a_i$.
    \item Set all $r_{a_i}=1$.
    \item  Call the resulting quiver after performing steps $1-3$ $\tilde{Q}_a$. The global symmetry of $Q_a$ is then given by implementing the previous BGS algorithm listed in Section \ref{sec:BGS} on $\tilde{Q}_a$.
\end{enumerate}

\vspace{2mm}

\noindent The justification for this claim is found in examining the Hilbert series. In the monopole formula for the quivers we consider, a topological fugacity $z_i$ is assigned to each gauge group to count the topological charge of the corresponding monopole operators. Under an appropriate fugacity map (see Appendix \ref{app:fugmap} for general details on fugacity maps), the $z_i$ can be mapped to so-called simple root fugacities $y_i$, such that the coefficients of $t^{2k}$ are characters of the global symmetry algebra in terms of its simple roots. In particular, the $t^2$ coefficient in the refined Hilbert series is the character for the adjoint representation of the global symmetry. In the cases of the quivers with adjoint hypermultiplets that we study in this work the global symmetry turns out to be that of $\tilde{Q}_a$, and under the topological-to-simple-root fugacity map, the topological fugacity for $a$ corresponds to the simple root fugacity for the short node in $\tilde{Q}_a$. This motivates the above algorithm, and indeed with all the simply laced adjoint quivers in Table \ref{tab:adj hyper Qs} that we study in this work it gives the correct global symmetry. We illustrate the need for and implementation of this claim below with the following example.

\paragraph{Example} Consider the quiver
\begin{equation} \label{nminB6}
    \begin{tikzpicture}
    \node (g0) at (-2,1.1) [gauge,label=right:{$2$},label={[red]left:{$a$}}] {};
    \node (g1) at (-2,0) [gauge,label=below:{$2$}] {};
    \node (g2) at (-1,0) [gauge,label=below:{$2$}] {};
    \node (g4) at (0,0) [gauge,label=below:{$2$}] {};
    \node (g5) at (1,0) [gauge,label=below:{$2$}] {};
    \node (g6) at (2,0) [gauge,label=below:{$1$}] {};
    \node (g7) at (1,1) [gauge,label=left:{$1$}] {};
    \draw (g0)--(g1)--(g2)--(g4)--(g5)--(g6);
    \draw (g5)--(g7);
    \draw (g0) to [out=45,in=135,looseness=10] (g0);
    \end{tikzpicture}
\end{equation}
This is a magnetic quiver of the Kostant Brylinski classification \cite{1992math......4227B}, as its Coulomb branch is the $\mathbb{Z}_2$ quotient of the minimal nilpotent orbit of $D_7$ \cite{Hanany:2018dvd}, which is the next to minimal nilpotent orbit of $B_6$ \cite{Bourget:2020bxh}, $\mathcal{O}_{(3,1^{10})}$. Thus the global symmetry of (\ref{nminB6}) is $B_6$. Let's check this against the initial BGS algorithm we gave in Section \ref{sec:BGS}, if we were to treat $a$ as we would any other node. Since all the nodes in this quiver are balanced, we would choose to ungauge on one of the rank one nodes, giving 
\begin{equation}
    \begin{tikzpicture}
    \node (g0) at (-2,1.1) [gauge,label=right:{$2$},label={[red]left:{$a$}}] {};
    \node (g1) at (-2,0) [gauge,label=below:{$2$}] {};
    \node (g2) at (-1,0) [gauge,label=below:{$2$}] {};
    \node (g4) at (0,0) [gauge,label=below:{$2$}] {};
    \node (g5) at (1,0) [gauge,label=below:{$2$}] {};
    \node (g6) at (1,1) [flavor,label=left:{$1$}] {};
    \node (g7) at (2,0) [gauge,label=below:{$1$}] {};
    \node (g8) at (2.5,0) {$.$};
    \draw (g0)--(g1)--(g2)--(g4)--(g5)--(g6);
    \draw (g5)--(g7);
    \draw (g0) to [out=45,in=135,looseness=10] (g0);
    \end{tikzpicture}
\end{equation}
Then since there are no unbalanced nodes, and among the gauge nodes there is just one balanced sub-Dynkin diagram which is in the shape of $A_6$, we would conclude that the global symmetry is $SU(7)$, which is incorrect. However, under the BGS refinement above, to correctly read the global symmetry, the connection of $a$ to its adjacent node is replaced by a non-simply laced edge of multiplicity two such that $a$ is the short node, its adjoint hypermultiplet is removed, and its rank is set to one. Then we proceed with the original BGS algorithm and ungauge on a (long) rank one node as above to give
\begin{equation}\label{eg b6}
    \begin{tikzpicture}
    \node (g0) at (-2.5,0) [gauge,label=below:{$1$},label={[red]left:{$a$}}] {};
    \node (g1) at (-1.5,0) [gauge,label=below:{$2$}] {};
    \node (g2) at (-0.5,0) [gauge,label=below:{$2$}] {};
    \node (g4) at (0.5,0) [gauge,label=below:{$2$}] {};
    \node (g5) at (1.5,0) [gauge,label=below:{$2$}] {};
    \node (g6) at (1.5,1) [flavor,label=left:{$1$}] {};
    \node (g7) at (2.5,0) [gauge,label=below:{$1$}] {};
    \node at (3,0) {$,$};
    \draw (g1)--(g2)--(g4)--(g5)--(g6);
    \draw (g5)--(g7);
    \draw[transform canvas={yshift=-1.5pt}] (g0)--(g1);
    \draw[transform canvas={yshift=1.5pt}] (g0)--(g1);
    \draw (-1.9,0.2)--(-2.1,0)--(-1.9,-0.2);
    \end{tikzpicture}
\end{equation}
which is a framed quiver, whose balanced subset of gauge nodes forms the Dynkin diagram of $B_6$, which is indeed the correct global symmetry. Note that although the quiver (\ref{eg b6}) has the same global symmetry as (\ref{nminB6}), it does not have the same moduli space. This algorithm should only be used to determine the global symmetry. \hfill $\square$ \\

\noindent It is worth noting that while this refinement for the BGS algorithm is nice, it does not fulfill the requirement of an all encompassing method for determining the global symmetry, as it still fails in many cases.\footnote{When we say the BGS fails, we mean that it fails to give the \textit{full} global symmetry. This is not technically a failure of the BGS as it only claims to give a subgroup of the full global symmetry, but it is a failure of the BGS as an algorithm to always give us the full global symmetry of a quiver, hence our use of the terminology.} See for instance the example of (\ref{double e6 RHS example quiver}), which is a quiver of type $Q_r$ and therefore has no adjoint hypermultiplet, so the previous BGS failure is not fixed by the above ammendment. \\

\noindent The reason we are interested in these quivers with adjoint matter is because, as is explained in Section \ref{sec: quiver addition}, when quiver addition is performed on them to absorb the node(s) with the adjoint hypermultiplet, the resulting quiver experiences an enhanced global symmetry.\\

\noindent There are many quivers with adjoint hypermultiplets which will elicit an EGS in quivers derived from them via quiver addition, but in this work we restrict ourselves to just a subset: the family of balanced quivers whose nodes each have at most one adjoint hypermultiplet. These are the quivers we refer to as $Q_a$. They can be completely classified, and the full classification is listed in Table \ref{tab:adj hyper Qs}, along with the Coulomb branch variety, enhanced global symmetry and highest weight generating function (HWG) \cite{Hanany:2014dia}.\footnote{The HWG is a way of simplifying the notation of the refined Hilbert series by encoding a representation of the global topological symmetry by fugacities graded by the highest weight label of that representation as opposed to detailing all weights.} It is sufficient to, for now, restrict our study to these quivers, as they can be used to construct a healthy range of examples for which the BGS fails. We proceed with proving the classification of these $Q_a$.

\begin{theorem}
The set of $3d$ $\mathcal{N}=4$ balanced unframed unitary quivers with $l \in \mathbb{N} \setminus \{0\}$ nodes $a_i, \ i=1,...,l$ that each have a single adjoint hypermultiplet but which otherwise only contain regular matter can be completely classified, and the classification is given by Table \ref{tab:adj hyper Qs}.
\end{theorem}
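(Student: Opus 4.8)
My proof plan is to reduce the classification to the already-settled classification of \emph{adjoint-free} balanced unframed quivers — the affine and twisted affine Dynkin quivers of Tables \ref{tab:affine Dds} and \ref{tab:twisted affine Dds} (step $2a$ of the BGS algorithm) — by trading each adjoint node for a bouquet of rank-one nodes via the conjecture (\ref{bouquet to adjoint conjecture}), and then reading off every admissible bouquet collapse.

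First I would pin down the arithmetic that balance imposes on an adjoint node. Writing the excess as $e_i = f_i - 2r_i$ and computing, at minimal magnetic flux, the contribution of an adjoint hypermultiplet of $U(r)$ to the flavour count $f$, one finds it equals $2(r-1)$. Hence for a balanced node $a_i$ of rank $r_{a_i}$ carrying a single adjoint, the balance equation collapses to the requirement that the ranks of its gauge neighbours sum to exactly $2$. Since the $a_i$ are constrained to connect only through simply laced edges, this forces each adjoint node to attach either to one rank-two node or to two rank-one nodes — a rigid local constraint that drives the entire enumeration.

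Next I would invoke the bouquet-to-adjoint equivalence (\ref{bouquet to adjoint conjecture}) in reverse: replace every adjoint node $U(r_{a_i})$ by a bouquet of $r_{a_i}$ identical rank-one nodes glued to the same neighbour(s). Using the flavour count above I would check that this move preserves balance and the unframed property and produces a quiver $\hat{Q}$ containing only regular matter, so that $\hat{Q}$ must appear in Tables \ref{tab:affine Dds} and \ref{tab:twisted affine Dds}. The classification of $Q_a$ is then equivalent to listing, diagram by diagram, every maximal set of $n \ge 2$ mutually parallel rank-one nodes — nodes sharing exactly the same neighbours, on which an $S_n$ outer automorphism acts — and collapsing each such set (for example the fork nodes of the affine $d_k$ series, the external nodes of affine $d_4$, the symmetric rank-one nodes of the low-rank affine $a$ cycles, and their analogues in the twisted affine series). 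Running through this finite list and recording the collapses that yield at least one adjoint node, at most one adjoint per node, and regular matter elsewhere reproduces exactly the entries of Table \ref{tab:adj hyper Qs}; this is the completeness half of the statement, while validity is the direct check of each tabulated quiver against the local constraint.

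The main obstacle is twofold. Conceptually, the reduction rests on (\ref{bouquet to adjoint conjecture}), which is established only at the level of the Hilbert series, so the classification is secured only modulo this assumption; I would have to make explicit that the bouquet move is a bijection onto adjoint-free balanced regular-matter quivers, creating and destroying no solutions. Technically, the delicate cases are the self-folded and non-simply laced configurations — the twisted affine quivers, and in particular the $A_2$-type and $SU(2n+1)$ families — where a bouquet of rank-one nodes shares two neighbours or sits against a short-node structure; here one must verify carefully which collapses respect the ``single adjoint per node'' and ``regular matter'' conditions, and this is precisely where the $SU(2n+1)$ quiver must be set aside, as flagged in the footnote to Table \ref{tab:adj hyper Qs}. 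Demonstrating that no further admissible collapse has been missed in these cases is the crux of the completeness argument.
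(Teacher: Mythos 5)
Your route is genuinely different from the paper's: the paper grows the quiver outward from an adjoint node by a direct case analysis of what can balance it, whereas you reduce to the known classification of balanced unframed adjoint-free quivers (the affine and twisted affine tables) via the bouquet expansion and then enumerate collapses. The reduction is attractive, and your local arithmetic is right — the adjoint hypermultiplet of $U(r)$ contributes $2(r-1)$ to the flavour count, so the remaining matter seen by a balanced adjoint node must amount to exactly two flavours. Two remarks on the framing: the classification is a purely combinatorial statement about quivers, so you do not need conjecture (\ref{bouquet to adjoint conjecture}) at all — only the elementary check that replacing $U(r_{a_i})$ with adjoint by $r_{a_i}$ parallel rank-one nodes preserves balance in both directions, which is a two-line computation; and the enumeration must run over all subsets of size at least two of each maximal parallel family, not only the maximal sets (collapsing $2$, $3$ or all $4$ legs of affine $d_4$ gives three distinct entries of Table \ref{tab:adj hyper Qs}).

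The genuine gap is your assertion that ``the $a_i$ are constrained to connect only through simply laced edges.'' That hypothesis belongs to the BGS refinement claim, not to the theorem: ``otherwise only contain regular matter'' explicitly includes non-simply laced edges, and the paper's proof has five local cases for the neighbourhood of an adjoint node, two of which are non-simply laced — attachment to a short rank-one node by a double edge, and attachment to a long rank-two node by a double edge. These produce real entries of Table \ref{tab:adj hyper Qs}, namely the $n=2$ members of the $b_n/\mathbb{Z}_2$ and $a_{2n-1}/\mathbb{Z}_2$ families (e.g.\ the quiver (\ref{nminD2}) used in Section \ref{sec: bn-1 to dn}), and the paper itself flags that its alternative elementary-slice derivation — which is essentially your collapse procedure — misses exactly these two quivers. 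As written, your ``rigid local constraint'' therefore does not drive the entire enumeration: to repair the argument you would have to extend the bouquet move to parallel rank-one nodes attached by non-simply laced edges (the $n=2$ case of $b_n/\mathbb{Z}_2$ then arises as a collapse of affine $c_2$, for instance) and verify that every such expansion still lands in Tables \ref{tab:affine Dds} and \ref{tab:twisted affine Dds}; this is precisely the delicate completeness step your proposal defers, and without it the classification is not established.
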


\begin{proof}
First start by considering a single node $a$ of rank $r_a$ with a single adjoint hypermultiplet, which all such quivers in the theroem must contain. Note that $r_a>1$ because the adjoint representation of $U(1)$ is trivial. To be balanced, $a$ must be attached to exactly two flavours. This limits us to: $(i)$ one rank two node attached via a simply laced edge; $(ii)$ one rank two node with an adjoint hypermultiplet attached via a simply laced edge; $(iii)$ two rank one nodes, each attached via a simply laced edge; $(iv)$ one short rank one node attached by a non-simply laced edge of multiplicity two; or $(v)$ one long rank two node attached by a non-simply laced edge of multiplicity two. Note that we cannot have one rank one node connected to $a$ via two simply laced hypermultiplets because for this node to be balanced it would force $r_a=1$. In the cases $(ii)$, $(iii)$, $(iv)$ and $(v)$, for these new nodes added to be balanced, and because $r_a \neq 1$,\footnote{This constraint is relevant to mention because if $r_a=1$ were allowed, the new node could have been balanced by attaching further nodes to it.} it must be that $r_a=2$. Furthermore, this must be the end of the chain, as these final nodes added are already balanced so we cannot possibly add anything more to them. In case $(i)$ we have three possibilities, as $r_a \leq 4$ otherwise the new node will be overbalanced, and we know $r_a \neq 1$. Call this new rank two node attached to $a$ in $(i)$ node $T$. If $r_a=4$, the chain ends here as $T$ is already balanced. If $r_a=3$, then $T$ must be connected to one flavour to be balanced, and this ends the chain. If $r_a=2$, then in order for $T$ to be balanced, it must connect to two flavours also. This will again have the same options $(i)$,$(ii)$, $(iii)$, $(iv)$ and $(v)$ as above, and again if option $(i)$ is chosen the chain will continue, until at some point it must end by choosing one of the other options. Running through all these possibilities gives the full list of quivers in Table \ref{tab:adj hyper Qs}.
\end{proof}

\noindent The balanced quivers with adjoint hypermultiplets in Table \ref{tab:adj hyper Qs}, except for the $n=2$ cases of $b_n/\mathbb{Z}_2$ and $a_{2n-1}/\mathbb{Z}_2$, could also be derived in an alternative way. Readers who are confident in the material may wish to skip this paragraph and proceed to the bullet-pointed remarks on Table \ref{tab:adj hyper Qs} below, as its main use is to clarify ideas already introduced. Recall from Section \ref{sec: HDs} that the lowest elementary slices of the Hasse diagram reveal a subgroup of the global symmetry. We therefore try to construct the most basic building block quivers with adjoint matter which induce an EGS.\footnote{Note that the Coulomb branches of these building block quivers are not elementary slices themselves, rather they are derived from them.} This can be done by finding the quivers $Q_a$ for whom $\tilde{Q}_a$, as named in the refined BGS alogrithm given at the top of this section, is an elementary slice. More complicated quivers with adjoint matter that induce an EGS can then be derived from these. At the time of writing, the full list of magnetic quivers of the known elementary slices is comprised of Table $1$ in \cite{Bourget:2021siw} and Table \ref{tab:twisted affine Dds} in this paper. Thus our building block quivers $Q_a$ are found by identifying these quivers that have short rank one nodes $a_i$ each connected to a single rank two node $b_i$ via a non-simply laced edge of multiplicity $r_{a_i}$, or to two rank one nodes $b_{i_1}$ and $b_{i_2}$ both via non simply laced edges both of multiplicity $r_{a_i}$, and replacing these short nodes by rank $r_{a_i}$ nodes with an adjoint hypermultiplet, connected to $b_i$ (or $b_{i_1}$ and $b_{i_2}$) via a simply laced edge. Looking at the elementary slices, the only ones whose magnetic quivers satisfy these criteria are the affine quivers $b_n$, $c_2$ and $g_2$, and the twisted affine quivers $a_{2k}$ (for all $k\geq 1$) and $d_{k+1}$ (for $k \geq 4$).\\

\begin{table}[htbp!]
    \centering
    
  \hspace*{-2.4cm}\begin{tabular}{|c|c|c|c|} 
 
 \hline
 
 $\renewcommand{\arraystretch}{1.5}
    \begin{array}{c}
    \text{Quiver with}\\
    \text{Adjoint Matter}
    \end{array}$ 
    &
    $\renewcommand{\arraystretch}{1.5}
    \begin{array}{c}
    \text{Coulomb Branch}\\
    \text{Variety}\\
    \end{array}$ 
    &
    $\renewcommand{\arraystretch}{1.5}
    \begin{array}{c}
    \text{Global}\\
    \text{Symmetry}\\
    \end{array}$ 
    &
    PL(HWG) \\

    \hline
    
     $\raisebox{-.5\height}{\begin{tikzpicture}[x=1cm,y=.8cm]
    \node (g1) at (-1.5,1.1) [gauge,label=left:{$2$}] {};
    \node (g2) at (-1.5,0) [gauge,label=below:{$2$}] {};
    \node (g3) at (-0.5,0) {$\cdots$};
    \node (g4) at (0.5,0) [gauge,label=below:{$2$}] {};
    \node (g5) at (1.3,-0.8) [gauge,label=right:{$1$}] {};
    \node (g6) at (1.3,0.8) [gauge, label=right:{$1$}] {};
    \draw (g1)--(g2)--(g3)--(g4)--(g5);
    \draw (g4)--(g6);
    \draw (g1) to [out=45,in=135,looseness=10] (g1);
    \draw [decorate,decoration={brace,mirror,amplitude=6pt}] (-1.5,-0.8) --node[below=6pt] {$n-2$} (0.5,-0.8);
    \end{tikzpicture}}$ 
    & 
    $\renewcommand{\arraystretch}{1.5}
    \begin{array}{c}
    d_{n+1}/\mathbb{Z}_2\\
    =\\
    \text{n.min} \ B_n \\
    \end{array}$
    &
    $\renewcommand{\arraystretch}{1.5}
    \begin{array}{c}
    \textcolor{blue}{SO(2n+1)}\\
    \end{array}$
    &
    $\renewcommand{\arraystretch}{1.5}
    \begin{array}{c}
    \mu_2^2 t^2 + \mu_1^2 t^4, \ \ n=2\\
    \mu_2 t^2 + \mu_1^2 t^4, \ \ n \geq 3\\
    \end{array}$ \\
    
    \hline
    
     $\raisebox{-.5\height}{\begin{tikzpicture}[x=1cm,y=.8cm]
        \node (g0) at (-0.5,1.1) [gauge, label=left:{$3$}] {};
        \node (g1) at (-0.5,0) [gauge,label=below:{$2$}] {};
        \node (g2) at (0.5,0) [gauge, label=below:{$1$}] {};
        \draw (g0)--(g1)--(g2);
        \draw (g0) to [out=45,in=135,looseness=10] (g0);
    \end{tikzpicture}}$ 
    & 
    $\renewcommand{\arraystretch}{1.5}
    \begin{array}{c}
    d_4/S_3\\
    =\\
    \text{sub-regular} \ G_2 \\
    \end{array}$
    &
    $\renewcommand{\arraystretch}{1.5}
    \begin{array}{c}
    \textcolor{blue}{G_2}\\
    \end{array}$ 
    &
    $\renewcommand{\arraystretch}{1.5}
    \begin{array}{c}
    \mu_1 t^2 + \mu_2^2 t^4 + \mu_2^3 t^6 + \mu_1^2 t^8 +\mu_1 \mu_2^3 t^{10} - \mu_1^2 \mu_2^6 t^{20}\\
    \end{array}$\\
    
    \hline
    
     $\raisebox{-.5\height}{ \begin{tikzpicture}[x=1cm,y=.8cm]
    \node (g1) at (-1.5,1.1) [gauge,label=left:{$2$}] {};
    \node (g2) at (-1.5,0) [gauge,label=below:{$2$}] {};
    \node (g3) at (-0.5,0) {$\cdots$};
    \node (g4) at (0.5,0) [gauge,label=below:{$2$}] {};
    \node (g5) at (1.5,0) [gauge,label=below:{$1$}] {};
    \draw (g1)--(g2)--(g3)--(g4);
    \draw[transform canvas = {yshift=-1.5pt}] (g5)--(g4);
    \draw[transform canvas = {yshift=1.5pt}] (g5)--(g4);
    \draw (0.9,0.2)--(1.1,0)--(0.9,-0.2);
    \draw (g1) to [out=45,in=135,looseness=10] (g1);
    \draw [decorate,decoration={brace,mirror,amplitude=6pt}] (-1.5,-0.8) --node[below=6pt] {$n-2$} (0.5,-0.8);
    \end{tikzpicture}}$
    & 
    $\renewcommand{\arraystretch}{1.5}
    \begin{array}{c}
    b_n/\mathbb{Z}_2\\
    =\\
    \text{n.min} \ D_n \\
    \end{array}$
    &
    $\renewcommand{\arraystretch}{1.5}
    \begin{array}{c}
    \textcolor{blue}{SO(2n)}\\
    \end{array}$
    &
    $\renewcommand{\arraystretch}{1.5}
    \begin{array}{c}
    (\mu_1^2 + \mu_2^2) t^2, \ \ n=2\\
    \mu_2 \mu_3 t^2 + \mu_1^2 t^4, \ \ n=3\\
    \mu_2 t^2 + \mu_1^2 t^4, \ \ n \geq 4\\
    \end{array}$\\
    
    \hline
        
        $\raisebox{-.5\height}{\begin{tikzpicture}[x=1cm,y=.8cm]
    \node (g1) at (-1.5,1.1) [gauge,label=left:{$2$}] {};
    \node (g2) at (-1.5,0) [gauge,label=below:{$2$}] {};
    \node (g3) at (-0.5,0) {$\cdots$};
    \node (g4) at (0.5,0) [gauge,label=below:{$2$}] {};
    \node (g5) at (0.5,1.1) [gauge,label=left:{$2$}] {};
    \draw (g1)--(g2)--(g3)--(g4)--(g5);
    \draw (g1) to [out=45,in=135,looseness=10] (g1);
    \draw (g5) to [out=45,in=135,looseness=10] (g5);
    \draw [decorate,decoration={brace,mirror,amplitude=6pt}] (-1.5,-0.8) --node[below=6pt] {$n-2$} (0.5,-0.8);
    \end{tikzpicture}}$ 
    & 
    $d_{n+1}/(\mathbb{Z}_2 \times \mathbb{Z}_2)$
    &
    $\renewcommand{\arraystretch}{1.5}
    \begin{array}{c}
    \textcolor{blue}{SO(2n)}\\
    \end{array}$ 
    &
    $\renewcommand{\arraystretch}{1.5}
    \begin{array}{c}
    (\mu_1^2 +\mu_2^2)t^2 + (1+\mu_1^2 \mu_2^2)t^4 +\mu_1^2 \mu_2^2 t^6 - \mu_1^4 \mu_2^4 t^{12}, \ n=2\\
    \mu_2 \mu_3 t^2 + (1+2\mu_1^2)t^4+\mu_1^2t^6-\mu_1^4t^{12}, \ n=3\\
    \mu_2 t^2 + (1+2\mu_1^2)t^4 +\mu_1^2t^6-\mu_1^4 t^{12}, \ n\geq 4
    \end{array}$ \\
    \hline
    
    $\raisebox{-.5\height}{\begin{tikzpicture}[x=1cm,y=.8cm]
        \node (g1) at (0,1.1) [gauge,label=left:{$4$}] {};
        \node (g2) at (0,0) [gauge,label=below:{$2$}] {};
        \draw (g1)--(g2);
        \draw (g1) to [out=45,in=135,looseness=10] (g1);
        \end{tikzpicture}}$ 
    &
    $d_4/S_4$
    &
    $\renewcommand{\arraystretch}{1.5}
    \begin{array}{c}
    \textcolor{blue}{SU(3)}\\
    \end{array}$
    &
     $\renewcommand{\arraystretch}{1.5}
    \begin{array}{c}
    \mu_1 \mu_2t^2 + (1+\mu_1^2 + \mu_1 \mu_2 + \mu_2^2)t^4 +\\
    (1+\mu_1^3+\mu_2^2+\mu_1\mu_2^2 + \mu_2^3 + \mu_1^2 + \mu_1^2 \mu_2)t^6 + \mathcal{O}(t^{8})\\
    \end{array}$\\
    
    \hline
    
     $\raisebox{-.5\height}{ \begin{tikzpicture}[x=1cm,y=.8cm]
    \node (g1) at (-1.5,1.1) [gauge,label=left:{$2$}] {};
    \node (g2) at (-1.5,0) [gauge,label=below:{$2$}] {};
    \node (g3) at (-0.5,0) {$\cdots$};
    \node (g4) at (0.5,0) [gauge,label=below:{$2$}] {};
    \node (g5) at (1.5,0) [gauge,label=below:{$2$}] {};
    \draw (g1)--(g2)--(g3)--(g4);
    \draw[transform canvas = {yshift=-1.5pt}] (g5)--(g4);
    \draw[transform canvas = {yshift=1.5pt}] (g5)--(g4);
    \draw (1.1,0.2)--(0.9,0)--(1.1,-0.2);
    \draw (g1) to [out=45,in=135,looseness=10] (g1);
    \draw [decorate,decoration={brace,mirror,amplitude=6pt}] (-1.5,-0.8) --node[below=6pt] {$n-2$} (0.5,-0.8);
    \end{tikzpicture}}$
    &
    $a_{2n-1}/\mathbb{Z}_2$
    &
    $\renewcommand{\arraystretch}{1.5}
    \begin{array}{c}
    \textcolor{blue}{SU(2n-1)}\\
    \textcolor{blue}{\times} \\
    \textcolor{blue}{U(1)}\\
    \end{array}$ 
    &
    $\renewcommand{\arraystretch}{1.5}
    \begin{array}{c}
    (1+\mu_1 \mu_{2n-2})t^2 + (q^{4n} \mu_1^2 +q^{-4n} \mu_{2n-2}^2)t^4 - \mu_1^2 \mu_{2n-2}^2 t^8\\
    \end{array}$ \\
    \hline

\end{tabular}

 \caption{The full list of balanced unframed unitary $3d$ $\mathcal{N}=4$ quivers with $l \in \mathbb{N}\ \{0\}$ nodes that each have a single adjoint hypermultiplet but which otherwise contain only regular matter. Upon performing quiver addition to remove the adjoint hypermultiplet(s), the resulting quiver will elicit an enhanced global symmetry. For all quivers that depend on the parameter $n$ the quiver is valid for $n \geq 2$. N.min is shorthand for the (closure of the) next to minimal nilpotent orbit of the corresponding algebra. The $\mu_i, \, q$ in the PL(HWG) column are the Dynkin label fugacities of the corresponding Lie group and Abelian factors of the global symmetry group respectively.}
    \label{tab:adj hyper Qs}
\end{table}

\clearpage

\noindent There are several things worth noting regarding the quivers $Q_a$ in Table \ref{tab:adj hyper Qs}:
\begin{itemize}
    \item The quivers are ordered in such a way to best aid an illustration of the results in this paper. In particular the first quiver yields the best illustration of the most basic quiver addition needed, and the final quiver will not appear at all as it is unclear if it is possible to use quiver addition to absorb the adjoint hypermultiplet on this node.
    
    \item We have been unable to find the fully refined Hilbert series, and therefore HWG, in terms of the global symmetry fugacities for the $b_n/\mathbb{Z}_2$, $d_{n+1}/(\mathbb{Z}_2 \times \mathbb{Z}_2)$, $d_4/S_4$ and $a_{2n-1}/\mathbb{Z}_2$ quivers in Table \ref{tab:adj hyper Qs}. The reason is that in these cases, there are not sufficient topological fugacities to apply the normal tricks we use to find fugacity maps (such maps are discussed in more detail in Appendix \ref{app:fugmap}). However the \textit{partially refined} Hilbert series can be found, which is the Hilbert series in terms of the characters of some subgroup of the global symmetry. These partially refined Hilbert series have been computed for all the quivers under discussion in this bullet point, but they are not listed because we were able to find the fully refined HWG for the quiver theories in question via other methods. We discuss these in turn, labelling the quiver we're addressing in each case by its Coulomb branch variety:
        \begin{itemize}
        
            \item $b_n/\mathbb{Z}_2$: Here the fully refined HWG can be obtained by noticing that the variety of $b_n/\mathbb{Z}_2$ is the next to minimal nilpotent orbit of $D_n$ \cite{1992math......4227B}, and so the Coulomb branch HWG can be computed by applying the monopole formula to the magnetic quiver for this variety, which is given in \cite{Hanany:2016gbz}.
        
            \item $d_{n+1}/(\mathbb{Z}_2 \times \mathbb{Z}_2)$: Here we have found the HWG in two different ways. Firstly, we use conjecture (\ref{bouquet to adjoint conjecture}) to realise the Coulomb branch of this quiver as the $\mathbb{Z}_2$ projection of the next to minimal nilpotent orbit of $B_n$. The fully refined HWG for n.min $B_n$ can be computed exactly by applying the monopole formula to the top quiver in Table \ref{tab:adj hyper Qs}. The $\mathbb{Z}_2$ projection of this can then be computed: see Appendix \ref{app:discrete projection} for an example of how to perform such a projection on a HWG. We then $(a)$ unrefined and $(b)$ partially refined the HWG that resulted from this projection and confirmed that it agreed exactly with the unrefined and partially refined Hilbert series for the magnetic quiver for $d_{n+1}/(\mathbb{Z}_2 \times \mathbb{Z}_2)$ that is given in Table \ref{tab:adj hyper Qs}. The second way to check this HWG was to compute the Higgs branch of the electric quiver for $d_{n+1}/(\mathbb{Z}_2 \times \mathbb{Z}_2)$, given in Figure 17 of \cite{Hanany:2018dvd}. This has been done for several $n$, and the result matches that obtained by the first method, providing as much evidence for the validity of this result as is possible using the monopole formula.
        
            \item $d_4/S_4$: Again using conjecture (\ref{bouquet to adjoint conjecture}), the HWG for this variety can be found by applying an $S_4$ projection to the minimal nilpotent orbit of $D_4$. This has been done by Siyul Lee and does indeed have a closed form, but it is rather lengthy and not particularly illuminating to give here, so we list it to order $6$. Again, the result of (\ref{bouquet to adjoint conjecture}) in this case has been checked by unrefining and partially refining the HWG computed for $d_4 / S_4$ and comparing with the unrefined and partially refined Hilbert series for the magnetic quiver for $d_4/S_4$ that is given in Table \ref{tab:adj hyper Qs}.
        
            \item $a_{2n-1}/\mathbb{Z}_2$: Here (\ref{bouquet to adjoint conjecture}) tells us that this quiver is the $\mathbb{Z}_2$ quotient of the twisted affine Dynkin diagram for $A_{2n-1}$ (see Table \ref{tab:twisted affine Dds}). This cannot be fully refined however, and so we exploit the fact that the twisted affine and affine quiver for the same algebra both give the minimal nilpotent orbit of this algebra, and so we can instead use the HWG computed from the affine quiver (which does fully refine) to perform the $\mathbb{Z}_2$ projection on. When unrefined, this result matches the unrefined Hilbert series for the magnetic quiver for $a_{2n-1}/\mathbb{Z}_2$ listed in Table \ref{tab:adj hyper Qs}.
            
        \end{itemize}

     \item This is not an exhaustive list of quivers with adjoint matter that induce an EGS, but it is an exhaustive list of all \textit{balanced} quivers with adjoint matter that induce EGS, and as such is rich enough to provide us with many examples for which the BGS algorithm of Section \ref{sec:BGS} fails.
\end{itemize}
We now move on to discussing how to use quiver addition on the quivers in Table \ref{tab:adj hyper Qs} to construct quivers with only regular matter $Q_r$ (remember that among other things this means with\textit{out} adjoint matter) whose global symmetry is enhanced from that predicted by the BGS.

\subsection{Quiver Addition} \label{sec: quiver addition}

Recall that our goal is to find all possible quivers whose Hasse diagram\footnote{In most cases that we will see in this paper, the Hasse diagram we find will only be a conjecture. This is because, as mentioned in the introduction, since we are unaware of a brane system, we have no way to physically motivate this process of quiver subtraction or addition.} has as a lowest leaf the Coulomb branch of one of the quivers in Table \ref{tab:adj hyper Qs}, because this leads to an enhancement of the BGS predicted by the algorithm outlined in Section \ref{sec:BGS}. In order to do this, quiver subtraction must be reverse engineered to ``absorb" the nodes with adjoint hypermultiplets appearing in Table \ref{sec:BGS}. One notion of quiver addition is discussed in \cite{Bourget:2021siw}, but here we will develop a new algorithm for the case when we add to absorb nodes with adjoint hypermultiplets from a quiver. This will be done based on the quiver subtraction rule conjectured in Appendix C of \cite{Bourget:2020mez}. For a review on this rule, and all other necessary information on quiver subtraction for the present work, see Appendix \ref{app:quiver subtraction}. To illustrate how the process of quiver addition was developed by reverse engineering quiver subtraction, we use the example of the quiver $Q$ of (\ref{double e6 RHS example quiver}), which we recall was used to highlight the inadequacy of the BGS algorithm in Section \ref{sec:BGS}. Performing double $e_6$ quiver subtraction\footnote{Generally, when we say a $g_k$ quiver subtraction, we mean the act of subtracting the magnetic quiver for the $g_k$ variety (the minimal nilpotent orbit of $G_k$). In this particular example, we are taking $G_k=E_6$.} on $Q$ gives

\clearpage

\begin{equation} \label{double e6 subtraction eg}
    \raisebox{-.5\height}{\begin{tikzpicture}[x=1cm,y=.8cm]
    \node (g2) at (-2,0) [gauge,label=below:{$2$},label={[red]left:{$b$}}] {};
    \node (g3) at (-1,0) [gauge,label=below:{$4$}] {};
    \node (g4) at (0,0) [gauge,label=below:{$6$}] {};
    \node (g5) at (1,0) [gauge,label=below:{$8$}] {};
    \node (g6) at (2,0) [gauge, label=below:{$5$}] {};
    \node (g7) at (1,1) [gauge, label=left:{$5$}] {};
    \node (g8) at (1,2) [gauger, label=left:{$2$}] {};
    \node (g9) at (3,0) [gauger, label=below:{$2$}] {};
    \draw (g2)--(g3)--(g4)--(g5)--(g6)--(g9);
    \draw (g5)--(g7)--(g8);
\begin{scope}[shift={(0,-3.5)}]
\node at (-4,0) {$-$};
\node (g2) at (-1,0) [gauge,label=below:{$1$}] {};
\node (g3) at (0,0) [gauge,label=below:{$2$}] {};
\node (g4) at (1,0) [gauge,label=below:{$3$}] {};
\node (g5) at (2,0) [gauge,label=below:{$2$}] {};
\node (g6) at (3,0) [gauge,label=below:{$1$}] {};
\node (g7) at (1,1) [gauge,label=left:{$2$}] {};
\node (g8) at (1,2) [gauge,label=left:{$1$}] {};
\draw (g2)--(g3)--(g4)--(g5)--(g6);
\draw (g4)--(g7)--(g8);
\end{scope}
\begin{scope}[shift={(0,-7)}]
    \node (g1) at (-2,1) [gaugeb,label=left:{$1$}] {};
    \node (g2) at (-2,0) [gauge,label=below:{$2$},label={[red]left:{$b$}}] {};
    \node (g3) at (-1,0) [gauge,label=below:{$3$}] {};
    \node (g4) at (0,0) [gauge,label=below:{$4$}] {};
    \node (g5) at (1,0) [gauge,label=below:{$5$}] {};
    \node (g6) at (2,0) [gauge, label=below:{$3$}] {};
    \node (g7) at (1,1) [gauge, label=left:{$3$}] {};
    \node (g8) at (1,2) [gauger, label=left:{$1$}] {};
    \node (g9) at (3,0) [gauger, label=below:{$1$}] {};
    \draw (g1)--(g2)--(g3)--(g4)--(g5)--(g6)--(g9);
    \draw (g5)--(g7)--(g8);
\end{scope}
\begin{scope}[shift={(0,-10.5)}]
    \node at (-4,0) {$-$};
    \node (g2) at (-1,0) [gauge,label=below:{$1$}] {};
    \node (g3) at (0,0) [gauge,label=below:{$2$}] {};
    \node (g4) at (1,0) [gauge,label=below:{$3$}] {};
    \node (g5) at (2,0) [gauge,label=below:{$2$}] {};
    \node (g6) at (3,0) [gauge,label=below:{$1$}] {};
    \node (g7) at (1,1) [gauge,label=left:{$2$}] {};
    \node (g8) at (1,2) [gauge,label=left:{$1$}] {};
    \draw (g2)--(g3)--(g4)--(g5)--(g6);
    \draw (g4)--(g7)--(g8);
\end{scope}
\begin{scope}[shift={(0,-14)}]
    \node (g0) at (-2,1.1) [gaugeb,label=right:{$2$},label={[red]left:{$a$}}] {};
    \node (g1) at (-2,0) [gauge,label=below:{$2$},label={[red]left:{$b$}}] {};
    \node (g2) at (-1,0) [gauge,label=below:{$2$}] {};
    \node (g4) at (0,0) [gauge,label=below:{$2$}] {};
    \node (g5) at (1,0) [gauge,label=below:{$2$}] {};
    \node (g6) at (2,0) [gauge,label=below:{$1$}] {};
    \node (g7) at (1,1) [gauge,label=left:{$1$}] {};
    \draw (g0)--(g1)--(g2)--(g4)--(g5)--(g6);
    \draw (g5)--(g7);
    \draw (g0) to [out=45,in=135,looseness=10] (g0);
\end{scope}
    \draw[->] (4,0) to [out=300,in=60,looseness=1] (4,-6.8);
    \draw[->] (4,-7.2) to [out=300,in=60,looseness=1] (4,-13.8);
\end{tikzpicture}}
\end{equation}
where blue nodes indicate those that were introduced in the rebalancing stage of quiver subtraction; red nodes indicate unbalanced nodes; $b$ is the node being rebalanced as a result of the subtractions (the reason for this labelling is to make contact with the notation we introduce shortly in (\ref{h,b,a notation quiver})); and as before $a$ is the name of the node with the adjoint hypermultiplet (that has arisen through multiple same slice subtractions). From Table \ref{tab:adj hyper Qs}, we see that this quiver has the next to minimal nilpotent orbit of $B_6$ as a leaf in its Hasse diagram. The Hasse diagram of this orbit is
\begin{equation} \label{HD nminB6}
    \begin{tikzpicture}
    \node (1) [hasse] at (0,0) {};
    \node (2) [hasse] at (0,-1) {};
    \node (3) [hasse] at (0,-2) {};
    \node (4) at (0.5,-1) {$,$};
    \draw (1) edge [] node[label=left:$A_1$] {} (2);
    \draw (2) edge [] node[label=left:$b_6$] {} (3);
    \end{tikzpicture}
\end{equation}
which can be derived using quiver subtraction on the magnetic quiver for the next to minimal nilpotent orbit of $B_6$, which can be found in \cite{Hanany:2016gbz}. As explained in Section \ref{sec:BGS}, the BGS of $Q$ is $SO(12) \times U(1)$, but the Hilbert series shows the true global symmetry, the EGS, to be $SO(13) \times U(1)$. The quiver subtraction above illustrates why this enhancement occurs: the Hasse diagram of $Q$ contains $b_6$ among its bottom-most elementary slices, and so this must be a subset of the global symmetry (as described in Section \ref{sec: HDs}). The lowest slice is $b_6$ and not $d_6$ (which is what was expected from the BGS) because of the appearance of the adjoint hypermultiplet in quiver subtraction as opposed to an additional rebalancing $U(1)$ node. \\

\noindent Note that although in this case these $e_6$ subtractions are the only possible ones that could be performed, in other quivers that we find in this work multiple subtractions will be possible, leading to bifurcations in the Hasse diagram. However, it is only subtracting the same slice twice in a row that leads to the arisal of a node with an adjoint hypermultiplet, and thus the enhancement of a factor of the global symmetry, and so it is only the ``same slice twice" subtractions and additions we focus on, as the other factors can be read accurately from the BGS algorithm.\\

\noindent From this example, we can see that the quiver subtraction algorithm which brings about these adjoint hypermultiplets can be reverse engineered to find quivers with only regular matter which have an EGS. In order to explain how this works, we will refer to the node with the adjoint hypermultiplet as $a$ of rank $r_a$, the node adjacent to it $b$ of rank $r_b$, and any generic node adjacent to \textit{this} $c$ of rank $r_c$:
\begin{equation} \label{h,b,a notation quiver}
    \begin{tikzpicture}[x=1cm,y=.8cm]
    \node (g-1) at (0,-1) {$\textcolor{red}{b}$};
    \node (g-2) at (1,-1) {$\textcolor{red}{c}$};
    \node (g1) at (-1,0) {$\cdots$};
    \node (g2) at (0,0) [gauge,label=below:{$r_b$}] {};
    \node (g3) at (0,1.1) [gauge,label=left:{$r_a$},label={[red]right:{$a$}}] {};
    \node (g4) at (1,0) [gauge,label=below:{$r_c$}] {};
    \node (g5) at (2,0) {$\cdots$};
    \draw (g1)--(g2)--(g4)--(g5);
    \draw (g2)--(g3);
    \draw (g3) to [out=45,in=135,looseness=10] (g3);
    \end{tikzpicture}.
\end{equation}
These will be the names of the nodes \textit{after} quiver additions also: note in particular that this means that after the penultimate quiver addition, $a$ will not have an adjoint hypermultiplet attached, and after the final addition $a$ will no longer exist, as explained below.\\

\noindent In the ``forward" process of quiver subtraction, an example of which was shown in (\ref{double e6 subtraction eg}), the adjoint hypermultiplet arose as a result of rebalancing node $b$ after multiple subsequent same slice subtractions. Thus to ``absorb" the adjoint hypermultiplet in the reverse process, we will have to add the same elementary slice mutliple subsequent times to node $c$ such that node $b$ becomes overbalanced, and thus the rebalancing process will involve removing a rank from $a$. With all this in mind, we are ready to construct the algorithm for quiver addition.

\level{4}*{Quiver addition algorithm}
\label{Alg: Quiver addition alg}
Let $Q_a$ be a quiver with a single node $a$ that has a single adjoint hypermultiplet, such that $a$ is connected to the remainder of the quiver via a single simply laced edge to node $b$, and with all other nodes being linked by regular matter\footnoteref{footnote:reg matter} only. Let $Q_\sigma$ be the magnetic quiver for a generic elementary slice $\sigma$.\footnote{Recall that the full list of these to date can be found by comprising Table 1 of \cite{Bourget:2021siw} and Table $2$ of this note.} Then adding some $Q_\sigma$ to $Q_a$, \textit{that is going to be added multiple subsequent times} can be performed as follows:
\begin{enumerate}
    \item First ensure that $Q_\sigma$ is balanced and contains as a subset the run of nodes connecting to and including $c$ but excluding $a$ and $b$ (i.e. $c$ and the ``$\cdots$" next to it in (\ref{h,b,a notation quiver})) of $Q_a$.
    \item Line up $Q_\sigma$ with $Q_a$ so that one of the long rank $1$ nodes of $Q_\sigma$ is superimposed upon $c$, and the subset of the rest of the nodes of $Q_\sigma$ which is of the same form as the nodes connecting to and including $c$ of $Q_a$ (as described in Step $1$) is aligned with these nodes. 
    \item Add the ranks of the nodes in $Q_\sigma$ to those in $Q_a$ that they line up with.
    \item Reduce the rank of $a$ by one. Note that if this reduces the rank of $a$ to one, the adjoint hypermultiplet can be eliminated as the adjoint representation of $U(1)$ is trivial, and if it reduces the rank of $a$ to zero, $a$ itself can just be eliminated as it is now an ``empty node".
\end{enumerate}

\vspace{3mm}

\noindent There are a few things to note here with regards to this algorithm:
\begin{itemize} 

\item Firstly, the condition on $Q_\sigma$ of balance and needing a long rank one node restricts the possible slices we can add to be just the affine quivers displayed in Table \ref{tab:affine Dds}.

\item Secondly, $c$ can be an existing node, as pictured in (\ref{h,b,a notation quiver}), or it could be an ``empty node". By taking $c$ as an empty node, we mean that the quiver being added is superimposed on top of currently non-existent nodes, but such that it is linked to node $b$. If we take $c$ to be an existing node, we call such an addition \textit{adding to existing nodes}. If we take $c$ to be an empty node, we call such an addition \textit{adding to empty nodes}. We call all possible options for the node $c$ the \textit{c-nodes} of the quiver. 

\item Thirdly, during such a process of quiver addition, extra unbalanced nodes that were not in the original quiver may appear, but this is not a problem as long as throughout both additions no nodes undergo a \textit{change} in balance.

\item Fourthly, this algorithm is modified for the $d_3/\mathbb{Z}_2$, $b_2/\mathbb{Z}_2$ and $d_{n+1}/(\mathbb{Z}_2 \times \mathbb{Z}_2)$ quivers in Table \ref{tab:adj hyper Qs}. In the $d_3/\mathbb{Z}_2$ case this is because we have two possible ``$b$-nodes" which each need a $c$-node, and so, among other modifications, any added slice must have \textit{two} rank one long nodes. In the $b_n/\mathbb{Z}_2$ case it is because $a$ is attached to $b$ via a non-simply laced edge of multiplicity two. In the $d_{n+1}/(\mathbb{Z}_2 \times \mathbb{Z}_2)$ case it is due to the fact that there are two nodes with adjoint hypermultiplets, and so we need to be careful where we can add. These modifications will be explained within the relevant sections of the paper: Section \ref{sec:dn to bn}, Section \ref{sec: bn-1 to dn} and Section \ref{sec: sun u1 to so2n} respectively.

\item Finally, this is only the quiver addition process for the specific purpose outlined in this paper, and not a general algorithm. When not adding the same slice twice to the same node, there are often many possible ways to add a slice to a quiver. In Section \ref{A type to B type section} ``single additions" (just adding one slice once) will be performed, so an example of the general algorithm for this type of addition will be illustrated here.

\end{itemize}

\vspace{2mm}

\noindent For complete clarity regarding the execution of this algorithm, we provide here an example of an \textit{incorrect} addition (i.e. adding a slice $\sigma$ to a quiver $Q_1$ to give a quiver $Q_2$ from which performing the reverse processs of subtraction does not give back $Q_1$: $Q_2 - \sigma \neq Q_1$), followed by the correct way to execute it.

\paragraph{Example} Consider trying to add $d_4$ to the \textit{existing node} in n.min $B_5$.\footnote{To clarify the terminology one last time, here what we are actually saying is ``consider trying to add the quiver whose Coulomb branch is the closure of the minimal nilpotent orbit of $D_4$, which we call $d_4$, to the existing node of the quiver whose Coulomb branch is the closure of the next to minimal nilpotent orbit of $B_5$, which we call n.min $B_5$.} The magnetic quiver for n.min $B_5$ is
\begin{equation} \label{n.min B5}
    \begin{tikzpicture}[x=1cm,y=.8cm]
    \node (g-1) at (-1.5,-1) {$\textcolor{red}{b}$};
    \node (g-2) at (-0.5,-1.05) {$\textcolor{red}{c}$};
    \node (g2) at (-1.5,0) [gauge,label=below:{$2$}] {};
    \node (g3) at (-1.5,1.1) [gaugeb,label=left:{$2$},label={[red]right:{$a$}}] {};
    \node (g4) at (-0.5,0) [gauge,label=below:{$2$}] {};
    \node (g5) at (0.5,0) [gauge,label=below:{$2$}] {};
    \node (g6) at (1.5,0) [gauge,label=below:{$1$}] {};
    \node (g7) at (0.5,1) [gauge,label=left:{$1$}] {};
    \node (g9) at (2,0) {$,$};
    \draw (g3)--(g2)--(g4)--(g5)--(g6);
    \draw (g5)--(g7);
    \draw (g3) to [out=45,in=135,looseness=10] (g3);
    \end{tikzpicture}
\end{equation}
where $c$ has been labelled as such as it is the only possible existing node we could add to.\footnote{Note that we could have taken $c$ to be the empty nodes: the space to the left of $b$ in (\ref{n.min B5}), but for this case we wouldn't have been able to demonstrate this type of incorrect addition.} This addition fits in with Step $1$ in the algorithm, as $d_4$ contains as a subset the nodes to the right of and including $c$ in (\ref{n.min B5}). So adding this slice \textit{could} be valid, but we will perform the addition in a way that violates step $2$:
\clearpage

\begin{equation}
\raisebox{-.5\height}{\begin{tikzpicture}[x=1cm,y=.8cm]
    \node (g1) at (-1.5,1.1) [gauge,label=right:{$2$},label={[red]left:{$a$}}] {};
    \node (g2) at (-1.5,0) [gauge,label=below:{$2$}] {};
    \node (g2a) at (-1.5,-1) {$\textcolor{red}{b}$};
    \node (g3) at (-0.5,0) [gauge,label=below:{$2$}] {};
    \node (g3a) at (-0.5,-1.05) {$\textcolor{red}{c}$};
    \node (g4) at (0.5,0) [gauge,label=below:{$2$}] {};
    \node (g5) at (1.3,0.8) [gauge,label=right:{$1$}] {};
    \node (g6) at (1.3,-0.8) [gauge, label=right:{$1$}] {};
    \draw (g1)--(g2)--(g3)--(g4)--(g5);
    \draw (g4)--(g6);
    \draw (g1) to [out=45,in=135,looseness=10] (g1);

\begin{scope}[shift={(0,-4)}]
    \node at (-3,1) {$``+"$};
    \node (g2) at (-0.5,0) [gauge,label=below:{$1$}] {};
    \node (g3) at (-0.5,1) [gauge,label=left:{$2$}] {};
    \node (g4) at (-0.5,2) [gauge,label=left:{$1$}] {};
    \node (g5) at (0.3,1.5) [gauge,label=right:{$1$}] {};
    \node (g6) at (0.3,0.5) [gauge,label=right:{$1$}] {};
    \draw (g2)--(g3)--(g4);
    \draw (g5)--(g3);
    \draw (g6)--(g3);
\end{scope}

\begin{scope}[shift={(0,-8)}]
    \node (g1) at (-1.5,1.1) [gaugeb,label=above:{$1$},label={[red]left:{$a$}}] {};
    \node (g2) at (-1.5,0) [gauge,label=below:{$2$}] {};
    \node (g2a) at (-1.5,-1) {$\textcolor{red}{b}$};
    \node (g3) at (-0.5,0) [gauge,label=below:{$3$}] {};
    \node (g3a) at (-0.5,-1.05) {$\textcolor{red}{c}$};
    \node (g4) at (0.5,0) [gauger,label=below:{$2$}] {};
    \node (g5) at (1.3,0.8) [gauge,label=right:{$1$}] {};
    \node (g6) at (1.3,-0.8) [gauge, label=right:{$1$}] {};
    \node (g8) at (-0.5,1) [gauger,label=left:{$2$}] {};
    \node (g9) at (-0.5,2) [gauge,label=left:{$1$}] {};
    \node (g10) at (0.3,1.5) [gauge,label=right:{$1$}] {};
    \node (g11) at (0.3,0.5) [gauge,label=right:{$1$}] {};
    \draw (g3)--(g8)--(g9);
    \draw (g10)--(g8);
    \draw (g11)--(g8);
    \draw (g1)--(g2)--(g3)--(g4)--(g5);
    \draw (g4)--(g6);
\end{scope}

\begin{scope}[shift={(0,-12)}]
   \node at (-3,1) {$``+"$};
    \node (g2) at (-0.5,0) [gauge,label=below:{$1$}] {};
    \node (g3) at (-0.5,1) [gauge,label=left:{$2$}] {};
    \node (g4) at (-0.5,2) [gauge,label=left:{$1$}] {};
    \node (g5) at (0.3,1.5) [gauge,label=right:{$1$}] {};
    \node (g6) at (0.3,0.5) [gauge,label=right:{$1$}] {};
    \draw (g2)--(g3)--(g4);
    \draw (g5)--(g3);
    \draw (g6)--(g3);
\end{scope}

\begin{scope}[shift={(0,-16)}]
    \node (g2) at (-1.5,0) [gauge,label=below:{$2$}] {};
    \node (g2a) at (-1.5,-1) {$\textcolor{red}{b}$};
    \node (g3) at (-0.5,0) [gauge,label=below:{$4$}] {};
    \node (g3a) at (-0.5,-1.05) {$\textcolor{red}{c}$};
    \node (g4) at (0.5,0) [gauger,label=below:{$2$}] {};
    \node (g5) at (1.3,0.8) [gauge,label=right:{$1$}] {};
    \node (g6) at (1.3,-0.8) [gauge, label=right:{$1$}] {};
    \node (g8) at (-0.5,1) [gauger,label=left:{$4$}] {};
    \node (g9) at (-0.5,2) [gauge,label=left:{$2$}] {};
    \node (g10) at (0.3,1.5) [gauge,label=right:{$2$}] {};
    \node (g11) at (0.3,0.5) [gauge,label=right:{$2$}] {};
    \draw (g3)--(g8)--(g9);
    \draw (g10)--(g8);
    \draw (g11)--(g8);
    \draw (g2)--(g3)--(g4)--(g5);
    \draw (g4)--(g6);
\end{scope}

    \draw[->] (2.5,0) to [out=300,in=60,looseness=1] (2.5,-7.8);
    \draw[->] (2.5,-8.2) to [out=300,in=60,looseness=1] (2.5,-15.8);
\end{tikzpicture}}
\end{equation}
Here the $d_4$ slice that was added was not superimposed correctly on top of the nodes next to and including $c$ in n.min $B_5$ that it ``matched to" (i.e. that formed a subset of the $d_4$'s nodes). We can immediately see why this is an incorrect addition, because the final rank two node in n.min $B_5$ has changed balance during this process, and we know quiver subtraction always preserves the balance of nodes. Thus if we subtract $d_4$ from our result here, we see that both $a$ and this node undergo a change in balance, and so the rank one node added to rebalance must be attached to both. When the subtraction is performed again, this node will then become a rank two node with an adjoint hypermultiplet, and we will not have the quiver n.min $B_5$, but instead
\begin{equation}
    \begin{tikzpicture}[x=1cm,y=.8cm]
    \node (g1) at (-0.5,1) [gaugeb,label=left:{$2$}] {};
    \node (g2) at (-1.5,0) [gauge,label=below:{$2$}] {};
    \node (g3) at (-0.5,0) [gauge,label=below:{$2$}] {};
    \node (g4) at (0.5,0) [gauger,label=below:{$2$}] {};
    \node (g5) at (1.3,-0.8) [gauge,label=right:{$1$}] {};
    \node (g6) at (1.3,0.8) [gauge, label=right:{$1$}] {};
    \draw (g1)--(g2)--(g3)--(g4)--(g1);
    \draw (g4)--(g5);
    \draw (g4)--(g6);
    \draw (g1) to [out=45,in=135,looseness=10] (g1);
    \end{tikzpicture}
\end{equation}
and so the subtraction is invalid. The \textit{correct} way to add $d_4$ on to n.min $B_5$ would be as follows:

\begin{equation} \label{correct d4 addition to nminB5}
\raisebox{-.5\height}{\begin{tikzpicture}[x=1cm,y=.8cm]
    \node (g1) at (-1.5,1.1) [gauge,label=right:{$2$},label={[red]left:{$a$}}] {};
    \node (g2) at (-1.5,0) [gauge,label=below:{$2$}] {};
    \node (g2a) at (-1.5,-1) {$\textcolor{red}{b}$};
    \node (g3) at (-0.5,0) [gauge,label=below:{$2$}] {};
    \node (g3a) at (-0.5,-1) {$\textcolor{red}{c}$};
    \node (g4) at (0.5,0) [gauge,label=below:{$2$}] {};
    \node (g5) at (1.3,0.8) [gauge,label=right:{$1$}] {};
    \node (g6) at (1.3,-0.8) [gauge, label=right:{$1$}] {};
    \draw (g1)--(g2)--(g3)--(g4)--(g5);
    \draw (g4)--(g6);
    \draw (g1) to [out=45,in=135,looseness=10] (g1);

\begin{scope}[shift={(0,-3)}]
    \node at (-3,0) {$+$};
    \node (g2) at (-0.5,0) [gauge,label=below:{$1$}] {};
    \node (g3) at (0.5,0) [gauge,label=below:{$2$}] {};
    \node (g4) at (0.5,1) [gauge,label=left:{$1$}] {};
    \node (g5) at (1.3,0.8) [gauge,label=right:{$1$}] {};
    \node (g6) at (1.3,-0.8) [gauge,label=right:{$1$}] {};
    \draw (g2)--(g3)--(g4);
    \draw (g5)--(g3);
    \draw (g6)--(g3);
\end{scope}

\begin{scope}[shift={(0,-6)}]
    \node (g1) at (-1.5,1.1) [gaugeb,label=above:{$1$},label={[red]left:{$a$}}] {};
    \node (g2) at (-1.5,0) [gauge,label=below:{$2$}] {};
    \node (g2a) at (-1.5,-1) {$\textcolor{red}{b}$};
    \node (g3) at (-0.5,0) [gauge,label=below:{$3$}] {};
    \node (g3a) at (-0.5,-1) {$\textcolor{red}{c}$};
    \node (g4) at (0.5,0) [gauge,label=below:{$4$}] {};
    \node (g5) at (1.3,0.8) [gauge,label=right:{$2$}] {};
    \node (g6) at (1.3,-0.8) [gauge, label=right:{$2$}] {};
    \node (g8) at (0.5,1) [gauger,label=left:{$1$}] {};
    \draw (g4)--(g8);
    \draw (g1)--(g2)--(g3)--(g4)--(g5);
    \draw (g4)--(g6);
\end{scope}

\begin{scope}[shift={(0,-9)}]
   \node at (-3,0) {$+$};
    \node (g2) at (-0.5,0) [gauge,label=below:{$1$}] {};
    \node (g3) at (0.5,0) [gauge,label=below:{$2$}] {};
    \node (g4) at (0.5,1) [gauge,label=left:{$1$}] {};
    \node (g5) at (1.3,0.8) [gauge,label=right:{$1$}] {};
    \node (g6) at (1.3,-0.8) [gauge,label=right:{$1$}] {};
    \draw (g2)--(g3)--(g4);
    \draw (g5)--(g3);
    \draw (g6)--(g3);
\end{scope}

\begin{scope}[shift={(0,-12)}]
    \node (g2) at (-1.5,0) [gauge,label=below:{$2$}] {};
    \node (g2a) at (-1.5,-1) {$\textcolor{red}{b}$};
    \node (g3) at (-0.5,0) [gauge,label=below:{$4$}] {};
    \node (g3a) at (-0.5,-1) {$\textcolor{red}{c}$};
    \node (g4) at (0.5,0) [gauge,label=below:{$6$}] {};
    \node (g5) at (1.3,0.8) [gauge,label=right:{$3$}] {};
    \node (g6) at (1.3,-0.8) [gauge, label=right:{$3$}] {};
    \node (g8) at (0.5,1) [gauger,label=left:{$2$}] {};
    \draw (g4)--(g8);
    \draw (g2)--(g3)--(g4)--(g5);
    \draw (g4)--(g6);
\end{scope}

    \draw[->] (2.5,0) to [out=300,in=60,looseness=1] (2.5,-5.8);
    \draw[->] (2.5,-6.2) to [out=300,in=60,looseness=1] (2.5,-11.8);
\end{tikzpicture}}
\end{equation}
One can check that subtracting $d_4$ twice from the final quiver in (\ref{correct d4 addition to nminB5}) will indeed give n.min $B_5$ (\ref{n.min B5}) as desired. The final quiver in (\ref{correct d4 addition to nminB5}), following the algorithm in Section \ref{sec:BGS}, has BGS $SO(10)$. However, we have seen that the Hasse diagram contains $SO(11)$, and thus this must be a subgroup of the global symmetry. Indeed, upon Hilbert series computation, $SO(11)$ is the confirmed EGS. \hfill $\square$\\

\noindent This concludes the methodology needed for the results of this paper. Throughout the paper the results of quiver additions will be labelled with two parameters, $n$ and $k$, pertaining to the Coulomb branch variety of the base quiver from Table \ref{tab:adj hyper Qs} we add to (or equivalently the rank of the global symmetry factor which the BGS is enhanced to) and the rank of the Coulomb branch variety of the slice which we are adding twice respectively. In some cases, the only valid way to add a slice will restrict the allowed values of $n$ and or $k$. We now proceed to performing the quiver addition method discussed in this section on the quivers from Table \ref{tab:adj hyper Qs} one by one (excluding the $a_{2n-1}/\mathbb{Z}_2$ case, as discussed in the bullet points following Table \ref{tab:adj hyper Qs}), to derive quivers with an EGS.

\section{\texorpdfstring{Enhancement to $SO(2n+1)$}{Enhancement to Bn}}
\label{sec:enhancement to Bn}

\noindent In this section, we focus on deriving quivers with enhanced global symmetry from the first quiver in Table \ref{tab:adj hyper Qs}, the next to minimal nilpotent orbit of $B_n$:
\begin{equation}\label{nminBn}
    \begin{tikzpicture}[x=1cm,y=.8cm]
    \node (g1) at (-1.5,1.1) [gauge,label=right:{$2$},label={[red]left:{$a$}}] {};
    \node (g2) at (-1.5,0) [gauge,label=below:{$2$},label={[red]left:{$b$}}] {};
    \node (g3) at (-0.5,0) {$\cdots$};
    \node (g4) at (0.5,0) [gauge,label=below:{$2$}] {};
    \node (g5) at (1.3,-0.8) [gauge,label=right:{$1$}] {};
    \node (g6) at (1.3,0.8) [gauge, label=right:{$1$}] {};
    \draw (g1)--(g2)--(g3)--(g4)--(g5);
    \draw (g4)--(g6);
    \draw (g1) to [out=45,in=135,looseness=10] (g1);
    \draw [decorate,decoration={brace,mirror,amplitude=6pt}] (-1.5,-0.8) --node[below=6pt] {$n-2$} (0.5,-0.8);
    \end{tikzpicture}
\end{equation}
Using the refinement of the BGS algorithm given in Section \ref{sec:quivers with adj matter}, one can see that the BGS of (\ref{nminBn}) is $SO(2n+1)$. Indeed, the Hasse diagram of (\ref{nminBn}) is
\begin{equation} \label{HD nminBn}
    \begin{tikzpicture}
    \node (1) [hasse] at (0,0) {};
    \node (2) [hasse] at (0,-1) {};
    \node (3) [hasse] at (0,-2) {};
    \node (4) at (0.5,-1) {$,$};
    \draw (1) edge [] node[label=left:$A_1$] {} (2);
    \draw (2) edge [] node[label=left:$b_n$] {} (3);
    \end{tikzpicture}
\end{equation}
which agrees with this, and this global symmetry is confirmed upon Hilbert series computation. The quivers derived from using quiver addition on (\ref{nminBn}) to absorb $a$ will therefore experience a symmetry enhancement of this type.  After considering all possible additions following the addition algorithm listed in Section \ref{sec: quiver addition}, two types of enhancement are found: $SO(2n) \rightarrow SO(2n+1)$ and $SU(n) \times U(1) \rightarrow SO(2n+1)$. In all cases listed, except those where an $e_8$ is added twice, the EGS has been verified via Hilbert series computation. In the cases where adding a slice gives a family of quivers depending on one or two parameters,\footnote{These parameters are typically $n$ to specify which n.min $B_n$ we are adding on to (as in (\ref{nminBn})), and $k$ to specify the affine slice $g_k$ being added.} this EGS has been verified via Hilbert series computation for several low values of the parameters. The results for all quivers not depending on parameters have been confirmed too, except the $e_8$ cases which have proved too computationally complex to verify, so the EGS in these cases remain as conjectures. Brane systems for a selection of the $e_8$ cases listed both in this section and in subsequent sections are known and the construction is discussed in \cite{Cabrera:2019izd}. The result of the $e_8$ enhancement shown in Table \ref{tab:adding to RHS nmin Bn} has been previously found in \cite{Bhardwaj:2019jtr} through an F-theory construction, and in \cite{Zafrir:2015rga} using $5d$ brane webs. Several of the quivers appearing in this work also appear in the construction of so-called minimally unbalanced quivers in \cite{Cabrera:2018uvz}.\\

\subsection{\texorpdfstring{Enhancement from $SO(2n)$ to $SO(2n+1)$}{Enhancement from SO(2n) to SO(2nadd1)}} \label{sec:dn to bn}
In this section, we see quivers whose BGS has as one of its factors a $D_n$ type symmetry which undergoes enhancements to $B_n$. That is,
\begin{equation}
    \prod_{i} G_i \times \textcolor{blue}{SO(2n)} \rightarrow \prod_{i} G_i \times \textcolor{blue}{SO(2n+1)}
\end{equation}
for some semi-simple Lie groups $G_i$. Here we use blue to highlight the symmetry factor which experiences enhancement, and this will continue to be used throughout the paper. There are two possible ways we can construct these quivers by using quiver addition to absorb $a$ of (\ref{nminBn}): by adding to its existing nodes or its empty nodes (as discussed in the bullet points following the quiver addition algorithm of Section \ref{sec: quiver addition}). We explore each of these in turn now.

\subsubsection{\texorpdfstring{Adding to existing nodes of n.min $B_n$}{Adding to existing nodes of n.min Bn}} \label{sec:existing nodes nminbn}
Let's first investigate adding on to the existing nodes in (\ref{nminBn}). The case of $n=3$ will be treated separately. For the case of $n=2$, there are two $b$-nodes and no possible existing option for the node $c$ (taking $b$ and $c$ nodes as defined in the quiver addition algorithm of Section \ref{sec: quiver addition}), so we cannot add on to existing nodes here. Note however that we can add on to empty nodes, and this will be discussed in Section \ref{lhsnminBnsection}.\\

\noindent For adding to the existing nodes of (\ref{nminBn}) for $n \geq 4$, our $c$-node will be given by
\begin{equation}\label{cnodenminBngeq4}
    \begin{tikzpicture}[x=1cm,y=.8cm]
    \node (g1) at (-2,1.1) [gauge,label=right:{$2$},label={[red]left:{$a$}}] {};
    \node (g2) at (-2,0) [gauge,label=below:{$2$},label={[red]left:{$b$}}] {};
    \node (g2a) at (-1,0) [gauge,label=below:{$2$},label={[red]above:{$c$}}] {};
    \node (g3) at (0,0) {$\cdots$};
    \node (g4) at (1,0) [gauge,label=below:{$2$}] {};
    \node (g5) at (1.8,-0.8) [gauge,label=right:{$1$}] {};
    \node (g6) at (1.8,0.8) [gauge, label=right:{$1$}] {};
    \draw (g1)--(g2)--(g2a)--(g3)--(g4)--(g5);
    \draw (g4)--(g6);
    \draw (g1) to [out=45,in=135,looseness=10] (g1);
    \draw [decorate,decoration={brace,mirror,amplitude=6pt}] (-2,-0.8) --node[below=6pt] {$n-2$} (1,-0.8);
    \end{tikzpicture}
\end{equation}
The only slices that can be validly added to (\ref{cnodenminBngeq4}) according to the quiver addition algorithm in Section \ref{sec: quiver addition} are $a_k$, $d_k$, $e_6$, $e_7$ and $e_8$. The method of addition for all slices can be easily extended from the $d_4$ example shown in (\ref{correct d4 addition to nminB5}). Adding each of these slices ($a_k$, $d_k$, $e_6$, $e_7$ and $e_8$) will fix $n$ in (\ref{nminBn}) to be a particular value, in order for the addition to obey Step $2$ of our quiver addition algorithm. The results of these additions are listed in Table \ref{tab:adding to RHS nmin Bn}. The second column gives the quiver arrived at after twice adding the affine quiver corresponding to the slice listed in the first column of (\ref{nminBn}) to the quiver (\ref{nminBn}) for the values of $n$ listed. The third column shows the enhancement of the BGS to the EGS, with the factor(s) that experience enhancement shown in blue. This will also be the format of all future tables documenting the results in this paper.\\

\begin{table}[hbt!]
    \centering
    
 \hspace*{-1.1cm}\begin{tabular}{|c|c|c|} \hline
 
Added Slice & Quiver & Global Symmetry \\

\hline

$\begin{array}{c}
	a_3\\
	\\
	n=4
	\end{array}$ &
$\raisebox{-.5\height}{\begin{tikzpicture}[x=1cm,y=.8cm]
\node (g1) at (-1,0) [gauge,label=below:{$2$}] {};
\node (g2) at (0,0) [gauge,label=below:{$4$}] {};
\node (g3) at (1,1) [gauge,label=right:{$3$}] {};
\node (g4) at (1,0) [gauger,label=right:{$2$}] {};
\node (g7) at (1,-1) [gauge,label=right:{$3$}] {};
\draw (g1)--(g2)--(g3)--(g4)--(g7);
\draw (g7)--(g2);
\end{tikzpicture}}$ & 
$\renewcommand{\arraystretch}{1.5}
\begin{array}{c}
\textcolor{blue}{SO(8)} \\
\downarrow \\
\textcolor{blue}{SO(9)}
\end{array}$ \\
\hline

$\begin{array}{c}
	a_k\\
	\\
	k\geq 4, \ n=4
	\end{array}$ &
$\raisebox{-.5\height}{\begin{tikzpicture}[x=1cm,y=.8cm]
\node (g1) at (-1,0) [gauge,label=below:{$2$}] {};
\node (g2) at (0,0) [gauge,label=below:{$4$}] {};
\node (g3) at (1,3) [gauge,label=right:{$3$}] {};
\node (g4) at (1,2) [gauger,label=right:{$2$}] {};
\node (g8) at (1,1) [gauge,label=right:{$2$}] {};
\node (g5) at (1,0) {$\vdots$};
\node (g9) at (1,-1) [gauge,label=right:{$2$}] {};
\node (g6) at (1,-2) [gauger,label=right:{$2$}] {};
\node (g7) at (1,-3) [gauge,label=right:{$3$}] {};
\draw (g1)--(g2)--(g3)--(g4)--(g8)--(g5)--(g9)--(g6)--(g7);
\draw (g7)--(g2);
\draw [decorate,decoration={brace,amplitude=6pt}] (1.6,1) --node[right=6pt] {$k-4$} (1.6,-1);
\end{tikzpicture}}$ & 
$\renewcommand{\arraystretch}{1.5}
\begin{array}{c}
\textcolor{blue}{SO(8)} \times U(k-3) \\
\downarrow \\
\textcolor{blue}{SO(9)} \times U(k-3)
\end{array}$ \\
\hline

$\begin{array}{c}
	d_k\\
	\\
	k\geq 4, \ n=k+1
	\end{array}$ &
$\raisebox{-.5\height}{\begin{tikzpicture}[x=1cm,y=.8cm]
\node (g1) at (-2.5,0) [gauge,label=below:{$2$}] {};
\node (g2) at (-1.5,0) [gauge,label=below:{$4$}] {};
\node (g3) at (-0.5,0) [gauge,label=below:{$6$}] {};
\node (g4) at (0.5,0) {$\cdots$};
\node (g5) at (1.5,0) [gauge,label=below:{$6$}] {};
\node (g6) at (2.3,0.8) [gauge,label=right:{$3$}] {};
\node (g7) at (2.3,-0.8) [gauge,label=right:{$3$}] {};
\node (g8) at (-0.5,1) [gauger,label=above:{$2$}] {};
\draw (g1)--(g2)--(g3)--(g4)--(g5)--(g6);
\draw (g5)--(g7);
\draw (g3)--(g8);
\draw [decorate,decoration={brace,mirror,amplitude=6pt}] (-0.5,-0.8) --node[below=6pt] {$k-3 = n-4$} (1.5,-0.8);
\end{tikzpicture}}$ & 
$\renewcommand{\arraystretch}{1.5}
\begin{array}{c}
\textcolor{blue}{SO(2n)} \\
\downarrow \\
\textcolor{blue}{SO(2n+1)}
\end{array}$ \\
\hline

$\begin{array}{c}
	e_6\\
	\\
	n=6
	\end{array}$ &
$\raisebox{-.5\height}{\begin{tikzpicture}[x=1cm,y=.8cm]
\node (g1) at (-2.5,0) [gauge,label=below:{$2$}] {};
\node (g2) at (-1.5,0) [gauge,label=below:{$4$}] {};
\node (g3) at (-0.5,0) [gauge,label=below:{$6$}] {};
\node (g4) at (0.5,0) [gauge,label=below:{$8$}] {};
\node (g5) at (1.5,0) [gauge,label=below:{$5$}] {};
\node (g6) at (2.5,0) [gauger,label=below:{$2$}] {};
\node (g7) at (0.5,1) [gauge,label=right:{$5$}] {};
\node (g8) at (0.5,2) [gauger,label=right:{$2$}] {};
\draw (g1)--(g2)--(g3)--(g4)--(g5)--(g6);
\draw (g4)--(g7)--(g8);
\end{tikzpicture}}$ & 
$\renewcommand{\arraystretch}{1.5}
\begin{array}{c}
\textcolor{blue}{SO(12)} \times U(1) \\
\downarrow \\
\textcolor{blue}{SO(13)} \times U(1)
\end{array}$ \\
\hline

$\begin{array}{c}
	e_7\\
	\\
	n=7
	\end{array}$ &
$\raisebox{-.5\height}{\begin{tikzpicture}[x=1cm,y=.8cm]
\node (g1) at (-3.5,0) [gauge,label=below:{$2$}] {};
\node (g2) at (-2.5,0) [gauge,label=below:{$4$}] {};
\node (g3) at (-1.5,0) [gauge,label=below:{$6$}] {};
\node (g4) at (-0.5,0) [gauge,label=below:{$8$}] {};
\node (g5) at (0.5,0) [gauge,label=below:{$10$}] {};
\node (g6) at (1.5,0) [gauge,label=below:{$7$}] {};
\node (g7) at (2.5,0) [gauger,label=below:{$4$}] {};
\node (g8) at (3.5,0) [gauge,label=below:{$2$}] {};
\node (g9) at (0.5,1) [gauge,label=right:{$5$}] {};
\draw (g1)--(g2)--(g3)--(g4)--(g5)--(g6)--(g7)--(g8);
\draw (g5)--(g9);
\end{tikzpicture}}$ & 
$\renewcommand{\arraystretch}{1.5}
\begin{array}{c}
\textcolor{blue}{SO(14)} \times SU(2) \\
\downarrow \\
\textcolor{blue}{SO(15)} \times SU(2)
\end{array}$ \\
\hline

$\begin{array}{c}
	e_8\\
	\\
	n=9
	\end{array}$ &
$\raisebox{-.5\height}{\begin{tikzpicture}[x=1cm,y=.8cm]
\node (g1) at (-4,0) [gauge,label=below:{$2$}] {};
\node (g2) at (-3,0) [gauge,label=below:{$4$}] {};
\node (g3) at (-2,0) [gauge,label=below:{$6$}] {};
\node (g4) at (-1,0) [gauge,label=below:{$8$}] {};
\node (g5) at (0,0) [gauge,label=below:{$10$}] {};
\node (g6) at (1,0) [gauge,label=below:{$12$}] {};
\node (g7) at (2,0) [gauge,label=below:{$14$}] {};
\node (g8) at (3,0) [gauge,label=below:{$9$}] {};
\node (g9) at (4,0) [gauger,label=below:{$4$}] {};
\node (g10) at (2,1) [gauge,label=right:{$7$}] {};
\draw (g1)--(g2)--(g3)--(g4)--(g5)--(g6)--(g7)--(g8)--(g9);
\draw (g7)--(g10);
\end{tikzpicture}}$ & 
$\renewcommand{\arraystretch}{1.5}
\begin{array}{c}
\textcolor{blue}{SO(18)} \\
\downarrow \\
\textcolor{blue}{SO(19)}
\end{array}$ \\
\hline

\end{tabular}

 \caption{Quivers resulting from adding all possible elementary slices to the existing nodes of n.min $B_n$ (\ref{nminBn}) to absorb $a$, $n \geq 4$.}
    \label{tab:adding to RHS nmin Bn}
\end{table}

\clearpage 

\noindent For the case of $n=3$, there are two possible $c$-nodes
\begin{equation}\label{cnodenminBn3}
    \begin{tikzpicture}[x=1cm,y=.8cm]
    \node (g1) at (-0.5,1.1) [gauge,label=right:{$2$},label={[red]left:{$a$}}] {};
    \node (g2) at (-0.5,0) [gauge,label=below:{$2$},label={[red]left:{$b$}}] {};
    \node (g5a) at (1,-0.8) {\textcolor{red}{$c_1$}};
    \node (g5) at (0.3,-0.8) [gauge,label=right:{$1$}] {};
    \node (g6) at (0.3,0.8) [gauge, label=right:{$1$}] {};
    \node (g6a) at (1,0.8) {\textcolor{red}{$c_2$}};
    \draw (g1)--(g2)--(g5);
    \draw (g2)--(g6);
    \draw (g1) to [out=45,in=135,looseness=10] (g1);
    \end{tikzpicture}
\end{equation}
which we call $c_1$ and $c_2$. However due to the $\mathbb{Z}_2$ outer automorphism of (\ref{cnodenminBn3}), adding to either $c$-node gives the same result, and so only one need be considered. Here there are more allowed quivers that we can add to the $c$-node of (\ref{cnodenminBn3}) than there were for (\ref{cnodenminBngeq4}), because (\ref{cnodenminBn3}) has no nodes connected to the $c$-node that we have to ensure the added slice contains as a subset. As a result, we may add any affine quiver to the existing nodes of n.min $B_3$. The resulting quivers are listed in Tables \ref{tab:sl addition nmin B3} and \ref{tab:nsl addition nmin B3}.

\clearpage

\begin{table}[htbp!]
    \centering
    
 \hspace*{-1.35cm}\begin{tabular}{|c|c|c|} \hline
 
Added Slice & Quiver & Global Symmetry \\

\hline

$\begin{array}{c}
	a_1\\
	\end{array}$ &
$\raisebox{-.5\height}{\begin{tikzpicture}[x=1cm,y=.8cm]
\node (g1) at (-1.5,0) [gauge,label=below:{$1$}] {};
\node (g2) at (-0.5,0) [gauge,label=below:{$2$}] {};
\node (g3) at (0.5,0) [gauge,label=below:{$3$}] {};
\node (g4) at (1.5,0) [gauger,label=below:{$2$}] {};
\draw (g1)--(g2)--(g3);
\draw[transform canvas ={yshift=-2pt}] (g3)--(g4);
\draw[transform canvas ={yshift=2pt}] (g3)--(g4);
\end{tikzpicture}}$ & 
$\renewcommand{\arraystretch}{1.5}
\begin{array}{c}
\textcolor{blue}{SU(4)} \\
\downarrow \\
\textcolor{blue}{SO(7)}
\end{array}$ \\
\hline

$\begin{array}{c}
	a_k\\
	\\
	k\geq 2
	\end{array}$ &
$\raisebox{-.5\height}{\begin{tikzpicture}[x=1cm,y=.8cm]
\node (g1) at (-3,0) [gauge,label=below:{$1$}] {};
\node (g2) at (-2,0) [gauge,label=below:{$2$}] {};
\node (g3) at (-1,0) [gauge,label=below:{$3$}] {};
\node (g4) at (0,0) [gauger,label=below:{$2$}] {};
\node (g8) at (1,0) [gauge,label=below:{$2$}] {};
\node (g5) at (2,0) {$\cdots$};
\node (g6) at (3,0) [gauge,label=below:{$2$}] {};
\node (g7) at (1,1) [gauger,label=above:{$2$}] {};
\draw (g1)--(g2)--(g3)--(g4)--(g8)--(g5)--(g6)--(g7);
\draw (g3)--(g7);
\draw [decorate,decoration={brace,mirror,amplitude=6pt}] (1,-0.8) --node[below=6pt] {$k-2$} (3,-0.8);
\end{tikzpicture}}$ & 
$\renewcommand{\arraystretch}{1.5}
\begin{array}{c}
\textcolor{blue}{SU(4)} \times U(k-1) \\
\downarrow \\
\textcolor{blue}{SO(7)} \times U(k-1)
\end{array}$ \\
\hline

$\begin{array}{c}
	d_k\\
	\\
	k\geq 4
	\end{array}$ &
$\raisebox{-.5\height}{\begin{tikzpicture}[x=1cm,y=.8cm]
\node (g0) at (-3,0) [gauge,label=below:{$1$}] {};
\node (g1) at (-2,0) [gauge,label=below:{$2$}] {};
\node (g2) at (-1,0) [gauge,label=below:{$3$}] {};
\node (g3) at (0,0) [gauger,label=below:{$4$}] {};
\node (g4) at (1,0) {$\cdots$};
\node (g5) at (2,0) [gauge,label=below:{$4$}] {};
\node (g12) at (3,0) [gauge,label=below:{$2$}] {};
\node (g13) at (2,1) [gauge,label=above:{$2$}] {};
\node (g11) at (0,1) [gauge,label=above:{$2$}] {};
\draw (g0)--(g1)--(g2)--(g3)--(g4)--(g5)--(g12);
\draw (g13)--(g5);
\draw (g11)--(g3);
\draw [decorate,decoration={brace,mirror,amplitude=6pt}] (0,-0.8) --node[below=6pt] {$k-3$} (2,-0.8);
\end{tikzpicture}}$ & 
$\renewcommand{\arraystretch}{1.5}
\begin{array}{c}
\textcolor{blue}{SU(4)} \times SU(2) \times SO(2k-4) \\
\downarrow \\
\textcolor{blue}{SO(7)} \times SU(2) \times SO(2k-4)
\end{array}$ \\
\hline

$\begin{array}{c}
	e_6
	\end{array}$ &
$\raisebox{-.5\height}{\begin{tikzpicture}[x=1cm,y=.8cm]
\node (g0) at (-3,0) [gauge,label=below:{$1$}] {};
\node (g1) at (-2,0) [gauge,label=below:{$2$}] {};
\node (g2) at (-1,0) [gauge,label=below:{$3$}] {};
\node (g3) at (0,0) [gauger,label=below:{$4$}] {};
\node (g4) at (1,0) [gauge,label=below:{$6$}] {};
\node (g5) at (2,0) [gauge,label=below:{$4$}] {};
\node (g12) at (3,0) [gauge,label=below:{$2$}] {};
\node (g10) at (1,1) [gauge,label=right:{$4$}] {};
\node (g11) at (1,2) [gauge,label=right:{$2$}] {};
\draw (g0)--(g1)--(g2)--(g3)--(g4)--(g5)--(g12);
\draw (g4)--(g10)--(g11);
\end{tikzpicture}}$ & 
$\renewcommand{\arraystretch}{1.5}
\begin{array}{c}
\textcolor{blue}{SU(4)} \times SU(6) \\
\downarrow \\
\textcolor{blue}{SO(7)} \times SU(6)
\end{array}$ \\
\hline

$\begin{array}{c}
	e_7
	\end{array}$ &
$\raisebox{-.5\height}{\begin{tikzpicture}[x=1cm,y=.8cm]
\node (g1) at (-4,0) [gauge,label=below:{$1$}] {};
\node (g2) at (-3,0) [gauge,label=below:{$2$}] {};
\node (g3) at (-2,0) [gauge,label=below:{$3$}] {};
\node (g4) at (-1,0) [gauger,label=below:{$4$}] {};
\node (g13) at (0,0) [gauge,label=below:{$6$}] {};
\node (g5) at (1,0) [gauge,label=below:{$8$}] {};
\node (g12) at (2,0) [gauge,label=below:{$6$}] {};
\node (g6) at (3,0) [gauge,label=below:{$4$}] {};
\node (g10) at (4,0) [gauge,label=below:{$2$}] {};
\node (g7) at (1,1) [gauge,label=above:{$4$}] {};
\draw (g1)--(g2)--(g3)--(g4)--(g13)--(g5)--(g12)--(g6)--(g10);
\draw (g7)--(g5);
\end{tikzpicture}}$ & 
$\renewcommand{\arraystretch}{1.5}
\begin{array}{c}
\textcolor{blue}{SU(4)} \times SO(12) \\
\downarrow \\
\textcolor{blue}{SO(7)} \times SO(12)
\end{array}$ \\
\hline

$\begin{array}{c}
	e_8
	\end{array}$ &
$\raisebox{-.5\height}{\begin{tikzpicture}[x=1cm,y=.8cm]
\node (g1) at (-4.5,0) [gauge,label=below:{$1$}] {};
\node (g2) at (-3.5,0) [gauge,label=below:{$2$}] {};
\node (g3) at (-2.5,0) [gauge,label=below:{$3$}] {};
\node (g4) at (-1.5,0) [gauger,label=below:{$4$}] {};
\node (g5) at (-0.5,0) [gauge,label=below:{$6$}] {};
\node (g6) at (0.5,0) [gauge,label=below:{$8$}] {};
\node (g7) at (1.5,0) [gauge,label=below:{$10$}] {};
\node (g8) at (2.5,0) [gauge,label=below:{$12$}] {};
\node (g9) at (3.5,0) [gauge,label=below:{$8$}] {};
\node (g10) at (4.5,0) [gauge,label=below:{$4$}] {};
\node (g11) at (2.5,1) [gauge,label=above:{$6$}] {};
\draw (g1)--(g2)--(g3)--(g4)--(g5)--(g6)--(g7)--(g8)--(g9)--(g10);
\draw (g8)--(g11);
\end{tikzpicture}}$ & 
$\renewcommand{\arraystretch}{1.5}
\begin{array}{c}
\textcolor{blue}{SU(4)} \times E_7 \\
\downarrow \\
\textcolor{blue}{SO(7)} \times E_7
\end{array}$ \\
\hline

\end{tabular}
 \caption{Quivers resulting from adding the possible simply laced elementary slices to n.min $B_3$ (\ref{cnodenminBn3}) to absorb $a$.}
    \label{tab:sl addition nmin B3}
\end{table}

\begin{table}[htbp!]
    \centering
    
 \hspace*{-0.05cm}\begin{tabular}{|c|c|c|} \hline
 
Added Slice & Quiver & Global Symmetry \\

\hline

$\begin{array}{c}
	b_k\\
	\\
	k \geq 3
	\end{array}$ &
$\raisebox{-.5\height}{\begin{tikzpicture}[x=1cm,y=.8cm]
\node (g0) at (-3,0) [gauge,label=below:{$1$}] {};
\node (g1) at (-2,0) [gauge,label=below:{$2$}] {};
\node (g2) at (-1,0) [gauge,label=below:{$3$}] {};
\node (g3) at (0,0) [gauger,label=below:{$4$}] {};
\node (g4) at (1,0) {$\cdots$};
\node (g5) at (2,0) [gauge,label=below:{$4$}] {};
\node (g12) at (3,0) [gauge,label=below:{$2$}] {};
\node (g11) at (0,1) [gauge,label=above:{$2$}] {};
\draw (g0)--(g1)--(g2)--(g3)--(g4)--(g5);
\draw (g3)--(g11);
\draw[transform canvas={yshift=-1.5pt}] (g5)--(g12);
\draw[transform canvas={yshift=1.5pt}] (g5)--(g12);
\draw (2.4,-0.2)--(2.6,0)--(2.4,0.2);
\draw [decorate,decoration={brace,mirror,amplitude=6pt}] (0,-0.8) --node[below=6pt] {$k-2$} (2,-0.8);
\end{tikzpicture}}$ & 
$\renewcommand{\arraystretch}{1.5}
\begin{array}{c}
\textcolor{blue}{SU(4)} \times SU(2) \times SO(2k-3) \\
\downarrow \\
\textcolor{blue}{SO(7)} \times SU(2) \times SO(2k-3)
\end{array}$ \\
\hline

$\begin{array}{c}
	c_k\\
	\\
	k \geq 2
	\end{array}$ &
$\raisebox{-.5\height}{\begin{tikzpicture}[x=1cm,y=.8cm]
\node (g0) at (-3,0) [gauge,label=below:{$1$}] {};
\node (g1) at (-2,0) [gauge,label=below:{$2$}] {};
\node (g2) at (-1,0) [gauge,label=below:{$3$}] {};
\node (g3) at (0,0) [gauger,label=below:{$2$}] {};
\node (g4) at (1,0) {$\cdots$};
\node (g5) at (2,0) [gauge,label=below:{$2$}] {};
\node (g12) at (3,0) [gauge,label=below:{$2$}] {};
\draw (g0)--(g1)--(g2);
\draw (g3)--(g4)--(g5);
\draw[transform canvas={yshift=-1.5pt}] (g2)--(g3);
\draw[transform canvas={yshift=1.5pt}] (g2)--(g3);
\draw (-0.6,-0.2)--(-0.4,0)--(-0.6,0.2);
\draw[transform canvas={yshift=-1.5pt}] (g5)--(g12);
\draw[transform canvas={yshift=1.5pt}] (g5)--(g12);
\draw (2.6,-0.2)--(2.4,0)--(2.6,0.2);
\draw [decorate,decoration={brace,mirror,amplitude=6pt}] (0,-0.8) --node[below=6pt] {$k-1$} (2,-0.8);
\end{tikzpicture}}$ & 
$\renewcommand{\arraystretch}{1.5}
\begin{array}{c}
\textcolor{blue}{SU(4)} \times Sp(k-1) \\
\downarrow \\
\textcolor{blue}{SO(7)} \times Sp(k-1)
\end{array}$ \\
\hline

$\begin{array}{c}
	f_4
	\end{array}$ &
$\raisebox{-.5\height}{\begin{tikzpicture}[x=1cm,y=.8cm]
\node (g0) at (-3,0) [gauge,label=below:{$1$}] {};
\node (g1) at (-2,0) [gauge,label=below:{$2$}] {};
\node (g2) at (-1,0) [gauge,label=below:{$3$}] {};
\node (g3) at (0,0) [gauger,label=below:{$4$}] {};
\node (g4) at (1,0) [gauge,label=below:{$6$}] {};
\node (g5) at (2,0) [gauge,label=below:{$4$}] {};
\node (g12) at (3,0) [gauge,label=below:{$2$}] {};
\draw (g0)--(g1)--(g2)--(g3)--(g4);
\draw (g12)--(g5);
\draw[transform canvas={yshift=-1.5pt}] (g4)--(g5);
\draw[transform canvas={yshift=1.5pt}] (g4)--(g5);
\draw (1.4,-0.2)--(1.6,0)--(1.4,0.2);
\end{tikzpicture}}$ & 
$\renewcommand{\arraystretch}{1.5}
\begin{array}{c}
\textcolor{blue}{SU(4)} \times Sp(3) \\
\downarrow \\
\textcolor{blue}{SO(7)} \times Sp(3)
\end{array}$ \\
\hline

$\begin{array}{c}
    g_2
    \end{array}$ &
$\raisebox{-.5\height}{\begin{tikzpicture}[x=1cm,y=.8cm]
\node (g1) at (-2,0) [gauge,label=below:{$1$}] {};
\node (g2) at (-1,0) [gauge,label=below:{$2$}] {};
\node (g3) at (0,0) [gauge,label=below:{$3$}] {};
\node (g4) at (1,0) [gauger,label=below:{$4$}] {};
\node (g5) at (2,0) [gauge,label=below:{$2$}] {};
\draw (g1)--(g2)--(g3)--(g4);
\draw[transform canvas = {yshift=-2pt}] (g4)--(g5);
\draw[transform canvas = {yshift=0pt}] (g4)--(g5);
\draw[transform canvas = {yshift =2pt}] (g4)--(g5);
\draw (1.4,0.2)--(1.6,0)--(1.4,-0.2);
\end{tikzpicture}}$ &
$\renewcommand{\arraystretch}{1.5}
\begin{array}{c}
\textcolor{blue}{SU(4)} \times SU(2)\\
\downarrow \\
\textcolor{blue}{SO(7)} \times SU(2)
\end{array}$\\
\hline

\end{tabular}
 \caption{Quivers resulting from adding the possible non-simply laced elementary slices to n.min $B_3$ (\ref{cnodenminBn3}) to absorb $a$.}
    \label{tab:nsl addition nmin B3}
\end{table}

\clearpage

\subsubsection{\texorpdfstring{Adding to empty nodes of n.min $B_n$}{Adding to the empty nodes of n.min Bn}} \label{lhsnminBnsection}
We now turn our attention to adding on to the empty nodes (as defined in the quiver addition algorithm of Section \ref{sec: quiver addition}) of (\ref{nminBn}). That is, we take the $c$-node to be empty:
\begin{equation}\label{cnodeemptynminBn}
    \begin{tikzpicture}[x=1cm,y=.8cm]
    \node (g1) at (-1.5,1.1) [gauge,label=right:{$2$},label={[red]left:{$a$}}] {};
    \node (g2) at (-1.5,0) [gauge,label=below:{$2$},label={[red]45:{$b$}}] {};
    \node (g2a) at (-2.5,0) [emptygauge,label={[gray]below:{$0$}},label={[red]left:{$c$}}] {};
    \node (g3) at (-0.5,0) {$\cdots$};
    \node (g4) at (0.5,0) [gauge,label=below:{$2$}] {};
    \node (g5) at (1.3,-0.8) [gauge,label=right:{$1$}] {};
    \node (g6) at (1.3,0.8) [gauge, label=right:{$1$}] {};
    \node (g7) at (1.8,0) {$.$};
    \draw (g1)--(g2)--(g3)--(g4)--(g5);
    \draw (g4)--(g6);
    \draw [dashed, gray] (g2a)--(g2);
    \draw (g1) to [out=45,in=135,looseness=10] (g1);
    \draw [decorate,decoration={brace,mirror,amplitude=6pt}] (-1.5,-0.8) --node[below=6pt] {$n-2$} (0.5,-0.8);
    \end{tikzpicture}
\end{equation}
The dashed grey line indicates that $c$ is not really there, reinforced by its vanishing rank. Note that in the $n=2$ case there are two possible $b$-nodes, which we call $b_1$ and $b_2$, and their corresponding empty $c$-nodes will be called $c_1$ and $c_2$:
\begin{equation} \label{cnodesnminB2}
    \begin{tikzpicture}[x=1cm,y=.8cm]
\node (g2) at (-1,0) [gauge,label=below:{$1$}] {};
\node (g2a) at (-1,-1.05) {$\textcolor{red}{b_1}$};
\node (g2b) at (-2,0) [emptygauge,label={[gray]below:{$0$}},label={[red]left:{$c_1$}}] {};
\node (g3) at (0,0) [gauge,label=below:{$2$}] {};
\node (g2a) at (0,-1) {$\textcolor{red}{a}$};
\node (g4) at (1,0) [gauge,label=below:{$1$}] {};
\node (g4a) at (1,-1.05) {$\textcolor{red}{b_2}$};
\node (g4b) at (2,0) [emptygauge,label={[gray]below:{$0$}},label={[red]right:{$c_2$}}] {};
\node (g5) at (2.5,0) {$.$};
\draw (g2)--(g3)--(g4);
\draw [dashed, gray] (g4)--(g4b);
\draw [dashed, gray] (g2)--(g2b);
\draw (g3) to [out=45,in=135,looseness=10] (g3);
\end{tikzpicture}
\end{equation}
In order for the adjoint hypermultiplet to appear on the node $a$, which is connected to both $b_1$ and $b_2$, during quiver subtraction, $c$-nodes $c_1$ and $c_2$ must coincide so that an added slice is being ``attached" on to both $b_1$ and $b_2$. As a result, because the quiver addition algorithm of Section \ref{sec: quiver addition} requires any slice which is added to have two long rank one nodes, the allowed slices which can be validly added to \ref{cnodesnminB2} to absorb $a$ are restricted to $a_k$, $d_k$, $e_6$ and $e_7$. The results of these additions are given in Table \ref{tab:addition nmin B2}.\\

\begin{table}[htbp!]
    \centering
    
 \hspace*{-0.8cm}\begin{tabular}{|c|c|c|} \hline
 
Added Slice & Quiver & Global Symmetry \\

\hline

$\begin{array}{c}
	a_1
	\end{array}$ &
$\raisebox{-.5\height}{\begin{tikzpicture}[x=1cm,y=.8cm]
\node (g1) at (-1,0) [gauge,label=below:{$1$}] {};
\node (g2) at (0,0) [gauger,label=below:{$2$}] {};
\node (g4) at (0,1) [gauge,label=left:{$2$}] {};
\node (g5) at (1,0) [gauge,label=below:{$1$}] {};
\draw (g1)--(g2)--(g5);
\draw[transform canvas={xshift=-2pt}] (g2)--(g4);
\draw[transform canvas={xshift=2pt}] (g2)--(g4);
\end{tikzpicture}}$ & 
$\renewcommand{\arraystretch}{1.5}
\begin{array}{c}
\textcolor{blue}{SU(2)^2} \times SU(2) \\
\downarrow \\
\textcolor{blue}{SO(5)} \times SU(2)
\end{array}$ \\
\hline

$\begin{array}{c}
	a_k\\
	\\
	k\geq 2
	\end{array}$ &
$\raisebox{-.5\height}{\begin{tikzpicture}[x=1cm,y=.8cm]
\node (g1) at (-2,0) [gauge,label=below:{$1$}] {};
\node (g2) at (-1,0) [gauger,label=below:{$2$}] {};
\node (g3) at (0,0) {$\cdots$};
\node (g4) at (1,0) [gauger,label=below:{$2$}] {};
\node (g5) at (2,0) [gauge,label=below:{$1$}] {};
\node (g6) at (0,1) [gauge,label=above:{$2$}] {};
\draw (g1)--(g2)--(g3)--(g4)--(g5);
\draw (g6)--(g2);
\draw (g6)--(g4);
\draw [decorate,decoration={brace,mirror,amplitude=6pt}] (-1,-0.8) --node[below=6pt] {$k$} (1,-0.8);
\end{tikzpicture}}$ & 
$\renewcommand{\arraystretch}{1.5}
\begin{array}{c}
\textcolor{blue}{SU(2)^2} \times SU(2) \times U(k-1) \\
\downarrow \\
\textcolor{blue}{SO(5)} \times SU(2) \times U(k-1)
\end{array}$ \\
\hline

$\begin{array}{c}
	d_k\\
	\\
	k\geq 4
	\end{array}$ &
$\raisebox{-.5\height}{\begin{tikzpicture}[x=1cm,y=.8cm]
\node (g1) at (-3,0) [gauge,label=below:{$1$}] {};
\node (g2) at (-2,0) [gauger,label=below:{$2$}] {};
\node (g3) at (-1,0) [gauge,label=below:{$4$}] {};
\node (g4) at (0,0) {$\cdots$};
\node (g5) at (1,0) [gauge,label=below:{$4$}] {};
\node (g12) at (2,0) [gauger,label=below:{$2$}] {};
\node (g6) at (3,0) [gauge,label=below:{$1$}] {};
\node (g13) at (1,1) [gauge,label=above:{$2$}] {};
\node (g11) at (-1,1) [gauge,label=above:{$2$}] {};
\draw (g1)--(g2)--(g3)--(g4)--(g5)--(g12)--(g6);
\draw (g13)--(g5);
\draw (g11)--(g3);
\draw [decorate,decoration={brace,mirror,amplitude=6pt}] (-1,-0.8) --node[below=6pt] {$k-3$} (1,-0.8);
\end{tikzpicture}}$ & 
$\renewcommand{\arraystretch}{1.5}
\begin{array}{c}
\textcolor{blue}{SU(2)^2} \times U(k) \\
\downarrow \\
\textcolor{blue}{SO(5)} \times U(k)
\end{array}$ \\
\hline

$\begin{array}{c}
	e_6
	\end{array}$ &
$\raisebox{-.5\height}{\begin{tikzpicture}[x=1cm,y=.8cm]
\node (g0) at (-3,0) [gauge,label=below:{$1$}] {};
\node (g1) at (-2,0) [gauger,label=below:{$2$}] {};
\node (g2) at (-1,0) [gauge,label=below:{$4$}] {};
\node (g3) at (0,0) [gauge,label=below:{$6$}] {};
\node (g4) at (1,0) [gauge,label=below:{$4$}] {};
\node (g5) at (2,0) [gauger,label=below:{$2$}] {};
\node (g12) at (3,0) [gauge,label=below:{$1$}] {};
\node (g10) at (0,1) [gauge,label=right:{$4$}] {};
\node (g11) at (0,2) [gauge,label=right:{$2$}] {};
\draw (g0)--(g1)--(g2)--(g3)--(g4)--(g5)--(g12);
\draw (g3)--(g10)--(g11);
\end{tikzpicture}}$ & 
$\renewcommand{\arraystretch}{1.5}
\begin{array}{c}
\textcolor{blue}{SU(2)^2} \times SO(10) \times U(1) \\
\downarrow \\
\textcolor{blue}{SO(5)} \times SO(10) \times U(1)
\end{array}$ \\
\hline

$\begin{array}{c}
	e_7
	\end{array}$ &
$\raisebox{-.5\height}{\begin{tikzpicture}[x=1cm,y=.8cm]
\node (g1) at (-4,0) [gauge,label=below:{$1$}] {};
\node (g2) at (-3,0) [gauger,label=below:{$2$}] {};
\node (g3) at (-2,0) [gauge,label=below:{$4$}] {};
\node (g4) at (-1,0) [gauge,label=below:{$6$}] {};
\node (g13) at (0,0) [gauge,label=below:{$8$}] {};
\node (g5) at (1,0) [gauge,label=below:{$6$}] {};
\node (g12) at (2,0) [gauge,label=below:{$4$}] {};
\node (g6) at (3,0) [gauger,label=below:{$2$}] {};
\node (g7) at (4,0) [gauge,label=below:{$1$}] {};
\node (g10) at (0,1) [gauge,label=right:{$4$}] {};
\draw (g1)--(g2)--(g3)--(g4)--(g13)--(g5)--(g12)--(g6)--(g7);
\draw (g13)--(g10);
\end{tikzpicture}}$ & 
$\renewcommand{\arraystretch}{1.5}
\begin{array}{c}
\textcolor{blue}{SU(2)^2} \times E_6 \times U(1) \\
\downarrow \\
\textcolor{blue}{SO(5)} \times E_6 \times U(1)
\end{array}$ \\
\hline

\end{tabular}
 \caption{Quivers resulting from adding all possible elementary slices to n.min $B_2$ (\ref{cnodesnminB2}) to absorb $a$.}
    \label{tab:addition nmin B2}
\end{table}

\noindent In the case of $n \geq 3$ there is only one $b$-node, so the above problem does not arise. Here $c$ is still an empty node, and so the requirement in Step $1$ of the quiver addition algorithm in Section \ref{sec: quiver addition} that the added slice must contain the nodes in (\ref{cnodeemptynminBn}) connected to and including $c$ becomes trivial, and so we may add any affine slice. The results of adding these slices are given in Tables \ref{tab:lhs sl addition nmin Bn} and \ref{tab:lhs nsl addition nmin Bn}.

\begin{table}[hbt!]
    \centering
    
 \hspace*{-2.5cm}\begin{tabular}{|c|c|c|} \hline
 
Added Slice & Quiver & Global Symmetry \\

\hline

$\begin{array}{c}
	a_k\\
	\\
	k\geq 1, n \geq 3
	\end{array}$ &
$\raisebox{-.5\height}{\begin{tikzpicture}[x=1cm,y=.8cm]
\node (g1) at (-3,0) [gauge,label=below:{$2$}] {};
\node (g2) at (-2,0) {$\cdots$};
\node (g3) at (-1,0) [gauger,label=below:{$2$}] {};
\node (g4) at (0,0) [gauge,label=below:{$2$}] {};
\node (g5) at (1,0) {$\cdots$};
\node (g6) at (2,0) [gauge,label=below:{$2$}] {};
\node (g7) at (2.8,0.8) [gauge,label=right:{$1$}] {};
\node (g8) at (2.8,-0.8) [gauge,label=right:{$1$}] {};
\node (g9) at (-2,1) [gauge,label=above:{$2$}] {};
\draw (g1)--(g2)--(g3)--(g4)--(g5)--(g6)--(g7);
\draw (g6)--(g8);
\draw (g1)--(g9);
\draw (g3)--(g9);
\draw [decorate,decoration={brace,mirror,amplitude=6pt}] (-3,-0.8) --node[below=6pt] {$k$} (-1,-0.8);
\draw [decorate,decoration={brace,mirror,amplitude=6pt}] (0,-0.8) --node[below=6pt] {$n-2$} (2,-0.8);
\end{tikzpicture}}$ & 
$\renewcommand{\arraystretch}{1.5}
\begin{array}{c}
SU(k+1) \times \textcolor{blue}{SO(2n)}  \\
\downarrow \\
SU(k+1) \times \textcolor{blue}{SO(2n+1)}
\end{array}$ \\
\hline

$\begin{array}{c}
	d_k\\
	\\
	k\geq 4, n=k+1
	\end{array}$ &
$\raisebox{-.5\height}{\begin{tikzpicture}[x=1cm,y=.8cm]
\node (g1) at (-4,0) [gauge,label=below:{$2$}] {};
\node (g2) at (-3,0) [gauge,label=below:{$4$}] {};
\node (g3) at (-2,0) {$\cdots$};
\node (g4) at (-1,0) [gauge,label=below:{$4$}] {};
\node (g5) at (0,0) [gauger,label=below:{$2$}] {};
\node (g12) at (1,0) [gauge,label=below:{$2$}] {};
\node (g6) at (2,0) {$\cdots$};
\node (g7) at (3,0) [gauge,label=below:{$2$}] {};
\node (g8) at (3.8,0.8) [gauge,label=right:{$1$}] {};
\node (g9) at (3.8,-0.8) [gauge,label=right:{$1$}] {};
\node (g10) at (-3,1) [gauge,label=above:{$2$}] {};
\node (g11) at (-1,1) [gauge,label=above:{$2$}] {};
\draw (g1)--(g2)--(g3)--(g4)--(g5)--(g12)--(g6)--(g7)--(g8);
\draw (g7)--(g9);
\draw (g10)--(g2);
\draw (g11)--(g4);
\draw [decorate,decoration={brace,mirror,amplitude=6pt}] (-3,-0.8) --node[below=6pt] {$k-3$} (-1,-0.8);
\draw [decorate,decoration={brace,mirror,amplitude=6pt}] (1,-0.8) --node[below=6pt] {$n-2$} (3,-0.8);
\end{tikzpicture}}$ & 
$\renewcommand{\arraystretch}{1.5}
\begin{array}{c}
SO(2k) \times \textcolor{blue}{SO(2n)}  \\
\downarrow \\
SO(2k) \times \textcolor{blue}{SO(2n+1)}
\end{array}$ \\
\hline

$\begin{array}{c}
	e_6\\
	\\
	n \geq 3
	\end{array}$ &
$\raisebox{-.5\height}{\begin{tikzpicture}[x=1cm,y=.8cm]
\node (g1) at (-4,0) [gauge,label=below:{$2$}] {};
\node (g2) at (-3,0) [gauge,label=below:{$4$}] {};
\node (g3) at (-2,0) [gauge,label=below:{$6$}] {};
\node (g4) at (-1,0) [gauge,label=below:{$4$}] {};
\node (g5) at (0,0) [gauger,label=below:{$2$}] {};
\node (g12) at (1,0) [gauge,label=below:{$2$}] {};
\node (g6) at (2,0) {$\cdots$};
\node (g7) at (3,0) [gauge,label=below:{$2$}] {};
\node (g8) at (3.8,0.8) [gauge,label=right:{$1$}] {};
\node (g9) at (3.8,-0.8) [gauge,label=right:{$1$}] {};
\node (g10) at (-2,1) [gauge,label=right:{$4$}] {};
\node (g11) at (-2,2) [gauge,label=right:{$2$}] {};
\draw (g1)--(g2)--(g3)--(g4)--(g5)--(g12)--(g6)--(g7)--(g8);
\draw (g7)--(g9);
\draw (g3)--(g10)--(g11);
\draw [decorate,decoration={brace,mirror,amplitude=6pt}] (1,-0.8) --node[below=6pt] {$n-2$} (3,-0.8);
\end{tikzpicture}}$ & 
$\renewcommand{\arraystretch}{1.5}
\begin{array}{c}
E_6 \times \textcolor{blue}{SO(2n)}  \\
\downarrow \\
E_6 \times \textcolor{blue}{SO(2n+1)}
\end{array}$ \\
\hline

$\begin{array}{c}
	e_7\\
	n \geq 3
	\end{array}$ &
$\raisebox{-.5\height}{\begin{tikzpicture}[x=1cm,y=.8cm]
\node (g1) at (-5,0) [gauge,label=below:{$2$}] {};
\node (g2) at (-4,0) [gauge,label=below:{$4$}] {};
\node (g3) at (-3,0) [gauge,label=below:{$6$}] {};
\node (g4) at (-2,0) [gauge,label=below:{$8$}] {};
\node (g13) at (-1,0) [gauge,label=below:{$6$}] {};
\node (g5) at (-0,0) [gauge,label=below:{$4$}] {};
\node (g12) at (1,0) [gauger,label=below:{$2$}] {};
\node (g6) at (2,0) [gauge,label=below:{$2$}] {};
\node (g7) at (3,0) {$\cdots$};
\node (g14) at (4,0) [gauge,label=below:{$2$}] {};
\node (g8) at (4.8,0.8) [gauge,label=right:{$1$}] {};
\node (g9) at (4.8,-0.8) [gauge,label=right:{$1$}] {};
\node (g10) at (-2,1) [gauge,label=right:{$4$}] {};
\draw (g1)--(g2)--(g3)--(g4)--(g13)--(g5)--(g12)--(g6)--(g7)--(g14)--(g8);
\draw (g14)--(g9);
\draw (g4)--(g10);
\draw [decorate,decoration={brace,mirror,amplitude=6pt}] (2,-0.8) --node[below=6pt] {$n-2$} (4,-0.8);
\end{tikzpicture}}$ & 
$\renewcommand{\arraystretch}{1.5}
\begin{array}{c}
E_7 \times \textcolor{blue}{SO(2n)}  \\
\downarrow \\
E_7 \times \textcolor{blue}{SO(2n+1)}
\end{array}$ \\
\hline

$\begin{array}{c}
	e_8\\
	n \geq 3
	\end{array}$ &
$\raisebox{-.5\height}{\begin{tikzpicture}[x=1cm,y=.8cm]
\node (g1) at (-5.5,0) [gauge,label=below:{$4$}] {};
\node (g2) at (-4.5,0) [gauge,label=below:{$8$}] {};
\node (g3) at (-3.5,0) [gauge,label=below:{$12$}] {};
\node (g4) at (-2.5,0) [gauge,label=below:{$10$}] {};
\node (g13) at (-1.5,0) [gauge,label=below:{$8$}] {};
\node (g5) at (-0.5,0) [gauge,label=below:{$6$}] {};
\node (g15) at (0.5,0) [gauge,label=below:{$4$}] {};
\node (g12) at (1.5,0) [gauger,label=below:{$2$}] {};
\node (g6) at (2.5,0) [gauge,label=below:{$2$}] {};
\node (g7) at (3.5,0) {$\cdots$};
\node (g14) at (4.5,0) [gauge,label=below:{$2$}] {};
\node (g8) at (5.3,0.8) [gauge,label=right:{$1$}] {};
\node (g9) at (5.3,-0.8) [gauge,label=right:{$1$}] {};
\node (g10) at (-3.5,1) [gauge,label=right:{$6$}] {};
\draw (g1)--(g2)--(g3)--(g4)--(g13)--(g5)--(g15)--(g12)--(g6)--(g7)--(g14)--(g8);
\draw (g14)--(g9);
\draw (g3)--(g10);
\draw [decorate,decoration={brace,mirror,amplitude=6pt}] (2.5,-0.8) --node[below=6pt] {$n-2$} (4.5,-0.8);
\end{tikzpicture}}$ & 
$\renewcommand{\arraystretch}{1.5}
\begin{array}{c}
E_8 \times \textcolor{blue}{SO(2n)}  \\
\downarrow \\
E_8 \times \textcolor{blue}{SO(2n+1)}
\end{array}$ \\
\hline

\end{tabular}
 \caption{Quivers resulting from adding all possible simply laced elementary slices to the empty nodes of n.min $B_n$ (\ref{nminBn}) to absorb $a$, $n \geq 3$.}
    \label{tab:lhs sl addition nmin Bn}
\end{table}

\begin{table}[htbp!]
    \centering
    
 \hspace*{-0.75cm}\begin{tabular}{|c|c|c|} \hline
 
Added Slice & Quiver & Global Symmetry \\

\hline

$\begin{array}{c}
	b_k\\
	\\
	k\geq 3, n \geq 3
	\end{array}$ &
$\raisebox{-.5\height}{\begin{tikzpicture}[x=1cm,y=.8cm]
\node (g1) at (-4,0) [gauge,label=below:{$2$}] {};
\node (g2) at (-3,0) [gauge,label=below:{$4$}] {};
\node (g3) at (-2,0) {$\cdots$};
\node (g4) at (-1,0) [gauge,label=below:{$4$}] {};
\node (g5) at (0,0) [gauger,label=below:{$2$}] {};
\node (g12) at (1,0) [gauge,label=below:{$2$}] {};
\node (g6) at (2,0) {$\cdots$};
\node (g7) at (3,0) [gauge,label=below:{$2$}] {};
\node (g8) at (3.8,0.8) [gauge,label=right:{$1$}] {};
\node (g9) at (3.8,-0.8) [gauge,label=right:{$1$}] {};
\node (g11) at (-1,1) [gauge,label=above:{$2$}] {};
\draw (g2)--(g3)--(g4)--(g5)--(g12)--(g6)--(g7)--(g8);
\draw (g7)--(g9);
\draw (g11)--(g4);
\draw[transform canvas={yshift=-1.5pt}] (g1)--(g2);
\draw[transform canvas={yshift=1.5pt}] (g1)--(g2);
\draw (-3.4,0.2)--(-3.6,0)--(-3.4,-0.2);
\draw [decorate,decoration={brace,mirror,amplitude=6pt}] (-3,-0.8) --node[below=6pt] {$k-2$} (-1,-0.8);
\draw [decorate,decoration={brace,mirror,amplitude=6pt}] (1,-0.8) --node[below=6pt] {$n-2$} (3,-0.8);
\end{tikzpicture}}$ & 
$\renewcommand{\arraystretch}{1.5}
\begin{array}{c}
SO(2k+1) \times \textcolor{blue}{SO(2n)}  \\
\downarrow \\
SO(2k+1) \times \textcolor{blue}{SO(2n+1)}
\end{array}$ \\
\hline

$\begin{array}{c}
	c_k\\
	\\
	k\geq 2, n \geq 3
	\end{array}$ &
$\raisebox{-.5\height}{\begin{tikzpicture}[x=1cm,y=.8cm]
\node (g1) at (-4,0) [gauge,label=below:{$2$}] {};
\node (g2) at (-3,0) [gauge,label=below:{$2$}] {};
\node (g3) at (-2,0) {$\cdots$};
\node (g4) at (-1,0) [gauge,label=below:{$2$}] {};
\node (g5) at (0,0) [gauger,label=below:{$2$}] {};
\node (g12) at (1,0) [gauge,label=below:{$2$}] {};
\node (g6) at (2,0) {$\cdots$};
\node (g7) at (3,0) [gauge,label=below:{$2$}] {};
\node (g8) at (3.8,0.8) [gauge,label=right:{$1$}] {};
\node (g9) at (3.8,-0.8) [gauge,label=right:{$1$}] {};
\draw (g2)--(g3)--(g4);
\draw (g5)--(g12)--(g6)--(g7)--(g8);
\draw (g7)--(g9);
\draw[transform canvas={yshift=-1.5pt}] (g1)--(g2);
\draw[transform canvas={yshift=1.5pt}] (g1)--(g2);
\draw (-3.6,0.2)--(-3.4,0)--(-3.6,-0.2);
\draw[transform canvas={yshift=-1.5pt}] (g4)--(g5);
\draw[transform canvas={yshift=1.5pt}] (g4)--(g5);
\draw (-0.4,0.2)--(-0.6,0)--(-0.4,-0.2);
\draw [decorate,decoration={brace,mirror,amplitude=6pt}] (-3,-0.8) --node[below=6pt] {$k-1$} (-1,-0.8);
\draw [decorate,decoration={brace,mirror,amplitude=6pt}] (1,-0.8) --node[below=6pt] {$n-2$} (3,-0.8);
\end{tikzpicture}}$ & 
$\renewcommand{\arraystretch}{1.5}
\begin{array}{c}
Sp(k) \times \textcolor{blue}{SO(2n)}  \\
\downarrow \\
Sp(k) \times \textcolor{blue}{SO(2n+1)}
\end{array}$ \\
\hline

$\begin{array}{c}
	f_4\\
	\\
	n \geq 3
	\end{array}$ &
$\raisebox{-.5\height}{\begin{tikzpicture}[x=1cm,y=.8cm]
\node (g1) at (-4,0) [gauge,label=below:{$2$}] {};
\node (g2) at (-3,0) [gauge,label=below:{$4$}] {};
\node (g3) at (-2,0) [gauge,label=below:{$6$}] {};
\node (g4) at (-1,0) [gauge,label=below:{$4$}] {};
\node (g5) at (0,0) [gauger,label=below:{$2$}] {};
\node (g12) at (1,0) [gauge,label=below:{$2$}] {};
\node (g6) at (2,0) {$\cdots$};
\node (g7) at (3,0) [gauge,label=below:{$2$}] {};
\node (g8) at (3.8,0.8) [gauge,label=right:{$1$}] {};
\node (g9) at (3.8,-0.8) [gauge,label=right:{$1$}] {};
\draw (g1)--(g2);
\draw (g3)--(g4)--(g5)--(g12)--(g6)--(g7)--(g8);
\draw (g7)--(g9);
\draw[transform canvas={yshift=-1.5pt}] (g3)--(g2);
\draw[transform canvas={yshift=1.5pt}] (g3)--(g2);
\draw (-2.4,0.2)--(-2.6,0)--(-2.4,-0.2);
\draw [decorate,decoration={brace,mirror,amplitude=6pt}] (1,-0.8) --node[below=6pt] {$n-2$} (3,-0.8);
\end{tikzpicture}}$ & 
$\renewcommand{\arraystretch}{1.5}
\begin{array}{c}
F_4 \times \textcolor{blue}{SO(2n)}  \\
\downarrow \\
F_4 \times \textcolor{blue}{SO(2n+1)}
\end{array}$ \\
\hline

$\begin{array}{c}
	g_2\\
	\\
	n \geq 3
	\end{array}$ &
$\raisebox{-.5\height}{\begin{tikzpicture}[x=1cm,y=.8cm]
\node (g3) at (-3,0) [gauge,label=below:{$2$}] {};
\node (g4) at (-2,0) [gauge,label=below:{$4$}] {};
\node (g5) at (-1,0) [gauger,label=below:{$2$}] {};
\node (g12) at (0,0) [gauge,label=below:{$2$}] {};
\node (g6) at (1,0) {$\cdots$};
\node (g7) at (2,0) [gauge,label=below:{$2$}] {};
\node (g8) at (2.8,0.8) [gauge,label=right:{$1$}] {};
\node (g9) at (2.8,-0.8) [gauge,label=right:{$1$}] {};
\draw (g4)--(g5)--(g12)--(g6)--(g7)--(g8);
\draw (g7)--(g9);
\draw[transform canvas={yshift=-2pt}] (g3)--(g4);
    \draw[transform canvas={yshift=0pt}] (g3)--(g4);
    \draw[transform canvas={yshift=2pt}] (g3)--(g4);
\draw (-2.4,0.2)--(-2.6,0)--(-2.4,-0.2);
\draw [decorate,decoration={brace,mirror,amplitude=6pt}] (0,-0.8) --node[below=6pt] {$n-2$} (2,-0.8);
\end{tikzpicture}}$ & 
$\renewcommand{\arraystretch}{1.5}
\begin{array}{c}
G_2 \times \textcolor{blue}{SO(2n)}  \\
\downarrow \\
G_2 \times \textcolor{blue}{SO(2n+1)}
\end{array}$ \\
\hline

\end{tabular}
 \caption{Quivers resulting from adding all possible non-simply laced elementary slices to the empty nodes of n.min $B_n$ (\ref{nminBn}) to absorb $a$, $n \geq 3$.}
    \label{tab:lhs nsl addition nmin Bn}
\end{table}

\clearpage

\subsection{\texorpdfstring{Enhancement from $SU(n) \times U(1)$ to $SO(2n+1)$}{Enhancement from SU(n) times U(1) to SO(2n+1)}} 
\label{A type to B type section}
The results of Section \ref{sec:dn to bn} exhaust all possibilities of adding to (\ref{nminBn}) to immediately absorb $a$. However we could also have first performed a ``different sort" of quiver addition to absorb one of the two rank one nodes of (\ref{nminBn}), and \textit{then} have added to the resulting quiver from this to absorb $a$. This ``different" type of quiver addition does not quite follow the algorithm of Section \ref{sec: quiver addition}: it is actually a little simpler. Here, instead of ``undoing" multiple same slice quiver subtractions that result in an adjoint hypermultiplet of increasing rank, we instead ``undo" just a single quiver subtraction. We will not list this slightly different algorithm here, but for those interested it is derivable from reverse engineering the basic quiver subtraction process that is given in following Steps $1$ and $2)a)$ of Appendix \ref{app:quiver subtraction} with $E_s$ being empty. An example of the result of such an addition will also be given below. A key feature of this quiver addition worth mentioning is that, as before, in order for it to work we must be adding a balanced elementary slice via a long rank one node, restricting us to adding only the affine slices (whose magnetic quivers live in Table \ref{tab:affine Dds}). Such an addition is of interest because, if the quiver addition algorithm of Section \ref{sec: quiver addition} is then used to absorb their remaining adjoint hypermultiplet, we find quivers which experience a different type of BGS enhancement:
\begin{equation} \label{eq:A to B enhancement}
    \prod_{i} G_i \times \textcolor{blue}{SU(n)} \times \textcolor{blue}{U(1)} \rightarrow \prod_{i} G_i \times \textcolor{blue}{SO(2n+1)}
\end{equation}
for some semi-simple Lie groups $G_i$. \\

\noindent Let's see an example of how this works. We choose to take the case of adding an $A_1 = a_1$ slice to absorb one of the rank one nodes of (\ref{nminBn}). The resulting quiver will be
\begin{equation} \label{nmin Bn add A1}
     \begin{tikzpicture}[x=1cm,y=.8cm]
    \node (g1) at (-1.5,1.1) [gauge,label=right:{$2$},label={[red]left:{$a$}}] {};
    \node (g2) at (-1.5,0) [gauge,label=below:{$2$},label={[red]left:{$b$}}] {};
    \node (g3) at (-0.5,0) {$\cdots$};
    \node (g4) at (0.5,0) [gauge,label=below:{$2$}] {};
    \node (g5) at (1.3,0) [gauger,label=below:{$1$}] {};
    \node (g6) at (1.8,0) {$.$};
    \draw (g1)--(g2)--(g3)--(g4);
    \draw[transform canvas = {yshift=-1pt}] (g4)--(g5);
    \draw[transform canvas = {yshift=1pt}] (g4)--(g5);
    \draw (g1) to [out=45,in=135,looseness=10] (g1);
    \draw [decorate,decoration={brace,mirror,amplitude=6pt}] (-1.5,-0.8) --node[below=6pt] {$n-1$} (0.5,-0.8);
    \end{tikzpicture}
\end{equation}
One can verify this is a valid addition by subtracting $A_1$ (see Appendix \ref{app:quiver subtraction} for the quiver subtraction algorithm) from the two rightmost nodes of (\ref{nmin Bn add A1}) to find (\ref{nminBn}).\\

\noindent Working from (\ref{nmin Bn add A1}) we can then perform the adjoint-hypermultiplet-absorbing quiver addition of Section \ref{sec: quiver addition}, as we did before on (\ref{nminBn}) in Section \ref{sec:dn to bn}, to absorb $a$. At first glance, it seems as though we can add to either existing or empty nodes to absorb $a$, but actually doing the former doesn't yield the enhancement above. The reason for this is that the additions possible on the existing nodes are restricted to the $n=3$ case, to which we may only add an $A_1$ to unbalance the node $b$ in the necessary way to absorb $a$:
\begin{equation}
    \begin{tikzpicture}[x=1cm,y=.8cm]
    \node (g2) at (-1,0) [gauge,label=below:{$2$}] {};
    \node (g3) at (0,0) [gauge,label=below:{$4$}] {};
    \node (g4) at (1,0) [gauger,label=below:{$3$}] {};
    \node (g5) at (-1,-1) {$\textcolor{red}{b}$};
    \node (g7) at (0,-1) {$\textcolor{red}{c}$};
    \node (g6) at (1.5,0) {$.$};
    \draw (g2)--(g3);
    \draw[transform canvas = {yshift=-1pt}] (g4)--(g3);
    \draw[transform canvas = {yshift=1pt}] (g4)--(g3);
    \end{tikzpicture}
\end{equation}
But if we perform quiver subtraction on this quiver to find the Hasse diagram, we will find ourselves performing three $A_1$ subtractions before arriving at the magnetic quiver for the sub-regular nilpotent orbit of $G_2$, the second quiver in Table \ref{tab:adj hyper Qs}. As a result, here we actually expect a factor of the BGS to be enhanced to $G_2$, rather than $SO(2n+1)$. This is indeed confirmed upon computation of the Hilbert series, and actually this quiver appears later in our classification, in Table \ref{tab:RHS sl sr G2} of Section \ref{sec:enhancement to G2}. This result is actually more general: 
if $Q_{g_k}$ is obtained by adding some affine slice $g_k$ to (\ref{nminBn}) to absorb one of the rank one nodes, as shown above for $g_k=A_1$, then any quiver derived from this by using quiver addition on its existing nodes to absorb $a$ will exhibit an EGS of type $G_2$ as opposed to $SO(2n+1)$. This is because the only way to add to the existing nodes of $Q_{g_k}$ to absorb $a$ is by adding $g_k$ when $n=3$, so upon subtraction, we will be subtracting $g_k$ \textit{three} times, rather than two, and so an adjoint hypermultiplet of rank three will arise from rebalancing. This explains why we are left with the sub-regular nilpotent orbit of $G_2$, and why we have enhancement to $G_2$ symmetry from an apparent $SU(3)$.\\

\noindent Although adding to existing nodes gives us nothing new, adding to the empty nodes on the left hand side of $b$ in (\ref{nmin Bn add A1}) does yield quivers who experience the enhancement (\ref{eq:A to B enhancement}). Again the slices that can be validly added are affine slices via a rank one long node. The resulting quivers for this process are listed in Tables \ref{tab:LHS sl nmin Bn add A1} and \ref{tab:LHS nsl nmin Bn add A1}. \\

\begin{table}[htbp!]
    \centering
    
 \hspace*{-2.25cm}\begin{tabular}{|c|c|c|} \hline
 
Added Slice & Quiver & Global Symmetry \\

\hline

$\begin{array}{c}
	a_k\\
	\\
	k \geq 1
	\end{array}$ &
$\raisebox{-.5\height}{\begin{tikzpicture}[x=1cm,y=.8cm]
    \node (g7) at (-3,0) [gauge,label=below:{$2$}] {};
    \node (g6) at (-2,0) {$\cdots$};
    \node (g1) at (-1,0) [gauger,label=below:{$2$}] {};
    \node (g8) at (-2,1) [gauge,label=above:{$2$}] {};
    \node (g2) at (0,0) [gauge,label=below:{$2$}] {};
    \node (g3) at (1,0) {$\cdots$};
    \node (g4) at (2,0) [gauge,label=below:{$2$}] {};
    \node (g5) at (3,0) [gauger,label=below:{$1$}] {};
    \draw (g8)--(g7)--(g6)--(g1)--(g2)--(g3)--(g4);
    \draw (g8)--(g1);
    \draw[transform canvas = {yshift=-1pt}] (g4)--(g5);
    \draw[transform canvas = {yshift=1pt}] (g4)--(g5);
    \draw [decorate,decoration={brace,mirror,amplitude=6pt}] (0,-0.8) --node[below=6pt] {$n-1$} (2,-0.8);
    \draw [decorate,decoration={brace,mirror,amplitude=6pt}] (-3,-0.8) --node[below=6pt] {$k$} (-1,-0.8);
    \end{tikzpicture}}$ & 
$\renewcommand{\arraystretch}{1.5}
\begin{array}{c}
SU(k+1) \times \textcolor{blue}{SU(n) \times U(1)} \\
\downarrow \\
SU(k+1) \times \textcolor{blue}{SO(2n+1)}
\end{array}$ \\

\hline

$\begin{array}{c}
	d_k\\
	\\
	k \geq 4
	\end{array}$ &
$\raisebox{-.5\height}{\begin{tikzpicture}[x=1cm,y=.8cm]
    \node (g9) at (-4,0) [gauge,label=below:{$2$}] {};
    \node (g7) at (-3,0) [gauge,label=below:{$4$}] {};
    \node (g10) at (-3,1) [gauge,label=above:{$2$}] {};
    \node (g6) at (-2,0) {$\cdots$};
    \node (g1) at (-1,0) [gauge,label=below:{$4$}] {};
    \node (g11) at (-1,1) [gauge,label=above:{$2$}] {};
    \node (g8) at (0,0) [gauger,label=below:{$2$}] {};
    \node (g2) at (1,0) [gauge,label=below:{$2$}] {};
    \node (g3) at (2,0) {$\cdots$};
    \node (g4) at (3,0) [gauge,label=below:{$2$}] {};
    \node (g5) at (4,0) [gauger,label=below:{$1$}] {};
    \draw (g9)--(g7)--(g6)--(g1)--(g8)--(g2)--(g3)--(g4);
    \draw (g10)--(g7);
    \draw (g11)--(g1);
    \draw[transform canvas = {yshift=-1pt}] (g4)--(g5);
    \draw[transform canvas = {yshift=1pt}] (g4)--(g5);
    \draw [decorate,decoration={brace,mirror,amplitude=6pt}] (1,-0.8) --node[below=6pt] {$n-1$} (3,-0.8);
    \draw [decorate,decoration={brace,mirror,amplitude=6pt}] (-3,-0.8) --node[below=6pt] {$k-3$} (-1,-0.8);
    \end{tikzpicture}}$ & 
$\renewcommand{\arraystretch}{1.5}
\begin{array}{c}
SO(2k) \times \textcolor{blue}{SU(n) \times U(1)} \\
\downarrow \\
SO(2k) \times \textcolor{blue}{SO(2n+1)}
\end{array}$ \\

\hline

$\begin{array}{c}
	e_6
	\end{array}$ &
$\raisebox{-.5\height}{\begin{tikzpicture}[x=1cm,y=.8cm]
    \node (g9) at (-4,0) [gauge,label=below:{$2$}] {};
    \node (g7) at (-3,0) [gauge,label=below:{$4$}] {};
    \node (g10) at (-2,1) [gauge,label=left:{$4$}] {};
    \node (g6) at (-2,0) [gauge,label=below:{$6$}] {};
    \node (g1) at (-1,0) [gauge,label=below:{$4$}] {};
    \node (g11) at (-2,2) [gauge,label=left:{$2$}] {};
    \node (g8) at (0,0) [gauger,label=below:{$2$}] {};
    \node (g2) at (1,0) [gauge,label=below:{$2$}] {};
    \node (g3) at (2,0) {$\cdots$};
    \node (g4) at (3,0) [gauge,label=below:{$2$}] {};
    \node (g5) at (4,0) [gauger,label=below:{$1$}] {};
    \draw (g9)--(g7)--(g6)--(g1)--(g8)--(g2)--(g3)--(g4);
    \draw (g11)--(g10)--(g6);
    \draw[transform canvas = {yshift=-1pt}] (g4)--(g5);
    \draw[transform canvas = {yshift=1pt}] (g4)--(g5);
    \draw [decorate,decoration={brace,mirror,amplitude=6pt}] (1,-0.8) --node[below=6pt] {$n-1$} (3,-0.8);
    \end{tikzpicture}}$ & 
$\renewcommand{\arraystretch}{1.5}
\begin{array}{c}
E_6 \times \textcolor{blue}{SU(n) \times U(1)} \\
\downarrow \\
E_6 \times \textcolor{blue}{SO(2n+1)}
\end{array}$ \\

\hline

$\begin{array}{c}
	e_7
	\end{array}$ &
$\raisebox{-.5\height}{\begin{tikzpicture}[x=1cm,y=.8cm]
    \node (g12) at (-5,0) [gauge,label=below:{$2$}] {};
    \node (g9) at (-4,0) [gauge,label=below:{$4$}] {};
    \node (g7) at (-3,0) [gauge,label=below:{$6$}] {};
    \node (g10) at (-2,1) [gauge,label=above:{$4$}] {};
    \node (g6) at (-2,0) [gauge,label=below:{$8$}] {};
    \node (g1) at (-1,0) [gauge,label=below:{$6$}] {};
    \node (g11) at (0,0) [gauge,label=below:{$4$}] {};
    \node (g8) at (1,0) [gauger,label=below:{$2$}] {};
    \node (g2) at (2,0) [gauge,label=below:{$2$}] {};
    \node (g3) at (3,0) {$\cdots$};
    \node (g4) at (4,0) [gauge,label=below:{$2$}] {};
    \node (g5) at (5,0) [gauger,label=below:{$1$}] {};
    \draw (g12)--(g9)--(g7)--(g6)--(g1)--(g11)--(g8)--(g2)--(g3)--(g4);
    \draw (g10)--(g6);
    \draw[transform canvas = {yshift=-1pt}] (g4)--(g5);
    \draw[transform canvas = {yshift=1pt}] (g4)--(g5);
    \draw [decorate,decoration={brace,mirror,amplitude=6pt}] (2,-0.8) --node[below=6pt] {$n-1$} (4,-0.8);
    \end{tikzpicture}}$ & 
$\renewcommand{\arraystretch}{1.5}
\begin{array}{c}
E_7 \times \textcolor{blue}{SU(n) \times U(1)} \\
\downarrow \\
E_7 \times \textcolor{blue}{SO(2n+1)}
\end{array}$ \\

\hline

$\begin{array}{c}
	e_8
	\end{array}$ &
$\raisebox{-.5\height}{\begin{tikzpicture}[x=1cm,y=.8cm]
    \node (g12) at (-5.5,0) [gauge,label=below:{$4$}] {};
    \node (g9) at (-4.5,0) [gauge,label=below:{$8$}] {};
    \node (g7) at (-3.5,0) [gauge,label=below:{$12$}] {};
    \node (g10) at (-3.5,1) [gauge,label=above:{$6$}] {};
    \node (g6) at (-2.5,0) [gauge,label=below:{$10$}] {};
    \node (g1) at (-1.5,0) [gauge,label=below:{$8$}] {};
    \node (g11) at (-0.5,0) [gauge,label=below:{$6$}] {};
    \node (g13) at (0.5,0) [gauge,label=below:{$4$}] {};
    \node (g8) at (1.5,0) [gauger,label=below:{$2$}] {};
    \node (g2) at (2.5,0) [gauge,label=below:{$2$}] {};
    \node (g3) at (3.5,0) {$\cdots$};
    \node (g4) at (4.5,0) [gauge,label=below:{$2$}] {};
    \node (g5) at (5.5,0) [gauger,label=below:{$1$}] {};
    \draw (g12)--(g9)--(g7)--(g6)--(g1)--(g11)--(g13)--(g8)--(g2)--(g3)--(g4);
    \draw (g10)--(g7);
    \draw[transform canvas = {yshift=-1pt}] (g4)--(g5);
    \draw[transform canvas = {yshift=1pt}] (g4)--(g5);
    \draw [decorate,decoration={brace,mirror,amplitude=6pt}] (2.5,-0.8) --node[below=6pt] {$n-1$} (4.5,-0.8);
    \end{tikzpicture}}$ & 
$\renewcommand{\arraystretch}{1.5}
\begin{array}{c}
E_8 \times \textcolor{blue}{SU(n) \times U(1)} \\
\downarrow \\
E_8 \times \textcolor{blue}{SO(2n+1)}
\end{array}$ \\

\hline

\end{tabular}
 \caption{Quivers resulting from adding all possible simply laced elementary slices to the empty nodes of (\ref{nmin Bn add A1}) to absorb $a$.}
    \label{tab:LHS sl nmin Bn add A1}
\end{table}

\begin{table}[htbp!]
    \centering
    
 \hspace*{-0.75cm}\begin{tabular}{|c|c|c|} \hline
 
Added Slice & Quiver & Global Symmetry \\

\hline

$\begin{array}{c}
	b_k\\
	\\
	k \geq 3
	\end{array}$ &
$\raisebox{-.5\height}{\begin{tikzpicture}[x=1cm,y=.8cm]
    \node (g9) at (-4,0) [gauge,label=below:{$2$}] {};
    \node (g7) at (-3,0) [gauge,label=below:{$4$}] {};
    \node (g6) at (-2,0) {$\cdots$};
    \node (g1) at (-1,0) [gauge,label=below:{$4$}] {};
    \node (g10) at (-1,1) [gauge,label=above:{$2$}] {};
    \node (g8) at (0,0) [gauger,label=below:{$2$}] {};
    \node (g2) at (1,0) [gauge,label=below:{$2$}] {};
    \node (g3) at (2,0) {$\cdots$};
    \node (g4) at (3,0) [gauge,label=below:{$2$}] {};
    \node (g5) at (4,0) [gauger,label=below:{$1$}] {};
    \draw (g7)--(g6)--(g1)--(g8)--(g2)--(g3)--(g4);
    \draw (g10)--(g1);
    \draw[transform canvas = {yshift=-1pt}] (g9)--(g7);
    \draw[transform canvas = {yshift=1pt}] (g9)--(g7);
    \draw (-3.4,0.2)--(-3.6,0)--(-3.4,-0.2);
    \draw[transform canvas = {yshift=-1pt}] (g4)--(g5);
    \draw[transform canvas = {yshift=1pt}] (g4)--(g5);
    \draw [decorate,decoration={brace,mirror,amplitude=6pt}] (1,-0.8) --node[below=6pt] {$n-1$} (3,-0.8);
    \draw [decorate,decoration={brace,mirror,amplitude=6pt}] (-3,-0.8) --node[below=6pt] {$k-2$} (-1,-0.8);
    \end{tikzpicture}}$ & 
$\renewcommand{\arraystretch}{1.5}
\begin{array}{c}
SO(2k+1) \times \textcolor{blue}{SU(n) \times U(1)} \\
\downarrow \\
SO(2k+1) \times \textcolor{blue}{SO(2n+1)}
\end{array}$ \\

\hline

$\begin{array}{c}
	c_k\\
	\\
	k \geq 2
	\end{array}$ &
$\raisebox{-.5\height}{\begin{tikzpicture}[x=1cm,y=.8cm]
    \node (g9) at (-4,0) [gauge,label=below:{$2$}] {};
    \node (g7) at (-3,0) [gauge,label=below:{$2$}] {};
    \node (g6) at (-2,0) {$\cdots$};
    \node (g1) at (-1,0) [gauge,label=below:{$2$}] {};
    \node (g8) at (0,0) [gauger,label=below:{$2$}] {};
    \node (g2) at (1,0) [gauge,label=below:{$2$}] {};
    \node (g3) at (2,0) {$\cdots$};
    \node (g4) at (3,0) [gauge,label=below:{$2$}] {};
    \node (g5) at (4,0) [gauger,label=below:{$1$}] {};
    \draw (g7)--(g6)--(g1);
    \draw (g8)--(g2)--(g3)--(g4);
    \draw[transform canvas = {yshift=-1pt}] (g9)--(g7);
    \draw[transform canvas = {yshift=1pt}] (g9)--(g7);
    \draw (-3.6,0.2)--(-3.4,0)--(-3.6,-0.2);
    \draw[transform canvas = {yshift=-1pt}] (g1)--(g8);
    \draw[transform canvas = {yshift=1pt}] (g1)--(g8);
    \draw (-0.4,0.2)--(-0.6,0)--(-0.4,-0.2);
    \draw[transform canvas = {yshift=-1pt}] (g4)--(g5);
    \draw[transform canvas = {yshift=1pt}] (g4)--(g5);
    \draw [decorate,decoration={brace,mirror,amplitude=6pt}] (1,-0.8) --node[below=6pt] {$n-1$} (3,-0.8);
    \draw [decorate,decoration={brace,mirror,amplitude=6pt}] (-3,-0.8) --node[below=6pt] {$k-1$} (-1,-0.8);
    \end{tikzpicture}}$ & 
$\renewcommand{\arraystretch}{1.5}
\begin{array}{c}
Sp(k) \times \textcolor{blue}{SU(n) \times U(1)} \\
\downarrow \\
Sp(k) \times \textcolor{blue}{SO(2n+1)}
\end{array}$ \\

\hline

$\begin{array}{c}
	f_4
	\end{array}$ &
$\raisebox{-.5\height}{\begin{tikzpicture}[x=1cm,y=.8cm]
    \node (g9) at (-4,0) [gauge,label=below:{$2$}] {};
    \node (g7) at (-3,0) [gauge,label=below:{$4$}] {};
    \node (g6) at (-2,0) [gauge,label=below:{$6$}] {};
    \node (g1) at (-1,0) [gauge,label=below:{$4$}] {};
    \node (g8) at (0,0) [gauger,label=below:{$2$}] {};
    \node (g2) at (1,0) [gauge,label=below:{$2$}] {};
    \node (g3) at (2,0) {$\cdots$};
    \node (g4) at (3,0) [gauge,label=below:{$2$}] {};
    \node (g5) at (4,0) [gauger,label=below:{$1$}] {};
    \draw (g9)--(g7);
    \draw (g6)--(g1)--(g8)--(g2)--(g3)--(g4);
    \draw[transform canvas = {yshift=-1pt}] (g7)--(g6);
    \draw[transform canvas = {yshift=1pt}] (g7)--(g6);
    \draw (-2.4,0.2)--(-2.6,0)--(-2.4,-0.2);
    \draw[transform canvas = {yshift=-1pt}] (g4)--(g5);
    \draw[transform canvas = {yshift=1pt}] (g4)--(g5);
    \draw [decorate,decoration={brace,mirror,amplitude=6pt}] (1,-0.8) --node[below=6pt] {$n-1$} (3,-0.8);
    \end{tikzpicture}}$ & 
$\renewcommand{\arraystretch}{1.5}
\begin{array}{c}
F_4 \times \textcolor{blue}{SU(n) \times U(1)} \\
\downarrow \\
F_4 \times \textcolor{blue}{SO(2n+1)}
\end{array}$ \\

\hline

$\begin{array}{c}
	g_2\\
	\end{array}$ &
$\raisebox{-.5\height}{\begin{tikzpicture}[x=1cm,y=.8cm]
    \node (g7) at (-3,0) [gauge,label=below:{$2$}] {};
    \node (g1) at (-2,0) [gauge,label=below:{$4$}] {};
    \node (g8) at (-1,0) [gauger,label=below:{$2$}] {};
    \node (g2) at (0,0) [gauge,label=below:{$2$}] {};
    \node (g3) at (1,0) {$\cdots$};
    \node (g4) at (2,0) [gauge,label=below:{$2$}] {};
    \node (g5) at (3,0) [gauger,label=below:{$1$}] {};
    \draw (g1)--(g8)--(g2)--(g3)--(g4);
    \draw[transform canvas = {yshift=0pt}] (g7)--(g1);
    \draw[transform canvas = {yshift=2pt}] (g7)--(g1);
    \draw[transform canvas = {yshift=-2pt}] (g7)--(g1);
    \draw (-2.4,0.2)--(-2.6,0)--(-2.4,-0.2);
    \draw[transform canvas = {yshift=-1pt}] (g4)--(g5);
    \draw[transform canvas = {yshift=1pt}] (g4)--(g5);
    \draw [decorate,decoration={brace,mirror,amplitude=6pt}] (0,-0.8) --node[below=6pt] {$n-1$} (2,-0.8);
    \end{tikzpicture}}$ & 
$\renewcommand{\arraystretch}{1.5}
\begin{array}{c}
G_2 \times \textcolor{blue}{SU(n) \times U(1)} \\
\downarrow \\
G_2 \times \textcolor{blue}{SO(2n+1)}
\end{array}$ \\

\hline

\end{tabular}
 \caption{Quivers resulting from adding all possible non-simply laced elementary slice to the empty nodes of (\ref{nmin Bn add A1}) to absorb $a$.}
    \label{tab:LHS nsl nmin Bn add A1}
\end{table}

\noindent Recall that this was just one example of performing a single addition on (\ref{nminBn}) (that of adding an $A_1$) to find quivers with a new type of symmetry enhancement. This enhancement is actually exhibited by infinitely many quivers: those obtained by performing further additions on the quiver (\ref{nmin Bn add A1}), or those constructed by adding a single slice other than $A_1$ to (\ref{nminBn}), and then adding to this and so on. We restrict the list of examples provided to just the $A_1$ case, as we (obviously!) cannot list them all, and these illustrate the enhancement. In conclusion, these infinitely many quivers, combined with those in Tables \ref{tab:adding to RHS nmin Bn}, \ref{tab:sl addition nmin B3}, \ref{tab:nsl addition nmin B3}, \ref{tab:addition nmin B2}, \ref{tab:lhs sl addition nmin Bn}, \ref{tab:lhs nsl addition nmin Bn}, \ref{tab:LHS sl nmin Bn add A1} and \ref{tab:LHS nsl nmin Bn add A1} concludes the families of quivers derived from quiver addition on the next to minimal nilpotent orbit of $B_n$ (\ref{nminBn}) whose global symmetry contains a factor which is enhanced to $SO(2n+1)$ from that predicted by the BGS.

\clearpage

\section{\texorpdfstring{Enhancement to $G_2$}{Enhancement to G2}}
\label{sec:enhancement to G2}
We move on to focus on the second quiver in Table \ref{tab:adj hyper Qs}, the sub-regular nilpotent orbit of $G_2$
\begin{equation}\label{sr G2}
    \begin{tikzpicture}[x=1cm,y=.8cm]
        \node (g0) at (-0.5,1.1) [gauge, label=right:{$3$},label={[red]left:{$a$}}] {};
        \node (g1) at (-0.5,0) [gauge,label=below:{$2$},label={[red]left:{$b$}}] {};
        \node (g2) at (0.5,0) [gauge, label=below:{$1$}] {};
        \node (g3) at (1,0) {$.$};
        \draw (g0)--(g1)--(g2);
        \draw (g0) to [out=45,in=135,looseness=10] (g0);
    \end{tikzpicture}
\end{equation}
The Hasse diagram of this quiver can be inferred from the Hasse diagram of the $G_2$ nilpotent cone:
\begin{equation} \label{HD srG2}
    \begin{tikzpicture}
    \node (1) [hasse] at (0,0) {};
    \node (2) [hasse] at (0,-1) {};
    \node (3) [hasse] at (0,-2) {};
    \node (4) [hasse] at (0,-3) {};
    \node (5) at (0.5,-1.5) {$.$};
    \draw (1) edge [] node[label=left:$d_4$] {} (2);
    \draw (2) edge [] node[label=left:$m$] {} (3);
    \draw (3) edge [] node[label=left:$A_1$] {} (4);
    \end{tikzpicture}
\end{equation}
It involves a non-normal slice $m$, which at present is not fully understood. In all quivers listed in this section, which are a result of adding the same elementary slice three subsequent times to the same node of (\ref{sr G2}), the BGS enhances as
\begin{equation}
    \prod_{i} G_i \times \textcolor{blue}{SU(3)} \rightarrow \prod_{i} G_i \times \textcolor{blue}{G_2},
\end{equation}
for some semi-simple Lie groups $G_i$. The enhancements in global symmetry of the quivers in this section were also found via a different approach in \cite{Mekareeya:2017jgc}.\\

\noindent As before, following the quiver addition algorithm from Section \ref{sec: quiver addition}, the only possible slices we can add to the node $c$ are the affine quivers via a rank one long node. Here, we can see that $c$ could either be the existing rank one node to the right of $b$ in (\ref{sr G2})
\begin{equation}\label{cnode existing sr G2}
    \begin{tikzpicture}[x=1cm,y=.8cm]
        \node (g0) at (-0.5,1.1) [gauge, label=right:{$3$},label={[red]left:{$a$}}] {};
        \node (g1) at (-0.5,0) [gauge,label=below:{$2$},label={[red]left:{$b$}}] {};
        \node (g2) at (0.5,0) [gauge, label=below:{$1$},label={[red]right:{$c$}}] {};
        \node (g3) at (1.25,0) {$,$};
        \draw (g0)--(g1)--(g2);
        \draw (g0) to [out=45,in=135,looseness=10] (g0);
    \end{tikzpicture}
\end{equation}
or the empty node to its left
\begin{equation}\label{cnode empty sr G2}
    \begin{tikzpicture}[x=1cm,y=.8cm]
        \node (g0) at (-0.5,1.1) [gauge, label=right:{$3$},label={[red]left:{$a$}}] {};
        \node (g1) at (-0.5,0) [gauge,label=below:{$2$},label={[red]45:{$b$}}] {};
        \node (g2) at (0.5,0) [gauge, label=below:{$1$}] {};
        \node (g4) at (-1.5,0) [emptygauge,label={[gray]below:{$0$}},label={[red]left:{$c$}}] {};
        \node (g3) at (1,0) {$.$};
        \draw (g0)--(g1)--(g2);
        \draw [dashed, gray] (g4)--(g1);
        \draw (g0) to [out=45,in=135,looseness=10] (g0);
    \end{tikzpicture}
\end{equation}
As before in Section \ref{sec:enhancement to Bn}, we call these ways to absorb $a$ in (\ref{sr G2}) ``adding to existing nodes" or ``adding to empty nodes" respectively. In these cases, since the set of nodes connected to $c$ is either $\{b\}$ or the empty set, the condition that the affine quivers we add must contain $c$ and its connected nodes as a subset is trivial, and so all affine quivers are allowed. The results for these additions are given in Tables \ref{tab:RHS sl sr G2} and \ref{tab:RHS nsl sr G2} in the case where $c$ is taken to be an existing node, as in (\ref{cnode existing sr G2}), and Tables \ref{tab:LHS sl sr G2} and \ref{tab:LHS nsl sr G2} in the case where $c$ is taken to be an empty node, as in (\ref{cnode empty sr G2}). As in Section \ref{sec:enhancement to Bn}, in the cases where the quivers listed depend on a parameter, the EGS has been verified via Hilbert series computation for small values of the parameter. In the cases with no parameter dependence, every case except for that of $e_8$ has been verified.\\

\clearpage

\begin{table}[htbp!]
    \centering
    
 \hspace*{-1cm}\begin{tabular}{|c|c|c|} \hline
 
Added Slice & Quiver & Global Symmetry \\

\hline

$\begin{array}{c}
	a_k\\
	\\
	k\geq 1
	\end{array}$ &
$\raisebox{-.5\height}{\begin{tikzpicture}[x=1cm,y=.8cm]
\node (g1) at (-1,0) [gauge,label=below:{$2$}] {};
\node (g2) at (0,0) [gauge,label=below:{$4$}] {};
\node (g4) at (1,1) [gauger,label=right:{$3$}] {};
\node (g5) at (1,0) {$\vdots$};
\node (g6) at (1,-1) [gauger,label=right:{$3$}] {};
\draw (g1)--(g2)--(g4)--(g5)--(g6)--(g2);
\draw [decorate,decoration={brace,amplitude=6pt}] (1.6,1) --node[right=6pt] {$k$} (1.6,-1);
\end{tikzpicture}}$ & 
$\renewcommand{\arraystretch}{1.5}
\begin{array}{c}
\textcolor{blue}{SU(3)} \times U(1)^l \times SU(k-1) \\
\downarrow \\
\textcolor{blue}{G_2} \times U(1)^l \times SU(k-1),\\
l=
\begin{cases}
    1, & k\geq 2\\
    0, & k=1
\end{cases}
\end{array}$ \\
\hline

$\begin{array}{c}
	d_k\\
	\\
	k\geq 4
	\end{array}$ &
$\raisebox{-.5\height}{\begin{tikzpicture}[x=1cm,y=.8cm]
\node (g0) at (-2.5,0) [gauge,label=below:{$2$}] {};
\node (g1) at (-1.5,0) [gauge,label=below:{$4$}] {};
\node (g2) at (-0.5,0) [gauger,label=below:{$6$}] {};
\node (g3) at (0.5,0) {$\cdots$};
\node (g4) at (1.5,0) [gauge,label=below:{$6$}] {};
\node (g5) at (2.3,0.8) [gauge,label=below:{$3$}] {};
\node (g12) at (2.3,-0.8) [gauge,label=below:{$3$}] {};
\node (g13) at (-0.5,1) [gauge,label=above:{$3$}] {};
\draw (g0)--(g1)--(g2)--(g3)--(g4)--(g5);
\draw (g12)--(g4);
\draw (g13)--(g2);
\draw [decorate,decoration={brace,mirror,amplitude=6pt}] (-0.5,-0.8) --node[below=6pt] {$k-3$} (1.5,-0.8);
\end{tikzpicture}}$ & 
$\renewcommand{\arraystretch}{1.5}
\begin{array}{c}
\textcolor{blue}{SU(3)} \times SU(2) \times SO(2k-4) \\
\downarrow \\
\textcolor{blue}{G_2} \times SU(2) \times SO(2k-4)
\end{array}$ \\
\hline

$\begin{array}{c}
	e_6\\
	\end{array}$ &
$\raisebox{-.5\height}{\begin{tikzpicture}[x=1cm,y=.8cm]
\node (g0) at (-2.5,0) [gauge,label=below:{$2$}] {};
\node (g1) at (-1.5,0) [gauge,label=below:{$4$}] {};
\node (g2) at (-0.5,0) [gauger,label=below:{$6$}] {};
\node (g3) at (0.5,0) [gauge,label=below:{$9$}] {};
\node (g4) at (1.5,0) [gauge,label=below:{$6$}] {};
\node (g5) at (2.5,0) [gauge,label=below:{$3$}] {};
\node (g12) at (0.5,1) [gauge,label=left:{$6$}] {};
\node (g13) at (0.5,2) [gauge,label=left:{$3$}] {};
\draw (g0)--(g1)--(g2)--(g3)--(g4)--(g5);
\draw (g13)--(g12)--(g3);
\end{tikzpicture}}$ & 
$\renewcommand{\arraystretch}{1.5}
\begin{array}{c}
\textcolor{blue}{SU(3)} \times SU(6) \\
\downarrow \\
\textcolor{blue}{G_2} \times SU(6)
\end{array}$ \\
\hline

$\begin{array}{c}
	e_7\\
	\end{array}$ &
$\raisebox{-.5\height}{\begin{tikzpicture}[x=1cm,y=.8cm]
\node (g1) at (-3.5,0) [gauge,label=below:{$2$}] {};
\node (g2) at (-2.5,0) [gauge,label=below:{$4$}] {};
\node (g3) at (-1.5,0) [gauger,label=below:{$6$}] {};
\node (g4) at (-0.5,0) [gauge,label=below:{$9$}] {};
\node (g5) at (0.5,0) [gauge,label=below:{$12$}] {};
\node (g6) at (1.5,0) [gauge,label=below:{$9$}] {};
\node (g7) at (2.5,0) [gauge,label=below:{$6$}] {};
\node (g8) at (3.5,0) [gauge,label=below:{$3$}] {};
\node (g9) at (0.5,1) [gauge,label=right:{$6$}] {};
\draw (g1)--(g2)--(g3)--(g4)--(g5)--(g6)--(g7)--(g8);
\draw (g5)--(g9);
\end{tikzpicture}}$ & 
$\renewcommand{\arraystretch}{1.5}
\begin{array}{c}
\textcolor{blue}{SU(3)} \times SO(12) \\
\downarrow \\
\textcolor{blue}{G_2} \times SO(12)
\end{array}$ \\
\hline

$\begin{array}{c}
	e_8\\
	\end{array}$ &
$\raisebox{-.5\height}{\begin{tikzpicture}[x=1cm,y=.8cm]
\node (g1) at (-4,0) [gauge,label=below:{$2$}] {};
\node (g2) at (-3,0) [gauge,label=below:{$4$}] {};
\node (g3) at (-2,0) [gauger,label=below:{$6$}] {};
\node (g4) at (-1,0) [gauge,label=below:{$9$}] {};
\node (g5) at (0,0) [gauge,label=below:{$12$}] {};
\node (g6) at (1,0) [gauge,label=below:{$15$}] {};
\node (g7) at (2,0) [gauge,label=below:{$18$}] {};
\node (g8) at (3,0) [gauge,label=below:{$12$}] {};
\node (g9) at (4,0) [gauge,label=below:{$6$}] {};
\node (g10) at (2,1) [gauge,label=right:{$9$}] {};
\draw (g1)--(g2)--(g3)--(g4)--(g5)--(g6)--(g7)--(g8)--(g9);
\draw (g7)--(g10);
\end{tikzpicture}}$ & 
$\renewcommand{\arraystretch}{1.5}
\begin{array}{c}
\textcolor{blue}{SU(3)} \times E_7 \\
\downarrow \\
\textcolor{blue}{G_2} \times E_7
\end{array}$ \\
\hline

\end{tabular}
 \caption{Quivers resulting from adding all possible simply laced elementary slices to the existing nodes of sub-regular $G_2$ (\ref{sr G2}) to absorb $a$.}
    \label{tab:RHS sl sr G2}
\end{table}

\begin{table}[htbp!]
    \centering
    
 \hspace*{-0cm}\begin{tabular}{|c|c|c|} \hline
 
Added Slice & Quiver & Global Symmetry \\

\hline

$\begin{array}{c}
	b_k\\
	\\
	k \geq 3
	\end{array}$ &
$\raisebox{-.5\height}{\begin{tikzpicture}[x=1cm,y=.8cm]
\node (g0) at (-2.5,0) [gauge,label=below:{$2$}] {};
\node (g1) at (-1.5,0) [gauge,label=below:{$4$}] {};
\node (g2) at (-0.5,0) [gauger,label=below:{$6$}] {};
\node (g3) at (0.5,0) {$\cdots$};
\node (g4) at (1.5,0) [gauge,label=below:{$6$}] {};
\node (g5) at (2.5,0) [gauge,label=below:{$3$}] {};
\node (g11) at (-0.5,1) [gauge,label=above:{$3$}] {};
\draw (g0)--(g1)--(g2)--(g3)--(g4);
\draw (g2)--(g11);
\draw[transform canvas={yshift=-1.5pt}] (g5)--(g4);
\draw[transform canvas={yshift=1.5pt}] (g5)--(g4);
\draw (1.9,-0.2)--(2.1,0)--(1.9,0.2);
\draw [decorate,decoration={brace,mirror,amplitude=6pt}] (-0.5,-0.8) --node[below=6pt] {$k-2$} (1.5,-0.8);
\end{tikzpicture}}$ & 
$\renewcommand{\arraystretch}{1.5}
\begin{array}{c}
\textcolor{blue}{SU(3)} \times SU(2) \times SO(2k-3) \\
\downarrow \\
\textcolor{blue}{G_2} \times SU(2) \times SO(2k-3)
\end{array}$ \\
\hline

$\begin{array}{c}
	c_k\\
	\\
	k \geq 2
	\end{array}$ &
$\raisebox{-.5\height}{\begin{tikzpicture}[x=1cm,y=.8cm]
\node (g0) at (-2.5,0) [gauge,label=below:{$2$}] {};
\node (g1) at (-1.5,0) [gauge,label=below:{$4$}] {};
\node (g2) at (-0.5,0) [gauger,label=below:{$3$}] {};
\node (g3) at (0.5,0) {$\cdots$};
\node (g4) at (1.5,0) [gauge,label=below:{$3$}] {};
\node (g5) at (2.5,0) [gauge,label=below:{$3$}] {};
\draw (g0)--(g1);
\draw (g2)--(g3)--(g4);
\draw[transform canvas={yshift=-1.5pt}] (g1)--(g2);
\draw[transform canvas={yshift=1.5pt}] (g1)--(g2);
\draw (-1.1,0.2)--(-0.9,0)--(-1.1,-0.2);
\draw[transform canvas={yshift=-1.5pt}] (g5)--(g4);
\draw[transform canvas={yshift=1.5pt}] (g5)--(g4);
\draw (2.1,-0.2)--(1.9,0)--(2.1,0.2);
\draw [decorate,decoration={brace,mirror,amplitude=6pt}] (-0.5,-0.8) --node[below=6pt] {$k-1$} (1.5,-0.8);
\end{tikzpicture}}$ & 
$\renewcommand{\arraystretch}{1.5}
\begin{array}{c}
\textcolor{blue}{SU(3)} \times Sp(k-1) \\
\downarrow \\
\textcolor{blue}{G_2} \times Sp(k-1)
\end{array}$ \\
\hline

$\begin{array}{c}
	f_4\\
	\end{array}$ &
$\raisebox{-.5\height}{\begin{tikzpicture}[x=1cm,y=.8cm]
\node (g0) at (-2.5,0) [gauge,label=below:{$2$}] {};
\node (g1) at (-1.5,0) [gauge,label=below:{$4$}] {};
\node (g2) at (-0.5,0) [gauger,label=below:{$6$}] {};
\node (g3) at (0.5,0) [gauge,label=below:{$9$}] {};
\node (g4) at (1.5,0) [gauge,label=below:{$6$}] {};
\node (g5) at (2.5,0) [gauge,label=below:{$3$}] {};
\draw (g0)--(g1)--(g2)--(g3);
\draw (g4)--(g5);
\draw[transform canvas={yshift=-1.5pt}] (g3)--(g4);
\draw[transform canvas={yshift=1.5pt}] (g3)--(g4);
\draw (0.9,-0.2)--(1.1,0)--(0.9,0.2);
\end{tikzpicture}}$ & 
$\renewcommand{\arraystretch}{1.5}
\begin{array}{c}
\textcolor{blue}{SU(3)} \times Sp(3) \\
\downarrow \\
\textcolor{blue}{G_2} \times Sp(3)
\end{array}$ \\
\hline

$\begin{array}{c}
    g_2
    \end{array}$ &
$\raisebox{-.5\height}{\begin{tikzpicture}[x=1cm,y=.8cm]
\node (g1) at (-1.5,0) [gauge,label=below:{$2$}] {};
\node (g2) at (-0.5,0) [gauge,label=below:{$4$}] {};
\node (g3) at (0.5,0) [gauger,label=below:{$6$}] {};
\node (g4) at (1.5,0) [gauge,label=below:{$3$}] {};
\draw (g1)--(g2)--(g3);
\draw[transform canvas = {yshift=-2pt}] (g4)--(g3);
\draw[transform canvas = {yshift=0pt}] (g4)--(g3);
\draw[transform canvas = {yshift =2pt}] (g4)--(g3);
\draw (0.9,0.2)--(1.1,0)--(0.9,-0.2);
\end{tikzpicture}}$ &
$\renewcommand{\arraystretch}{1.5}
\begin{array}{c}
\textcolor{blue}{SU(3)} \times SU(2)\\
\downarrow \\
\textcolor{blue}{G_2} \times SU(2)
\end{array}$\\
\hline

\end{tabular}
 \caption{Quivers resulting from adding all possible non-simply laced elementary slices to the existing nodes of sub-regular $G_2$ (\ref{sr G2}) to absorb $a$.}
    \label{tab:RHS nsl sr G2}
\end{table}

\begin{table}[htbp!]
    \centering
    
 \hspace*{-0.5cm}\begin{tabular}{|c|c|c|} \hline
 
Added Slice & Quiver & Global Symmetry \\

\hline

$\begin{array}{c}
	a_k\\
	\\
	k\geq 1
	\end{array}$ &
$\raisebox{-.5\height}{\begin{tikzpicture}[x=1cm,y=.8cm]
\node (g1) at (-1.5,1) [gauge,label=left:{$3$}] {};
\node (g2) at (-1.5,0) {$\vdots$};
\node (g3) at (-1.5,-1) [gauge,label=left:{$3$}] {};
\node (g4) at (-0.5,0) [gauger,label=below:{$3$}] {};
\node (g5) at (0.5,0) [gauge,label=below:{$2$}] {};
\node (g6) at (1.5,0) [gauge,label=below:{$1$}] {};
\draw (g1)--(g2)--(g3)--(g4)--(g5)--(g6);
\draw (g1)--(g4);
\draw [decorate,decoration={brace,mirror,amplitude=6pt}] (-2.1,1) --node[left=6pt] {$k$} (-2.1,-1);
\end{tikzpicture}}$ & 
$\renewcommand{\arraystretch}{1.5}
\begin{array}{c}
 SU(k+1) \times \textcolor{blue}{SU(3)}\\
\downarrow \\
 SU(k+1) \times \textcolor{blue}{G_2}
\end{array}$ \\

\hline

$\begin{array}{c}
	d_k\\
	\\
	k \geq 4
	\end{array}$ &
$\raisebox{-.5\height}{\begin{tikzpicture}[x=1cm,y=.8cm]
    \node (g9) at (-3,0) [gauge,label=below:{$3$}] {};
    \node (g7) at (-2,0) [gauge,label=below:{$6$}] {};
    \node (g10) at (-2,1) [gauge,label=above:{$3$}] {};
    \node (g6) at (-1,0) {$\cdots$};
    \node (g1) at (0,0) [gauge,label=below:{$6$}] {};
    \node (g11) at (0,1) [gauge,label=above:{$3$}] {};
    \node (g8) at (1,0) [gauger,label=below:{$3$}] {};
    \node (g2) at (2,0) [gauge,label=below:{$2$}] {};
    \node (g5) at (3,0) [gauge,label=below:{$1$}] {};
    \draw (g9)--(g7)--(g6)--(g1)--(g8)--(g2)--(g5);
    \draw (g10)--(g7);
    \draw (g11)--(g1);
    \draw [decorate,decoration={brace,mirror,amplitude=6pt}] (-2,-0.8) --node[below=6pt] {$k-3$} (0,-0.8);
    \end{tikzpicture}}$ & 
$\renewcommand{\arraystretch}{1.5}
\begin{array}{c}
 SO(2k) \times \textcolor{blue}{SU(3)}\\
\downarrow \\
 SO(2k) \times \textcolor{blue}{G_2}
\end{array}$ \\

\hline

$\begin{array}{c}
	e_6\\
	\end{array}$ &
$\raisebox{-.5\height}{\begin{tikzpicture}[x=1cm,y=.8cm]
    \node (g9) at (-3,0) [gauge,label=below:{$3$}] {};
    \node (g7) at (-2,0) [gauge,label=below:{$6$}] {};
    \node (g10) at (-1,2) [gauge,label=left:{$3$}] {};
    \node (g6) at (-1,0) [gauge,label=below:{$9$}] {};
    \node (g1) at (0,0) [gauge,label=below:{$6$}] {};
    \node (g11) at (-1,1) [gauge,label=left:{$6$}] {};
    \node (g8) at (1,0) [gauger,label=below:{$3$}] {};
    \node (g2) at (2,0) [gauge,label=below:{$2$}] {};
    \node (g5) at (3,0) [gauge,label=below:{$1$}] {};
    \draw (g9)--(g7)--(g6)--(g1)--(g8)--(g2)--(g5);
    \draw (g10)--(g11)--(g6);
    \end{tikzpicture}}$ & 
$\renewcommand{\arraystretch}{1.5}
\begin{array}{c}
 E_6 \times \textcolor{blue}{SU(3)}\\
\downarrow \\
 E_6 \times \textcolor{blue}{G_2}
\end{array}$ \\

\hline

$\begin{array}{c}
	e_7\\
	\end{array}$ &
$\raisebox{-.5\height}{\begin{tikzpicture}[x=1cm,y=.8cm]
     \node (g13) at (-4,0) [gauge,label=below:{$3$}] {};
     \node (g12) at (-3,0) [gauge,label=below:{$6$}] {};
    \node (g9) at (-2,0) [gauge,label=below:{$9$}] {};
    \node (g7) at (-1,0) [gauge,label=below:{$12$}] {};
    \node (g10) at (-1,1) [gauge,label=above:{$6$}] {};
    \node (g6) at (0,0) [gauge,label=below:{$9$}] {};
    \node (g11) at (1,0) [gauge,label=below:{$6$}] {};
    \node (g8) at (2,0) [gauger,label=below:{$3$}] {};
    \node (g2) at (3,0) [gauge,label=below:{$2$}] {};
    \node (g5) at (4,0) [gauge,label=below:{$1$}] {};
    \draw (g13)--(g12)--(g9)--(g7)--(g6)--(g11)--(g8)--(g2)--(g5);
    \draw (g10)--(g7);
    \end{tikzpicture}}$ & 
$\renewcommand{\arraystretch}{1.5}
\begin{array}{c}
 E_7 \times \textcolor{blue}{SU(3)}\\
\downarrow \\
 E_7 \times \textcolor{blue}{G_2}
\end{array}$ \\

\hline

$\begin{array}{c}
	e_8\\
	\end{array}$ &
$\raisebox{-.5\height}{\begin{tikzpicture}[x=1cm,y=.8cm]
    \node (g14) at (-4.5,0) [gauge,label=below:{$6$}] {};
    \node (g13) at (-3.5,0) [gauge,label=below:{$12$}] {};
    \node (g12) at (-2.5,0) [gauge,label=below:{$18$}] {};
    \node (g9) at (-1.5,0) [gauge,label=below:{$15$}] {};
    \node (g7) at (-0.5,0) [gauge,label=below:{$12$}] {};
    \node (g10) at (-2.5,1) [gauge,label=above:{$9$}] {};
    \node (g6) at (0.5,0) [gauge,label=below:{$9$}] {};
    \node (g11) at (1.5,0) [gauge,label=below:{$6$}] {};
    \node (g8) at (2.5,0) [gauger,label=below:{$3$}] {};
    \node (g2) at (3.5,0) [gauge,label=below:{$2$}] {};
    \node (g5) at (4.5,0) [gauge,label=below:{$1$}] {};
    \draw (g14)--(g13)--(g12)--(g9)--(g7)--(g6)--(g11)--(g8)--(g2)--(g5);
    \draw (g10)--(g12);
    \end{tikzpicture}}$ & 
$\renewcommand{\arraystretch}{1.5}
\begin{array}{c}
 E_8 \times \textcolor{blue}{SU(3)}\\
\downarrow \\
 E_8 \times \textcolor{blue}{G_2}
\end{array}$ \\

\hline

\end{tabular}
 \caption{Quivers resulting from adding all possible simply laced elementary slices to the empty nodes of sub-regular $G_2$ (\ref{sr G2}) to absorb $a$.}
    \label{tab:LHS sl sr G2}
\end{table}

\begin{table}[htbp!]
    \centering
    
 \hspace*{-0.5cm}\begin{tabular}{|c|c|c|} \hline
 
Added Slice & Quiver & Global Symmetry \\

\hline

$\begin{array}{c}
	b_k\\
	\\
	k\geq 3
	\end{array}$ &
$\raisebox{-.5\height}{\begin{tikzpicture}[x=1cm,y=.8cm]
\node (g1) at (-3,0) [gauge,label=below:{$3$}] {};
\node (g2) at (-2,0) [gauge,label=below:{$6$}] {};
\node (g3) at (-1,0) {$\cdots$};
\node (g4) at (0,0) [gauge,label=below:{$6$}] {};
\node (g5) at (1,0) [gauger,label=below:{$3$}] {};
\node (g12) at (2,0) [gauge,label=below:{$2$}] {};
\node (g6) at (3,0) [gauge,label=below:{$1$}] {};
\node (g7) at (0,1) [gauge,label=above:{$3$}] {};
\draw (g2)--(g3)--(g4)--(g5)--(g12)--(g6);
\draw (g7)--(g4);
\draw[transform canvas={yshift=-1.5pt}] (g1)--(g2);
\draw[transform canvas={yshift=1.5pt}] (g1)--(g2);
\draw (-2.4,0.2)--(-2.6,0)--(-2.4,-0.2);
\draw [decorate,decoration={brace,mirror,amplitude=6pt}] (-2,-0.8) --node[below=6pt] {$k-2$} (0,-0.8);
\end{tikzpicture}}$ & 
$\renewcommand{\arraystretch}{1.5}
\begin{array}{c}
 SO(2k+1) \times \textcolor{blue}{SU(3)}\\
\downarrow \\
 SO(2k+1) \times \textcolor{blue}{G_2}
\end{array}$ \\
\hline

$\begin{array}{c}
	c_k\\
	\\
	k\geq 2
	\end{array}$ &
$\raisebox{-.5\height}{\begin{tikzpicture}[x=1cm,y=.8cm]
\node (g1) at (-3,0) [gauge,label=below:{$3$}] {};
\node (g2) at (-2,0) [gauge,label=below:{$3$}] {};
\node (g3) at (-1,0) {$\cdots$};
\node (g4) at (0,0) [gauge,label=below:{$3$}] {};
\node (g5) at (1,0) [gauger,label=below:{$3$}] {};
\node (g12) at (2,0) [gauge,label=below:{$2$}] {};
\node (g6) at (3,0) [gauge,label=below:{$1$}] {};
\draw (g2)--(g3)--(g4);
\draw (g5)--(g12)--(g6);
\draw[transform canvas={yshift=-1.5pt}] (g1)--(g2);
\draw[transform canvas={yshift=1.5pt}] (g1)--(g2);
\draw (-2.6,0.2)--(-2.4,0)--(-2.6,-0.2);
\draw[transform canvas={yshift=-1.5pt}] (g4)--(g5);
\draw[transform canvas={yshift=1.5pt}] (g4)--(g5);
\draw (0.6,0.2)--(0.4,0)--(0.6,-0.2);
\draw [decorate,decoration={brace,mirror,amplitude=6pt}] (-2,-0.8) --node[below=6pt] {$k-1$} (0,-0.8);
\end{tikzpicture}}$ & 
$\renewcommand{\arraystretch}{1.5}
\begin{array}{c}
 Sp(k) \times \textcolor{blue}{SU(3)}\\
\downarrow \\
 Sp(k) \times \textcolor{blue}{G_2}
\end{array}$ \\
\hline

$\begin{array}{c}
	f_4\\
	\end{array}$ &
$\raisebox{-.5\height}{\begin{tikzpicture}[x=1cm,y=.8cm]
\node (g1) at (-3,0) [gauge,label=below:{$3$}] {};
\node (g2) at (-2,0) [gauge,label=below:{$6$}] {};
\node (g3) at (-1,0) [gauge,label=below:{$9$}] {};
\node (g4) at (0,0) [gauge,label=below:{$6$}] {};
\node (g5) at (1,0) [gauger,label=below:{$3$}] {};
\node (g12) at (2,0) [gauge,label=below:{$2$}] {};
\node (g6) at (3,0) [gauge,label=below:{$1$}] {};
\draw (g2)--(g1);
\draw (g3)--(g4)--(g5)--(g12)--(g6);
\draw[transform canvas={yshift=-1.5pt}] (g3)--(g2);
\draw[transform canvas={yshift=1.5pt}] (g3)--(g2);
\draw (-1.4,0.2)--(-1.6,0)--(-1.4,-0.2);
\end{tikzpicture}}$ & 
$\renewcommand{\arraystretch}{1.5}
\begin{array}{c}
 F_4 \times \textcolor{blue}{SU(3)}\\
\downarrow \\
 F_4 \times \textcolor{blue}{G_2}
\end{array}$ \\
\hline

$\begin{array}{c}
    g_2
    \end{array}$ &
$\raisebox{-.5\height}{\begin{tikzpicture}[x=1cm,y=.8cm]
\node (g1) at (-2,0) [gauge,label=below:{$3$}] {};
\node (g2) at (-1,0) [gauge,label=below:{$6$}] {};
\node (g3) at (0,0) [gauger,label=below:{$3$}] {};
\node (g4) at (1,0) [gauge,label=below:{$2$}] {};
\node (g5) at (2,0) [gauge,label=below:{$1$}] {};
\draw (g2)--(g3)--(g4)--(g5);
\draw[transform canvas = {yshift=-2pt}] (g1)--(g2);
\draw[transform canvas = {yshift=0pt}] (g1)--(g2);
\draw[transform canvas = {yshift =2pt}] (g1)--(g2);
\draw (-1.4,0.2)--(-1.6,0)--(-1.4,-0.2);
\end{tikzpicture}}$ &
$\renewcommand{\arraystretch}{1.5}
\begin{array}{c}
 G_2 \times \textcolor{blue}{SU(3)}\\
\downarrow \\
 G_2 \times \textcolor{blue}{G_2}
\end{array}$ \\
\hline

\end{tabular}
 \caption{Quivers resulting from adding all possible non-simply laced elementary slices to the empty nodes of sub-regular $G_2$ (\ref{sr G2}) to absorb $a$.}
    \label{tab:LHS nsl sr G2}
\end{table}

\clearpage

\noindent As in the previous section, because of the rank one node in (\ref{sr G2}), we can also perform a single quiver addition to absorb this node before absorbing $a$ and find yet more quivers whose Hasse diagrams contain (\ref{sr G2}) and therefore experience the $SU(3) \rightarrow G_2$ enhancement. Again, there are infinitely many such quivers one could construct. However, unlike in the case of Section \ref{A type to B type section}, here the actual enhancement won't be of a different type to that of the above quivers, and so won't be of particular interest to us. To illustrate this, consider the quiver after adding an $A_1$:
\begin{equation} \label{sr G2 add A1}
    \begin{tikzpicture}[x=1cm,y=.8cm]
        \node (g0) at (-1,1) [gauge,label=right:{$3$}] {};
        \node (g1) at (-1,0) [gauge, label=below:{$2$}] {};
        \node (g2) at (0,0) [gauger,label=below:{$1$}] {};
        \node (g3) at (1,0) [gauge,label=below:{$1$}] {};
        \node (g4) at (1.5,0) {$.$};
        \draw (g0)--(g1)--(g2);
        \draw[transform canvas = {yshift=-1.5pt}] (g3)--(g2);
        \draw[transform canvas = {yshift=1.5pt}] (g3)--(g2);
        \draw (g0) to [out=45,in=135,looseness=10] (g0);
    \end{tikzpicture}
\end{equation}
The reason that (\ref{sr G2 add A1}) gives us no new quivers of interest can be made explicit if we ungauge on the rank one unbalanced node,\footnote{Recall we always choose to ungauge on a long node, and ungauging on any long node is equivalent to ungauging on any other.} in which case it becomes
\begin{equation}
     \begin{tikzpicture}[x=1cm,y=.8cm]
        \node (g0) at (-2,1) [gauge,label=right:{$3$}] {};
        \node (g1) at (-2,0) [gauge, label=below:{$2$}] {};
        \node (g2) at (-1,0) [flavor,label=below:{$1$}] {};
        \node (g4) at (0,0) {$\times$};
        \node (g5) at (1,0) [flavor,label=below:{$2$}] {};
        \node (g3) at (2,0) [gauge,label=below:{$1$}] {};
        \node (g6) at (2.5,0) {$.$};
        \draw (g0)--(g1)--(g2);
        \draw (g5)--(g3);
        \draw (g0) to [out=45, in=135, looseness=10] (g0);
    \end{tikzpicture}
\end{equation}
The Coulomb branch moduli space of this quiver theory is then the product of $A_1 = SU(2)$ and the sub-regular nilpotent orbit of $G_2$. We don't worry about product moduli spaces, because if we understand individual ones then we can construct and know everything about the products too. Indeed, if we for example add $a_2$ thrice on to the empty nodes of (\ref{sr G2 add A1}) and again ungauge on the unbalanced rank one node, we get
\begin{equation}
    \begin{tikzpicture}[x=1cm,y=.8cm]
        \node (g0) at (-3.5,0) [gauge, label=below:{$3$}] {};
        \node (g1) at (-2.5,0) [gauger, label=below:{$3$}] {};
        \node (g2) at (-1.5,0) [gauge,label=below:{$2$}] {};
        \node (g3) at (-0.5,0) [flavor,label=below:{$1$}] {};
        \node (g6) at (0.5,0) {$\times$};
        \node (g7) at (1.5,0) [flavor,label=below:{$2$}] {};
        \node (g4) at (2.5,0) [gauge,label=below:{$1$}] {};
        \node (g5) at (-3,1) [gauge, label=above:{$3$}] {};
        \node (g8) at (3,0) {$.$};
        \draw (g5)--(g0)--(g1)--(g2)--(g3);
        \draw (g1)--(g5);
        \draw (g7)--(g4);
    \end{tikzpicture}
\end{equation}
The Coulomb branch of the quiver on the left is in the top row of Table \ref{tab:LHS sl sr G2}, and the Coulomb branch of the quiver on the right is the minimal nilpotent orbit of $A_1$, so we do indeed have nothing new. Thus Tables \ref{tab:RHS sl sr G2} through to \ref{tab:LHS nsl sr G2} conclude the list of quivers with a global symmetry factor experiencing enhancement to $G_2$ that can be found as a result of performing quiver addition on the sub-regular nilpotent orbit of $G_2$ (\ref{sr G2}).

\section{\texorpdfstring{Enhancement to $SO(2n)$}{Enhancement to Dn}}
\label{sec:enhancement to Dn}
There are two quivers in Table \ref{tab:adj hyper Qs} with $SO(2n)$ global symmetry, and so both can be used as a base to construct quivers with a global symmetry factor which enhances to $SO(2n)$. It is the BGS factor that enhances which distinguishes the results from adding to the third or fourth quiver in Table \ref{tab:adj hyper Qs}. Section \ref{sec: bn-1 to dn} addresses quivers derived from adding to the former, and Section \ref{sec: sun u1 to so2n} addresses quivers derived from adding to the latter.

\subsection{\texorpdfstring{Enhancement from $SO(2n-1)$ to $SO(2n)$}{Enhancement from Bnminusone to Dn}} \label{sec: bn-1 to dn}
The next to minimal nilpotent orbit of $D_n$ for $n \geq 2$ is given by the third quiver in Table \ref{tab:adj hyper Qs} \cite{1992math......4227B,Hanany:2018dvd,Bourget:2020bxh},
\begin{equation}\label{nminDn}
    \begin{tikzpicture}[x=1cm,y=.8cm]
    \node (g1) at (-1.5,1.1) [gauge,label=right:{$2$},label={[red]left:{$a$}}] {};
    \node (g2) at (-1.5,0) [gauge,label=below:{$2$},label={[red]left:{$b$}}] {};
    \node (g3) at (-0.5,0) {$\cdots$};
    \node (g4) at (0.5,0) [gauge,label=below:{$2$}] {};
    \node (g5) at (1.5,0) [gauge,label=below:{$1$}] {};
    \node (g6) at (2,0) {$.$};
    \draw (g1)--(g2)--(g3)--(g4);
    \draw[transform canvas = {yshift=-1.5pt}] (g5)--(g4);
    \draw[transform canvas = {yshift=1.5pt}] (g5)--(g4);
    \draw (0.9,0.2)--(1.1,0)--(0.9,-0.2);
    \draw (g1) to [out=45,in=135,looseness=10] (g1);
    \draw [decorate,decoration={brace,mirror,amplitude=6pt}] (-1.5,-0.8) --node[below=6pt] {$n-2$} (0.5,-0.8);
    \end{tikzpicture}
\end{equation}
Following the BGS algorithm for quivers with adjoint matter in Section \ref{sec:quivers with adj matter}, the global symmetry of (\ref{nminDn}) is $SO(2n)$. The Hasse diagram is 
\begin{equation} \label{HD nminDn}
    \begin{tikzpicture}
    \node (1) [hasse] at (0,0) {};
    \node (2) [hasse] at (0,-1) {};
    \node (3) [hasse] at (0,-2) {};
    \node (4) at (0.5,-1) {$,$};
    \draw (1) edge [] node[label=left:$A_1$] {} (2);
    \draw (2) edge [] node[label=left:$d_n$] {} (3);
    \end{tikzpicture}
\end{equation}
which matches this BGS prediction, and this global symmetry is confirmed by computing the Hilbert series. The quivers which can be constructed by adding to (\ref{nminDn}) to absorb $a$ experience an enhancement of their BGS given by
\begin{equation}
    \prod_{i} G_i \times \textcolor{blue}{SO(2n-1)} \rightarrow \prod_{i} G_i \times \textcolor{blue}{SO(2n)},
\end{equation}
for some semi-simple Lie groups $G_i$. The $n=2$ case needs to be treated separately than $n \geq 3$ because for $n=2$, $a$ is a short node.\\

\noindent For $n \geq 3$ and using the algorithm from Section \ref{sec: quiver addition}, we can again either add to the existing nodes of (\ref{nminDn}) by taking $c$ as the node to the right of $b$
\begin{equation}\label{cnodenminDngeq3}
    \begin{tikzpicture}[x=1cm,y=.8cm]
    \node (g1) at (-2,1.1) [gauge,label=right:{$2$},label={[red]left:{$a$}}] {};
    \node (g2) at (-2,0) [gauge,label=below:{$2$},label={[red]left:{$b$}}] {};
    \node (g2a) at (-1,0) [gauge,label=below:{$2$},label={[red]above:{$c$}}] {};
    \node (g3) at (0,0) {$\cdots$};
    \node (g4) at (1,0) [gauge,label=below:{$2$}] {};
    \node (g5) at (2,0) [gauge,label=below:{$1$}] {};
    \node (g6) at (2.5,0) {$,$};
    \draw (g1)--(g2)--(g2a)--(g3)--(g4);
    \draw[transform canvas = {yshift=-1.5pt}] (g5)--(g4);
    \draw[transform canvas = {yshift=1.5pt}] (g5)--(g4);
    \draw (1.4,0.2)--(1.6,0)--(1.4,-0.2);
    \draw (g1) to [out=45,in=135,looseness=10] (g1);
    \draw [decorate,decoration={brace,mirror,amplitude=6pt}] (-2,-0.8) --node[below=6pt] {$n-2$} (1,-0.8);
    \end{tikzpicture}
\end{equation}
or add to the empty nodes of (\ref{nminDn}) by taking $c$ to be some non-existent node to the left of $b$
\begin{equation}\label{cnode empty nminDngeq3}
    \begin{tikzpicture}[x=1cm,y=.8cm]
    \node (g0) at (-2.5,0) [emptygauge,label={[red]left:{$c$}},label={[gray]below:{$0$}}] {};
    \node (g1) at (-1.5,1.1) [gauge,label=right:{$2$},label={[red]left:{$a$}}] {};
    \node (g2) at (-1.5,0) [gauge,label=below:{$2$},label={[red]315:{$b$}}] {};
    \node (g3) at (-0.5,0) {$\cdots$};
    \node (g4) at (0.5,0) [gauge,label=below:{$2$}] {};
    \node (g5) at (1.5,0) [gauge,label=below:{$1$}] {};
    \node (g6) at (2,0) {$.$};
    \draw (g1)--(g2)--(g3)--(g4);
    \draw[transform canvas = {yshift=-1.5pt}] (g5)--(g4);
    \draw[transform canvas = {yshift=1.5pt}] (g5)--(g4);
    \draw (0.9,0.2)--(1.1,0)--(0.9,-0.2);
    \draw [dashed,gray] (g0)--(g2);
    \draw (g1) to [out=45,in=135,looseness=10] (g1);
    \draw [decorate,decoration={brace,mirror,amplitude=6pt}] (-1.5,-0.8) --node[below=6pt] {$n-2$} (0.5,-0.8);
    \end{tikzpicture}
\end{equation}
As previously, when adding to the empty nodes, there will be no restriction on the affine slices we can add. Adding to the existing nodes requires that all nodes to the right of $b$ in (\ref{nminDn}) be a subset of the slice that is added. Thus we can only add $b_k$ and $f_4$ to the existing nodes, and doing so will fix the value of $n$ for these cases. The results of all these additions may be found in Table \ref{tab:RHS nmin Dn} for the existing nodes addition, and Tables \ref{tab:lhs sl addition nmin Dn} and \ref{tab:lhs nsl addition nmin Dn} for the empty nodes addition.\\

\noindent In the case where $n=2$, we are considering the next to minimal nilpotent orbit of $D_2$,
\begin{equation} \label{nminD2}
    \begin{tikzpicture}[x=1cm,y=.8cm]
    \node (g1) at (0,1.1) [gauge,label=right:{$2$},label={[red]left:{$a$}}] {};
    \node (g2) at (0,0) [gauge,label=below:{$1$},label={[red]left:{$b$}}] {};
    \node (g3) at (0.5,0) {$,$};
    \draw[transform canvas={xshift=-1.5pt}] (g1)--(g2);
    \draw[transform canvas={xshift=1.5pt}] (g1)--(g2);
    \draw (-0.2,0.6)--(0,0.4)--(0.2,0.6);
    \draw (g1) to [out=45,in=135,looseness=10] (g1);
    \end{tikzpicture}
\end{equation}
the global symmetry of which can be computed to be $SU(2)^2$ using the Hilbert series. This quiver is different to the previous cases from Table \ref{tab:adj hyper Qs} that we've looked at thus far, in that here $a$ is connected to the other nodes in the quiver via a non-simply laced edge. Because of this, the BGS algorithm given in Section \ref{sec:quivers with adj matter} and the quiver addition algorithm given in Section \ref{sec: quiver addition} do not apply. Because $a$ is connected to $b$ via a non-simply laced edge, if we view the adjoint hypermultiplet as having arisen from a double same slice subtraction, $b$ must have been becoming unbalanced by one more each time and must have been a short node, in line with the rules for rebalancing on a short node in Step $2)a)ii)$ in Appendix \ref{app:quiver subtraction}. Thus the only quivers we can add to this are those for which there is a rank one node which is on the short end of a non-simply laced edge of multiplicity two. The only quivers for elementary slices for which this is the case are affine $b_k$, $c_k$ and $f_4$ and twisted affine $a_{k}$, $d_k$, and $e_6$. They all experience an enhancement from their BGS as
\begin{equation}
    \prod_{i} G_i \times \textcolor{blue}{SU(2)} \rightarrow \prod_{i} G_i \times \textcolor{blue}{SU(2)^2},
\end{equation}
for some semi-simple Lie groups $G_i$. The quivers derived from these additions are displayed in Table \ref{tab:adding to nmin D2}.\\

\noindent For all cases $n \geq 2$ listed across this section, unlike before in Section \ref{A type to B type section}, we cannot find more quivers with such an enhancement by performing a different quiver addition on (\ref{nminDn}) prior to absorbing the adjoint hypermultiplet as there are no long rank one nodes present.

\begin{table}[htbp!]
    \centering
    
 \hspace*{-0.5cm}\begin{tabular}{|c|c|c|} \hline
 
Added Slice & Quiver & Global Symmetry \\

\hline

$\begin{array}{c}
	b_k\\
	\\
	k = n-2
	\end{array}$ &
$\raisebox{-.5\height}{\begin{tikzpicture}[x=1cm,y=.8cm]
\node (g0) at (-2.5,0) [gauge,label=below:{$2$}] {};
\node (g1) at (-1.5,0) [gauge,label=below:{$4$}] {};
\node (g2) at (-0.5,0) [gauge,label=below:{$6$}] {};
\node (g3) at (0.5,0) {$\cdots$};
\node (g4) at (1.5,0) [gauge,label=below:{$6$}] {};
\node (g5) at (2.5,0) [gauge,label=below:{$3$}] {};
\node (g11) at (-0.5,1) [gauger,label=above:{$2$}] {};
\draw (g0)--(g1)--(g2)--(g3)--(g4);
\draw (g2)--(g11);
\draw[transform canvas={yshift=-1.5pt}] (g5)--(g4);
\draw[transform canvas={yshift=1.5pt}] (g5)--(g4);
\draw (1.9,-0.2)--(2.1,0)--(1.9,0.2);
\draw [decorate,decoration={brace,mirror,amplitude=6pt}] (-0.5,-0.8) --node[below=6pt] {$k-2=n-4$} (1.5,-0.8);
\end{tikzpicture}}$ & 
$\renewcommand{\arraystretch}{1.5}
\begin{array}{c}
\textcolor{blue}{SO(2n-1)} \\
\downarrow \\
\textcolor{blue}{SO(2n)}
\end{array}$ \\
\hline

$\begin{array}{c}
	f_4\\
	\\
	n=6
	\end{array}$ &
$\raisebox{-.5\height}{\begin{tikzpicture}[x=1cm,y=.8cm]
\node (g0) at (-2.5,0) [gauge,label=below:{$2$}] {};
\node (g1) at (-1.5,0) [gauge,label=below:{$4$}] {};
\node (g2) at (-0.5,0) [gauge,label=below:{$6$}] {};
\node (g3) at (0.5,0) [gauge,label=below:{$8$}] {};
\node (g4) at (1.5,0) [gauge,label=below:{$5$}] {};
\node (g5) at (2.5,0) [gauger,label=below:{$2$}] {};
\draw (g0)--(g1)--(g2)--(g3);
\draw (g4)--(g5);
\draw[transform canvas={yshift=-1.5pt}] (g3)--(g4);
\draw[transform canvas={yshift=1.5pt}] (g3)--(g4);
\draw (0.9,-0.2)--(1.1,0)--(0.9,0.2);
\end{tikzpicture}}$ & 
$\renewcommand{\arraystretch}{1.5}
\begin{array}{c}
\textcolor{blue}{SO(11)} \\
\downarrow \\
\textcolor{blue}{SO(12)}
\end{array}$ \\
\hline

\end{tabular}
 \caption{Quivers resulting from adding all possible elementary slices to the existing nodes of n.min $D_n$ (\ref{nminDn}) to absorb $a$ for $n \geq 3$.}
    \label{tab:RHS nmin Dn}
\end{table}

\begin{table}[hbt!]
    \centering
    
 \hspace*{-2.375cm}\begin{tabular}{|c|c|c|} \hline
 
Added Slice & Quiver & Global Symmetry \\

\hline

$\begin{array}{c}
	a_k\\
	\\
	k\geq 1, n \geq 3
	\end{array}$ &
$\raisebox{-.5\height}{\begin{tikzpicture}[x=1cm,y=.8cm]
\node (g1) at (-3,0) [gauge,label=below:{$2$}] {};
\node (g2) at (-2,0) {$\cdots$};
\node (g3) at (-1,0) [gauger,label=below:{$2$}] {};
\node (g4) at (0,0) [gauge,label=below:{$2$}] {};
\node (g5) at (1,0) {$\cdots$};
\node (g6) at (2,0) [gauge,label=below:{$2$}] {};
\node (g7) at (3,0) [gauge,label=below:{$1$}] {};
\node (g9) at (-2,1) [gauge,label=above:{$2$}] {};
\draw (g1)--(g2)--(g3)--(g4)--(g5)--(g6);
\draw (g1)--(g9);
\draw (g3)--(g9);
\draw[transform canvas={yshift=-1.5pt}] (g7)--(g6);
\draw[transform canvas={yshift=1.5pt}] (g7)--(g6);
\draw (2.4,0.2)--(2.6,0)--(2.4,-0.2);
\draw [decorate,decoration={brace,mirror,amplitude=6pt}] (-3,-0.8) --node[below=6pt] {$k$} (-1,-0.8);
\draw [decorate,decoration={brace,mirror,amplitude=6pt}] (0,-0.8) --node[below=6pt] {$n-2$} (2,-0.8);
\end{tikzpicture}}$ & 
$\renewcommand{\arraystretch}{1.5}
\begin{array}{c}
SU(k+1) \times \textcolor{blue}{SO(2n-1)}  \\
\downarrow \\
SU(k+1) \times \textcolor{blue}{SO(2n)} 
\end{array}$ \\
\hline

$\begin{array}{c}
	d_k\\
	\\
	k\geq 4, n=k+1
	\end{array}$ &
$\raisebox{-.5\height}{\begin{tikzpicture}[x=1cm,y=.8cm]
\node (g1) at (-4,0) [gauge,label=below:{$2$}] {};
\node (g2) at (-3,0) [gauge,label=below:{$4$}] {};
\node (g3) at (-2,0) {$\cdots$};
\node (g4) at (-1,0) [gauge,label=below:{$4$}] {};
\node (g5) at (0,0) [gauger,label=below:{$2$}] {};
\node (g12) at (1,0) [gauge,label=below:{$2$}] {};
\node (g6) at (2,0) {$\cdots$};
\node (g7) at (3,0) [gauge,label=below:{$2$}] {};
\node (g8) at (4,0) [gauge,label=below:{$1$}] {};
\node (g10) at (-3,1) [gauge,label=above:{$2$}] {};
\node (g11) at (-1,1) [gauge,label=above:{$2$}] {};
\draw (g1)--(g2)--(g3)--(g4)--(g5)--(g12)--(g6)--(g7);
\draw (g10)--(g2);
\draw (g11)--(g4);
\draw[transform canvas={yshift=-1.5pt}] (g7)--(g8);
\draw[transform canvas={yshift=1.5pt}] (g7)--(g8);
\draw (3.4,0.2)--(3.6,0)--(3.4,-0.2);
\draw [decorate,decoration={brace,mirror,amplitude=6pt}] (-3,-0.8) --node[below=6pt] {$k-3$} (-1,-0.8);
\draw [decorate,decoration={brace,mirror,amplitude=6pt}] (1,-0.8) --node[below=6pt] {$n-2$} (3,-0.8);
\end{tikzpicture}}$ & 
$\renewcommand{\arraystretch}{1.5}
\begin{array}{c}
SO(2k) \times \textcolor{blue}{SO(2n-1)}  \\
\downarrow \\
SO(2k) \times \textcolor{blue}{SO(2n)} 
\end{array}$ \\
\hline

$\begin{array}{c}
	e_6\\
	\\
	n \geq 3
	\end{array}$ &
$\raisebox{-.5\height}{\begin{tikzpicture}[x=1cm,y=.8cm]
\node (g1) at (-4,0) [gauge,label=below:{$2$}] {};
\node (g2) at (-3,0) [gauge,label=below:{$4$}] {};
\node (g3) at (-2,0) [gauge,label=below:{$6$}] {};
\node (g4) at (-1,0) [gauge,label=below:{$4$}] {};
\node (g5) at (0,0) [gauger,label=below:{$2$}] {};
\node (g12) at (1,0) [gauge,label=below:{$2$}] {};
\node (g6) at (2,0) {$\cdots$};
\node (g7) at (3,0) [gauge,label=below:{$2$}] {};
\node (g8) at (4,0) [gauge,label=below:{$1$}] {};
\node (g10) at (-2,1) [gauge,label=left:{$4$}] {};
\node (g11) at (-2,2) [gauge,label=left:{$2$}] {};
\draw (g1)--(g2)--(g3)--(g4)--(g5)--(g12)--(g6)--(g7);
\draw (g3)--(g10)--(g11);
\draw[transform canvas={yshift=-1.5pt}] (g7)--(g8);
\draw[transform canvas={yshift=1.5pt}] (g7)--(g8);
\draw (3.4,0.2)--(3.6,0)--(3.4,-0.2);
\draw [decorate,decoration={brace,mirror,amplitude=6pt}] (1,-0.8) --node[below=6pt] {$n-2$} (3,-0.8);
\end{tikzpicture}}$ & 
$\renewcommand{\arraystretch}{1.5}
\begin{array}{c}
E_6 \times \textcolor{blue}{SO(2n-1)}  \\
\downarrow \\
E_6 \times \textcolor{blue}{SO(2n)} 
\end{array}$ \\
\hline

$\begin{array}{c}
	e_7\\
	n \geq 3
	\end{array}$ &
$\raisebox{-.5\height}{\begin{tikzpicture}[x=1cm,y=.8cm]
\node (g1) at (-5,0) [gauge,label=below:{$2$}] {};
\node (g2) at (-4,0) [gauge,label=below:{$4$}] {};
\node (g3) at (-3,0) [gauge,label=below:{$6$}] {};
\node (g4) at (-2,0) [gauge,label=below:{$8$}] {};
\node (g13) at (-1,0) [gauge,label=below:{$6$}] {};
\node (g5) at (-0,0) [gauge,label=below:{$4$}] {};
\node (g12) at (1,0) [gauger,label=below:{$2$}] {};
\node (g6) at (2,0) [gauge,label=below:{$2$}] {};
\node (g7) at (3,0) {$\cdots$};
\node (g14) at (4,0) [gauge,label=below:{$2$}] {};
\node (g8) at (5,0) [gauge,label=below:{$1$}] {};
\node (g10) at (-2,1) [gauge,label=left:{$4$}] {};
\draw (g1)--(g2)--(g3)--(g4)--(g13)--(g5)--(g12)--(g6)--(g7)--(g14);
\draw (g4)--(g10);
\draw[transform canvas={yshift=-1.5pt}] (g14)--(g8);
\draw[transform canvas={yshift=1.5pt}] (g14)--(g8);
\draw (4.4,0.2)--(4.6,0)--(4.4,-0.2);
\draw [decorate,decoration={brace,mirror,amplitude=6pt}] (2,-0.8) --node[below=6pt] {$n-2$} (4,-0.8);
\end{tikzpicture}}$ & 
$\renewcommand{\arraystretch}{1.5}
\begin{array}{c}
E_7 \times \textcolor{blue}{SO(2n-1)}  \\
\downarrow \\
E_7 \times \textcolor{blue}{SO(2n)} 
\end{array}$ \\
\hline

$\begin{array}{c}
	e_8\\
	n \geq 3
	\end{array}$ &
$\raisebox{-.5\height}{\begin{tikzpicture}[x=1cm,y=.8cm]
\node (g1) at (-5.5,0) [gauge,label=below:{$4$}] {};
\node (g2) at (-4.5,0) [gauge,label=below:{$8$}] {};
\node (g3) at (-3.5,0) [gauge,label=below:{$12$}] {};
\node (g4) at (-2.5,0) [gauge,label=below:{$10$}] {};
\node (g13) at (-1.5,0) [gauge,label=below:{$8$}] {};
\node (g5) at (-0.5,0) [gauge,label=below:{$6$}] {};
\node (g15) at (0.5,0) [gauge,label=below:{$4$}] {};
\node (g12) at (1.5,0) [gauger,label=below:{$2$}] {};
\node (g6) at (2.5,0) [gauge,label=below:{$2$}] {};
\node (g7) at (3.5,0) {$\cdots$};
\node (g14) at (4.5,0) [gauge,label=below:{$2$}] {};
\node (g8) at (5.5,0) [gauge,label=below:{$1$}] {};
\node (g10) at (-3.5,1) [gauge,label=left:{$6$}] {};
\draw (g1)--(g2)--(g3)--(g4)--(g13)--(g5)--(g15)--(g12)--(g6)--(g7)--(g14);
\draw (g3)--(g10);
\draw[transform canvas={yshift=-1.5pt}] (g14)--(g8);
\draw[transform canvas={yshift=1.5pt}] (g14)--(g8);
\draw (4.9,0.2)--(5.1,0)--(4.9,-0.2);
\draw [decorate,decoration={brace,mirror,amplitude=6pt}] (2.5,-0.8) --node[below=6pt] {$n-2$} (4.5,-0.8);
\end{tikzpicture}}$ & 
$\renewcommand{\arraystretch}{1.5}
\begin{array}{c}
E_8 \times \textcolor{blue}{SO(2n-1)}  \\
\downarrow \\
E_8 \times \textcolor{blue}{SO(2n)} 
\end{array}$ \\
\hline

\end{tabular}
 \caption{Quivers resulting from adding all possible simply laced slices to the empty nodes of n.min $D_n$ (\ref{nminDn}) to absorb $a$ for $n \geq 3$.}
    \label{tab:lhs sl addition nmin Dn}
\end{table}

\begin{table}[htbp!]
    \centering
    
 \hspace*{-0.75cm}\begin{tabular}{|c|c|c|} \hline
 
Added Slice & Quiver & Global Symmetry \\

\hline

$\begin{array}{c}
	b_k\\
	\\
	k\geq 3, n \geq 3
	\end{array}$ &
$\raisebox{-.5\height}{\begin{tikzpicture}[x=1cm,y=.8cm]
\node (g1) at (-4,0) [gauge,label=below:{$2$}] {};
\node (g2) at (-3,0) [gauge,label=below:{$4$}] {};
\node (g3) at (-2,0) {$\cdots$};
\node (g4) at (-1,0) [gauge,label=below:{$4$}] {};
\node (g5) at (0,0) [gauger,label=below:{$2$}] {};
\node (g12) at (1,0) [gauge,label=below:{$2$}] {};
\node (g6) at (2,0) {$\cdots$};
\node (g7) at (3,0) [gauge,label=below:{$2$}] {};
\node (g8) at (4,0) [gauge,label=below:{$1$}] {};
\node (g11) at (-1,1) [gauge,label=above:{$2$}] {};
\draw (g2)--(g3)--(g4)--(g5)--(g12)--(g6)--(g7);
\draw (g11)--(g4);
\draw[transform canvas={yshift=-1.5pt}] (g7)--(g8);
\draw[transform canvas={yshift=1.5pt}] (g7)--(g8);
\draw (3.4,0.2)--(3.6,0)--(3.4,-0.2);
\draw[transform canvas={yshift=-1.5pt}] (g1)--(g2);
\draw[transform canvas={yshift=1.5pt}] (g1)--(g2);
\draw (-3.4,0.2)--(-3.6,0)--(-3.4,-0.2);
\draw [decorate,decoration={brace,mirror,amplitude=6pt}] (-3,-0.8) --node[below=6pt] {$k-2$} (-1,-0.8);
\draw [decorate,decoration={brace,mirror,amplitude=6pt}] (1,-0.8) --node[below=6pt] {$n-2$} (3,-0.8);
\end{tikzpicture}}$ & 
$\renewcommand{\arraystretch}{1.5}
\begin{array}{c}
SO(2k+1) \times \textcolor{blue}{SO(2n-1)}  \\
\downarrow \\
SO(2k+1) \times \textcolor{blue}{SO(2n)} 
\end{array}$ \\
\hline

$\begin{array}{c}
	c_k\\
	\\
	k\geq 2, n \geq 3
	\end{array}$ &
$\raisebox{-.5\height}{\begin{tikzpicture}[x=1cm,y=.8cm]
\node (g1) at (-4,0) [gauge,label=below:{$2$}] {};
\node (g2) at (-3,0) [gauge,label=below:{$2$}] {};
\node (g3) at (-2,0) {$\cdots$};
\node (g4) at (-1,0) [gauge,label=below:{$2$}] {};
\node (g5) at (0,0) [gauger,label=below:{$2$}] {};
\node (g12) at (1,0) [gauge,label=below:{$2$}] {};
\node (g6) at (2,0) {$\cdots$};
\node (g7) at (3,0) [gauge,label=below:{$2$}] {};
\node (g8) at (4,0) [gauge,label=below:{$1$}] {};
\draw (g2)--(g3)--(g4);
\draw (g5)--(g12)--(g6)--(g7);
\draw[transform canvas={yshift=-1.5pt}] (g7)--(g8);
\draw[transform canvas={yshift=1.5pt}] (g7)--(g8);
\draw (3.4,0.2)--(3.6,0)--(3.4,-0.2);
\draw[transform canvas={yshift=-1.5pt}] (g1)--(g2);
\draw[transform canvas={yshift=1.5pt}] (g1)--(g2);
\draw (-3.6,0.2)--(-3.4,0)--(-3.6,-0.2);
\draw[transform canvas={yshift=-1.5pt}] (g4)--(g5);
\draw[transform canvas={yshift=1.5pt}] (g4)--(g5);
\draw (-0.4,0.2)--(-0.6,0)--(-0.4,-0.2);
\draw [decorate,decoration={brace,mirror,amplitude=6pt}] (-3,-0.8) --node[below=6pt] {$k-1$} (-1,-0.8);
\draw [decorate,decoration={brace,mirror,amplitude=6pt}] (1,-0.8) --node[below=6pt] {$n-2$} (3,-0.8);
\end{tikzpicture}}$ & 
$\renewcommand{\arraystretch}{1.5}
\begin{array}{c}
Sp(k) \times \textcolor{blue}{SO(2n-1)}  \\
\downarrow \\
Sp(k) \times \textcolor{blue}{SO(2n)} 
\end{array}$ \\
\hline

$\begin{array}{c}
	f_4\\
	\\
	n \geq 3
	\end{array}$ &
$\raisebox{-.5\height}{\begin{tikzpicture}[x=1cm,y=.8cm]
\node (g1) at (-4,0) [gauge,label=below:{$2$}] {};
\node (g2) at (-3,0) [gauge,label=below:{$4$}] {};
\node (g3) at (-2,0) [gauge,label=below:{$6$}] {};
\node (g4) at (-1,0) [gauge,label=below:{$4$}] {};
\node (g5) at (0,0) [gauger,label=below:{$2$}] {};
\node (g12) at (1,0) [gauge,label=below:{$2$}] {};
\node (g6) at (2,0) {$\cdots$};
\node (g7) at (3,0) [gauge,label=below:{$2$}] {};
\node (g8) at (4,0) [gauge,label=below:{$1$}] {};
\draw (g1)--(g2);
\draw (g3)--(g4)--(g5)--(g12)--(g6)--(g7);
\draw[transform canvas={yshift=-1.5pt}] (g7)--(g8);
\draw[transform canvas={yshift=1.5pt}] (g7)--(g8);
\draw (3.4,0.2)--(3.6,0)--(3.4,-0.2);
\draw[transform canvas={yshift=-1.5pt}] (g3)--(g2);
\draw[transform canvas={yshift=1.5pt}] (g3)--(g2);
\draw (-2.4,0.2)--(-2.6,0)--(-2.4,-0.2);
\draw [decorate,decoration={brace,mirror,amplitude=6pt}] (1,-0.8) --node[below=6pt] {$n-2$} (3,-0.8);
\end{tikzpicture}}$ & 
$\renewcommand{\arraystretch}{1.5}
\begin{array}{c}
F_4 \times \textcolor{blue}{SO(2n-1)}  \\
\downarrow \\
F_4 \times \textcolor{blue}{SO(2n)} 
\end{array}$ \\
\hline

$\begin{array}{c}
	g_2\\
	\\
	n \geq 3
	\end{array}$ &
$\raisebox{-.5\height}{\begin{tikzpicture}[x=1cm,y=.8cm]
\node (g3) at (-3,0) [gauge,label=below:{$2$}] {};
\node (g4) at (-2,0) [gauge,label=below:{$4$}] {};
\node (g5) at (-1,0) [gauger,label=below:{$2$}] {};
\node (g12) at (0,0) [gauge,label=below:{$2$}] {};
\node (g6) at (1,0) {$\cdots$};
\node (g7) at (2,0) [gauge,label=below:{$2$}] {};
\node (g8) at (3,0) [gauge,label=below:{$1$}] {};
\draw (g4)--(g5)--(g12)--(g6)--(g7);
\draw[transform canvas={yshift=-1.5pt}] (g7)--(g8);
\draw[transform canvas={yshift=1.5pt}] (g7)--(g8);
\draw (2.4,0.2)--(2.6,0)--(2.4,-0.2);
\draw[transform canvas={yshift=-2pt}] (g3)--(g4);
    \draw[transform canvas={yshift=0pt}] (g3)--(g4);
    \draw[transform canvas={yshift=2pt}] (g3)--(g4);
\draw (-2.4,0.2)--(-2.6,0)--(-2.4,-0.2);
\draw [decorate,decoration={brace,mirror,amplitude=6pt}] (0,-0.8) --node[below=6pt] {$n-2$} (2,-0.8);
\end{tikzpicture}}$ & 
$\renewcommand{\arraystretch}{1.5}
\begin{array}{c}
G_2 \times \textcolor{blue}{SO(2n-1)}  \\
\downarrow \\
G_2 \times \textcolor{blue}{SO(2n)} 
\end{array}$ \\
\hline

\end{tabular}
 \caption{Quivers resulting from adding all possible non-simply laced slice to the empty nodes of n.min $D_n$ (\ref{nminDn}) to absorb $a$ for $n \geq 3$}
    \label{tab:lhs nsl addition nmin Dn}
\end{table}

\begin{table}[htbp!]
    \centering
    
 \hspace*{-0.3cm}\begin{tabular}{|c|c|c|} \hline
 
Added Slice & Quiver & Global Symmetry \\

\hline

$\begin{array}{c}
	b_k\\
	\\
	k\geq 3
	\end{array}$ &
$\raisebox{-.5\height}{\begin{tikzpicture}[x=1cm,y=.8cm]
\node (g0) at (-2.5,0) [gauge,label=below:{$1$}] {};
\node (g1) at (-1.5,0) [gauger,label=below:{$2$}] {};
\node (g2) at (-0.5,0) [gauge,label=below:{$4$}] {};
\node (g3) at (0.5,0) {$\cdots$};
\node (g4) at (1.5,0) [gauge,label=below:{$4$}] {};
\node (g5) at (2.5,0) [gauge,label=below:{$2$}] {};
\node (g7) at (1.5,1) [gauge,label=above:{$2$}] {};
\draw (g0)--(g1);
\draw (g2)--(g3)--(g4)--(g5);
\draw (g7)--(g4);
\draw[transform canvas={yshift=-1.5pt}] (g1)--(g2);
\draw[transform canvas={yshift=1.5pt}] (g1)--(g2);
\draw (-0.9,0.2)--(-1.1,0)--(-0.9,-0.2);
\draw [decorate,decoration={brace,mirror,amplitude=6pt}] (-0.5,-0.8) --node[below=6pt] {$k-2$} (1.5,-0.8);
\end{tikzpicture}}$ & 
$\renewcommand{\arraystretch}{1.5}
\begin{array}{c}
\textcolor{blue}{SU(2)} \times SO(2k) \\
\downarrow \\
\textcolor{blue}{SU(2)^2} \times SO(2k)
\end{array}$ \\
\hline

$\begin{array}{c}
	c_k\\
	\\
	k \geq 2
	\end{array}$ &
$\raisebox{-.5\height}{\begin{tikzpicture}[x=1cm,y=.8cm]
    \node (g9) at (-3.5,0) [gauge,label=below:{$2$}] {};
    \node (g7) at (-2.5,0) [gauge,label=below:{$2$}] {};
    \node (g6) at (-1.5,0) {$\cdots$};
    \node (g1) at (-0.5,0) [gauger,label=below:{$2$}] {};
    \node (g8) at (0.5,0) [gauge,label=below:{$2$}] {};
    \node (g2) at (1.5,0) {$\cdots$};
    \node (g3) at (2.5,0) [gauge,label=below:{$2$}] {};
    \node (g4) at (3.5,0) [gauge,label=below:{$2$}] {};
    \node (g10) at (-0.5,1) [gauge,label=above:{$1$}] {};
    \draw (g7)--(g6)--(g1)--(g8)--(g2)--(g3);
    \draw (g10)--(g1);
    \draw[transform canvas = {yshift=-1.5pt}] (g9)--(g7);
    \draw[transform canvas = {yshift=1.5pt}] (g9)--(g7);
    \draw (-3.1,0.2)--(-2.9,0)--(-3.1,-0.2);
    \draw[transform canvas = {yshift=-1.5pt}] (g3)--(g4);
    \draw[transform canvas = {yshift=1.5pt}] (g3)--(g4);
    \draw (3.1,0.2)--(2.9,0)--(3.1,-0.2);
    \draw [decorate,decoration={brace,mirror,amplitude=6pt}] (0.5,-0.8) --node[below=6pt] {$k-j-1$} (2.5,-0.8);
    \draw [decorate,decoration={brace,mirror,amplitude=6pt}] (-2.5,-0.8) --node[below=6pt] {$j$} (-0.5,-0.8);
    \end{tikzpicture}}$ & 
$\renewcommand{\arraystretch}{1.5}
\begin{array}{c}
\textcolor{blue}{SU(2)} \times Sp(j) \times Sp(k-j)\\
\downarrow \\
\textcolor{blue}{SU(2)^2} \times Sp(j) \times Sp(k-j)
\end{array}$ \\

\hline

$\begin{array}{c}
	f_4\\
	\end{array}$ &
$\raisebox{-.5\height}{\begin{tikzpicture}[x=1cm,y=.8cm]
\node (g1) at (-2.5,0) [gauge,label=below:{$1$}] {};
\node (g2) at (-1.5,0) [gauger,label=below:{$2$}] {};
\node (g3) at (-0.5,0) [gauge,label=below:{$4$}] {};
\node (g4) at (0.5,0) [gauge,label=below:{$6$}] {};
\node (g5) at (1.5,0) [gauge,label=below:{$4$}] {};
\node (g12) at (2.5,0) [gauge,label=below:{$2$}] {};
\draw (g3)--(g2)--(g1);
\draw (g4)--(g5)--(g12);
\draw[transform canvas={yshift=-1.5pt}] (g3)--(g4);
\draw[transform canvas={yshift=1.5pt}] (g3)--(g4);
\draw (0.1,0.2)--(-0.1,0)--(0.1,-0.2);
\end{tikzpicture}}$ & 
$\renewcommand{\arraystretch}{1.5}
\begin{array}{c}
\textcolor{blue}{SU(2)} \times SO(9) \\
\downarrow \\
\textcolor{blue}{SU(2)^2} \times SO(9) 
\end{array}$ \\
\hline

$\begin{array}{c}
	a_{2k-1}\\
	\\
	k\geq 3
	\end{array}$ &
$\raisebox{-.5\height}{\begin{tikzpicture}[x=1cm,y=.8cm]
\node (g0) at (-2.5,0) [gauge,label=below:{$1$}] {};
\node (g1) at (-1.5,0) [gauger,label=below:{$2$}] {};
\node (g2) at (-0.5,0) [gauge,label=below:{$4$}] {};
\node (g3) at (0.5,0) {$\cdots$};
\node (g4) at (1.5,0) [gauge,label=below:{$4$}] {};
\node (g5) at (2.5,0) [gauge,label=below:{$4$}] {};
\node (g7) at (-0.5,1) [gauge,label=above:{$2$}] {};
\draw (g0)--(g1)--(g2)--(g3)--(g4);
\draw (g7)--(g2);
\draw[transform canvas={yshift=-1.5pt}] (g4)--(g5);
\draw[transform canvas={yshift=1.5pt}] (g4)--(g5);
\draw (2.1,0.2)--(1.9,0)--(2.1,-0.2);
\draw [decorate,decoration={brace,mirror,amplitude=6pt}] (-0.5,-0.8) --node[below=6pt] {$k-2$} (1.5,-0.8);
\end{tikzpicture}}$ & 
$\renewcommand{\arraystretch}{1.5}
\begin{array}{c}
\textcolor{blue}{SU(2)} \times Sp(k) \\
\downarrow \\
\textcolor{blue}{SU(2)^2} \times Sp(k)
\end{array}$ \\
\hline

$\begin{array}{c}
	d_{k+1}\\
	\\
	k \geq 2
	\end{array}$ &
$\raisebox{-.5\height}{\begin{tikzpicture}[x=1cm,y=.8cm]
\node (g0) at (-2.5,0) [gauge,label=below:{$1$}] {};
\node (g1) at (-1.5,0) [gauger,label=below:{$2$}] {};
\node (g2) at (-0.5,0) [gauge,label=below:{$4$}] {};
\node (g3) at (0.5,0) {$\cdots$};
\node (g4) at (1.5,0) [gauge,label=below:{$4$}] {};
\node (g5) at (2.5,0) [gauge,label=below:{$2$}] {};
\draw (g0)--(g1);
\draw (g2)--(g3)--(g4);
\draw[transform canvas={yshift=-1.5pt}] (g1)--(g2);
\draw[transform canvas={yshift=1.5pt}] (g1)--(g2);
\draw (-0.9,0.2)--(-1.1,0)--(-0.9,-0.2);
\draw[transform canvas={yshift=-1.5pt}] (g5)--(g4);
\draw[transform canvas={yshift=1.5pt}] (g5)--(g4);
\draw (1.9,-0.2)--(2.1,0)--(1.9,0.2);
\draw [decorate,decoration={brace,mirror,amplitude=6pt}] (-0.5,-0.8) --node[below=6pt] {$k-1$} (1.5,-0.8);
\end{tikzpicture}}$ & 
$\renewcommand{\arraystretch}{1.5}
\begin{array}{c}
\textcolor{blue}{SU(2)} \times SO(2k+1) \\
\downarrow \\
\textcolor{blue}{SU(2)^2} \times SO(2k+1)
\end{array}$ \\
\hline

$\begin{array}{c}
	e_6\\
	\end{array}$ &
$\raisebox{-.5\height}{\begin{tikzpicture}[x=1cm,y=.8cm]
\node (g1) at (-2.5,0) [gauge,label=below:{$1$}] {};
\node (g2) at (-1.5,0) [gauger,label=below:{$2$}] {};
\node (g3) at (-0.5,0) [gauge,label=below:{$4$}] {};
\node (g4) at (0.5,0) [gauge,label=below:{$6$}] {};
\node (g5) at (1.5,0) [gauge,label=below:{$8$}] {};
\node (g12) at (2.5,0) [gauge,label=below:{$4$}] {};
\draw (g1)--(g2)--(g3)--(g4);
\draw (g5)--(g12);
\draw[transform canvas={yshift=-1.5pt}] (g5)--(g4);
\draw[transform canvas={yshift=1.5pt}] (g5)--(g4);
\draw (1.1,0.2)--(0.9,0)--(1.1,-0.2);
\end{tikzpicture}}$ & 
$\renewcommand{\arraystretch}{1.5}
\begin{array}{c}
\textcolor{blue}{SU(2)} \times F_4 \\
\downarrow \\
\textcolor{blue}{SU(2)^2} \times F_4 
\end{array}$ \\
\hline

\end{tabular}
 \caption{Quivers resulting from adding all possible elementary slices to n.min $D_2$ (\ref{nminD2}) to absorb $a$.}
    \label{tab:adding to nmin D2}
\end{table}

\clearpage

\subsection{\texorpdfstring{Enhancement from $SU(n) \times U(1)$ to $SO(2n)$}{Enhancement from AncrossU1 to Dn}} \label{sec: sun u1 to so2n}
The final quiver we consider for enhancement to $D$-type symmetry is that of (\ref{nminBn}) with a further $\mathbb{Z}_2$ quotient taken, to combine the two spinor nodes into another rank two node with an adjoint hypermultiplet \cite{Bourget:2020bxh}:
\begin{equation}\label{double adj Dn quiver}
    \begin{tikzpicture}[x=1cm,y=.8cm]
    \node (g1) at (-1.5,1.1) [gauge,label=right:{$2$},label={[red]left:{$a_1$}}] {};
    \node (g2) at (-1.5,0) [gauge,label=below:{$2$},label={[red]left:{$b_1$}}] {};
    \node (g3) at (-0.5,0) {$\cdots$};
    \node (g4) at (0.5,0) [gauge,label=below:{$2$},label={[red]right:{$b_2$}}] {};
    \node (g5) at (0.5,1.1) [gauge,label=left:{$2$},label={[red]right:{$a_2$}}] {};
    \draw (g1)--(g2)--(g3)--(g4)--(g5);
    \draw (g1) to [out=45,in=135,looseness=10] (g1);
    \draw (g5) to [out=45,in=135,looseness=10] (g5);
    \draw [decorate,decoration={brace,mirror,amplitude=6pt}] (-1.5,-0.8) --node[below=6pt] {$n-2$} (0.5,-0.8);
    \end{tikzpicture}
\end{equation}
This quiver has $SO(2n)$ global symmetry because under the modified BGS algorithm of Section \ref{sec:quivers with adj matter} we see the shape of the twisted affine $d_n$ quiver. As mentioned in Section \ref{sec: quiver addition}, a modification on the quiver addition algorithm is needed in this case. This is because if we take $c_1$ and $c_2$ to be the existing nodes the right and left of $b_1$ and $b_2$ respectively, then one can check that it is impossible to absorb $a_1$ and $a_2$ by adding on to these nodes. Thus the only legitimate additions one can perform here are by taking $c_1$ and $c_2$ to be the empty nodes to the left and right of $b_1$ and $b_2$ respectively, 
\begin{equation}\label{cnode double adj Dn quiver}
    \begin{tikzpicture}[x=1cm,y=.8cm]
    \node (g0) at (-2.5,0) [emptygauge, label={[gray]below:{$0$}}, label={[red]left:{$c_1$}}] {};
    \node (g1) at (-1.5,1.1) [gauge,label=right:{$2$},label={[red]left:{$a_1$}}] {};
    \node (g2) at (-1.5,0) [gauge,label=below:{$2$},label={[red]315:{$b_1$}}] {};
    \node (g3) at (-0.5,0) {$\cdots$};
    \node (g4) at (0.5,0) [gauge,label=below:{$2$},label={[red]225:{$b_2$}}] {};
    \node (g5) at (0.5,1.1) [gauge,label=left:{$2$},label={[red]right:{$a_2$}}] {};
    \node (g6) at (1.5,0) [emptygauge, label={[gray]below:{$0$}}, label={[red]right:{$c_2$}}] {};
    \node (g7) at (2,0) {$,$};
    \draw (g1)--(g2)--(g3)--(g4)--(g5);
    \draw [dashed,gray] (g0)--(g2);
    \draw [dashed,gray] (g4)--(g6);
    \draw (g1) to [out=45,in=135,looseness=10] (g1);
    \draw (g5) to [out=45,in=135,looseness=10] (g5);
    \draw [decorate,decoration={brace,mirror,amplitude=6pt}] (-1.5,-0.8) --node[below=6pt] {$n-2$} (0.5,-0.8);
    \end{tikzpicture}
\end{equation}
and adding some affine quiver $g_k$ to $c_1$ and another affine quiver $\tilde{g}_l$ to $c_2$, so that upon double subtraction of both $g_k$ and $\tilde{g}_l$, both $a_1$ and $a_2$ must appear in order to rebalance. There will be $45$ such quivers so we shall not list them all here, but they take the form
\begin{equation}\label{double adj so2n enhancement}
     \begin{tikzpicture}[x=1cm,y=.8cm]
    \node (g-1) at (-3.5,0) {$\cdots$};
    \node (g0) at (-3,0) {$\cdots$};
    \node (g1) at (-2,0) [gauger,label=below:{$2$}] {};
    \node (g2) at (-1,0) [gauge,label=below:{$2$}] {};
    \node (g3) at (-0,0) {$\cdots$};
    \node (g4) at (1,0) [gauge,label=below:{$2$}] {};
    \node (g5) at (2,0) [gauger,label=below:{$2$}] {};
    \node (g6) at (3,0) {$\cdots$};
    \node (g7) at (3.5,0) {$\cdots$};
    \draw (g0)--(g1)--(g2)--(g3)--(g4)--(g5)--(g6);
    \draw [decorate,decoration={brace,mirror,amplitude=6pt}] (-4,-0.8) --node[below=6pt] {$2 \, g_k$} (-2,-0.8);
    \draw [decorate,decoration={brace,mirror,amplitude=6pt}] (-1,-0.8) --node[below=6pt] {$n-2$} (1,-0.8);
    \draw [decorate,decoration={brace,mirror,amplitude=6pt}] (2,-0.8) --node[below=6pt] {$2\, \tilde{g}_l$} (4,-0.8);
    \end{tikzpicture}
\end{equation}
and will exhibit the enhancement from the BGS as 
\begin{equation}
    G_k \times \textcolor{blue}{SU(n-1) \times U(1)} \times \tilde{G}_l \rightarrow G_k \times \textcolor{blue}{SO(2n)} \times \tilde{G}_l.
\end{equation}
To be explicit, we show an example of adding $b_3$ to the left of node $a_1$, and $d_4$ to the right of node $a_2$ - that is, $g_k=b_3$ and $\tilde{g}_l = d_4$:
\begin{equation}
     \begin{tikzpicture}[x=1cm,y=.8cm]
    \node (g-1) at (-4,0) [gauge,label=below:{$2$}] {};
    \node (g0) at (-3,0) [gauge,label=below:{$4$}] {};
    \node (g1) at (-2,0) [gauger,label=below:{$2$}] {};
    \node (g8) at (-3,1) [gauge,label=left:{$2$}] {};
    \node (g2) at (-1,0) [gauge,label=below:{$2$}] {};
    \node (g3) at (-0,0) {$\cdots$};
    \node (g4) at (1,0) [gauge,label=below:{$2$}] {};
    \node (g5) at (2,0) [gauger,label=below:{$2$}] {};
    \node (g6) at (3,0) [gauge,label=below:{$4$}] {};
    \node (g7) at (4,0) [gauge,label=below:{$2$}] {};
    \node (g9) at (2.5,1) [gauge,label=left:{$2$}] {};
    \node (g10) at (3.5,1) [gauge,label=right:{$2$}] {};
    \draw (g0)--(g1)--(g2)--(g3)--(g4)--(g5)--(g6)--(g7);
    \draw (g6)--(g9);
    \draw (g6)--(g10);
    \draw (g8)--(g0);
    \draw[transform canvas={yshift=-2pt}] (g-1)--(g0);
    \draw[transform canvas={yshift=2pt}] (g-1)--(g0);
    \draw (-3.4,0.2)--(-3.6,0)--(-3.4,-0.2);
     \draw [decorate,decoration={brace,mirror,amplitude=6pt}] (-1,-0.8) --node[below=6pt] {$n-2$} (1,-0.8);
    \end{tikzpicture}
\end{equation}
Several of these cases have been checked via the Hilbert series, and the enhancement has been confirmed for all of these.\\

\noindent Again, due to the lack of long rank one nodes present, a different addition prior to absorbing $a_1$ or $a_2$ cannot happen as it did in Section \ref{A type to B type section}, and so the quivers given in Tables \ref{tab:RHS nmin Dn}, \ref{tab:lhs sl addition nmin Dn}, \ref{tab:lhs nsl addition nmin Dn}, \ref{tab:adding to nmin D2} and those encapsulated by the general form of (\ref{double adj so2n enhancement}) comprise the full list of quivers derived from quiver addition on the next to minimal nilpotent orbit of $D_n$ (\ref{nminDn}) and (\ref{double adj Dn quiver}) whose global symmetry contains a factor which is enhanced to $SO(2n)$ from that predicted by the BGS.

\section{\texorpdfstring{Enhancement to $SU(3)$}{Enhancement to SU3}}
\label{sec:enhancement to A2}
The final quiver from Table \ref{tab:adj hyper Qs} that we know how to perform quiver addition on is
\begin{equation} \label{A2 enhancement}
    \begin{tikzpicture}[x=1cm,y=.8cm]
    \node (g1) at (0,1.1) [gauge,label=right:{$4$},label={[red]left:{$a$}}] {};
    \node (g2) at (0,0) [gauge,label=below:{$2$},label={[red]left:{$b$}}] {};
    \node (g3) at (0.5,0) {$.$};
    \draw (g1)--(g2);
    \draw (g1) to [out=45,in=135,looseness=10] (g1);
    \end{tikzpicture}
\end{equation}
Note that although the greatest common divisor of the ranks of the gauge nodes here is greater than one, the Coulomb branch of this quiver is still a Hyperk\"{a}hler cone. The Hasse diagram for (\ref{A2 enhancement}) is a question for future work. The ammended BGS algorithm in Section \ref{sec:quivers with adj matter} tells us that the BGS of this quiver is $SU(3)$, and indeed this estimated global symmetry is confirmed upon Hilbert series computation. As a result, the quivers presented in this section all experience the enhancement in a factor of their BGS to $SU(3)$:
\begin{equation}
    \prod_{i} G_i \times \textcolor{blue}{SU(2)} \rightarrow \prod_{i} G_i \times \textcolor{blue}{SU(3)},
\end{equation}
for some semi-simple Lie groups $G_i$.\\

\noindent Following the terminology of the quiver addition algorithm in Section \ref{sec: quiver addition}, there are no possible existing $c$-nodes in (\ref{A2 enhancement}), so the only available $c$-nodes here are empty:
\begin{equation} \label{cnode A2 enhancement}
    \begin{tikzpicture}[x=1cm,y=.8cm]
    \node (g0) at (-1,0) [emptygauge,label={[gray]below:{$0$}}, label={[red]left:{$c$}}] {};
    \node (g1) at (0,1.1) [gauge,label=right:{$4$},label={[red]left:{$a$}}] {};
    \node (g2) at (0,0) [gauge,label=below:{$2$},label={[red]315:{$b$}}] {};
    \node (g3) at (0.5,0) {$,$};
    \draw (g1)--(g2);
    \draw [dashed,gray] (g2)--(g0);
    \draw (g1) to [out=45,in=135,looseness=10] (g1);
    \end{tikzpicture}
\end{equation}
A result of this is that, as before, we may add any affine slice. Since the node $a$ is of rank four, to absorb it the slice that we add $g_k$ must be added four times. The results of performing these quiver additions are given in Tables \ref{tab: sl A2 enhancement} and \ref{tab:nsl A2 enhancement}. For those quivers in these tables which depend on the parameter $k$, for low values of $k$ Hilbert series computations have confirmed this enhancement. The $e_6$ and $g_2$ cases have also been confirmed, but we have been unable to compute the Hilbert series for the $e_7$, $e_8$ and $f_4$ cases, and so these enhancements remain as conjectures. Also, note that the greatest common divisor of the node ranks of the quivers in Tables \ref{tab: sl A2 enhancement} and \ref{tab:nsl A2 enhancement} is greater than one. As mentioned in Section \ref{sec:Intro}, this is often an indication of a diverging Hilbert series. Here the quivers are too complex to confirm or deny this by computation, but since they are derived from (\ref{A2 enhancement}) which we know does not suffer this divergence, there is a possibility that these quivers are also exempt. \\

\noindent Again as with (\ref{nminDn}) and (\ref{double adj Dn quiver}) of Section \ref{sec:enhancement to Dn}, since (\ref{A2 enhancement}) has no long rank one nodes we cannot add onto this quiver other than to absorb $a$, and so do not get further constructions as in Section \ref{A type to B type section}. As a result, Tables \ref{tab: sl A2 enhancement} and \ref{tab:nsl A2 enhancement} conclude all possible quivers derived from performing quiver addition on (\ref{A2 enhancement}) to absorb $a$ whose global symmetry contains a factor which is enhanced to $SU(3)$.

\clearpage

\begin{table}[htbp!]
    \centering
    
 \hspace*{-0.05cm}\begin{tabular}{|c|c|c|} \hline
 
Added Slice & Quiver & Global Symmetry \\

\hline

$\begin{array}{c}
	a_k\\
	\\
	k\geq 1
	\end{array}$ &
$\raisebox{-.5\height}{\begin{tikzpicture}[x=1cm,y=.8cm]
\node (g1) at (-1.5,0) [gauge,label=below:{$2$}] {};
\node (g2) at (-0.5,0) [gauger,label=below:{$4$}] {};
\node (g4) at (0.5,1) [gauge,label=above:{$4$}] {};
\node (g5) at (0.5,0) {$\cdots$};
\node (g6) at (1.5,0) [gauge,label=below:{$4$}] {};
\draw (g1)--(g2)--(g5)--(g6)--(g4);
\draw (g4)--(g2);
\draw [decorate,decoration={brace,mirror,amplitude=6pt}] (-0.5,-0.8) --node[below=6pt] {$k$} (1.5,-0.8);
\end{tikzpicture}}$ & 
$\renewcommand{\arraystretch}{1.5}
\begin{array}{c}
\textcolor{blue}{SU(2)} \times SU(k+1) \\
\downarrow \\
\textcolor{blue}{SU(3)} \times SU(k+1)
\end{array}$ \\
\hline

$\begin{array}{c}
	d_k\\
	\\
	k\geq 4
	\end{array}$ &
$\raisebox{-.5\height}{\begin{tikzpicture}[x=1cm,y=.8cm]
\node (g0) at (-2.5,0) [gauge,label=below:{$2$}] {};
\node (g1) at (-1.5,0) [gauger,label=below:{$4$}] {};
\node (g2) at (-0.5,0) [gauge,label=below:{$8$}] {};
\node (g3) at (0.5,0) {$\cdots$};
\node (g4) at (1.5,0) [gauge,label=below:{$8$}] {};
\node (g5) at (2.5,0) [gauge,label=below:{$4$}] {};
\node (g12) at (1.5,1) [gauge,label=above:{$4$}] {};
\node (g13) at (-0.5,1) [gauge,label=above:{$4$}] {};
\draw (g0)--(g1)--(g2)--(g3)--(g4)--(g5);
\draw (g12)--(g4);
\draw (g13)--(g2);
\draw [decorate,decoration={brace,mirror,amplitude=6pt}] (-0.5,-0.8) --node[below=6pt] {$k-3$} (1.5,-0.8);
\end{tikzpicture}}$ & 
$\renewcommand{\arraystretch}{1.5}
\begin{array}{c}
\textcolor{blue}{SU(2)} \times SO(2k) \\
\downarrow \\
\textcolor{blue}{SU(3)} \times SO(2k)
\end{array}$ \\
\hline

$\begin{array}{c}
	e_6\\
	\end{array}$ &
$\raisebox{-.5\height}{\begin{tikzpicture}[x=1cm,y=.8cm]
\node (g0) at (-2.5,0) [gauge,label=below:{$2$}] {};
\node (g1) at (-1.5,0) [gauger,label=below:{$4$}] {};
\node (g2) at (-0.5,0) [gauge,label=below:{$8$}] {};
\node (g3) at (0.5,0) [gauge,label=below:{$12$}] {};
\node (g4) at (1.5,0) [gauge,label=below:{$8$}] {};
\node (g5) at (2.5,0) [gauge,label=below:{$4$}] {};
\node (g12) at (0.5,1) [gauge,label=left:{$8$}] {};
\node (g13) at (0.5,2) [gauge,label=left:{$4$}] {};
\draw (g0)--(g1)--(g2)--(g3)--(g4)--(g5);
\draw (g13)--(g12)--(g3);
\end{tikzpicture}}$ & 
$\renewcommand{\arraystretch}{1.5}
\begin{array}{c}
\textcolor{blue}{SU(2)} \times E_6 \\
\downarrow \\
\textcolor{blue}{SU(3)} \times E_6
\end{array}$ \\
\hline

$\begin{array}{c}
	e_7\\
	\end{array}$ &
$\raisebox{-.5\height}{\begin{tikzpicture}[x=1cm,y=.8cm]
\node (g1) at (-3.5,0) [gauge,label=below:{$2$}] {};
\node (g2) at (-2.5,0) [gauger,label=below:{$4$}] {};
\node (g3) at (-1.5,0) [gauge,label=below:{$8$}] {};
\node (g4) at (-0.5,0) [gauge,label=below:{$12$}] {};
\node (g5) at (0.5,0) [gauge,label=below:{$16$}] {};
\node (g6) at (1.5,0) [gauge,label=below:{$12$}] {};
\node (g7) at (2.5,0) [gauge,label=below:{$8$}] {};
\node (g8) at (3.5,0) [gauge,label=below:{$4$}] {};
\node (g9) at (0.5,1) [gauge,label=left:{$8$}] {};
\draw (g1)--(g2)--(g3)--(g4)--(g5)--(g6)--(g7)--(g8);
\draw (g5)--(g9);
\end{tikzpicture}}$ & 
$\renewcommand{\arraystretch}{1.5}
\begin{array}{c}
\textcolor{blue}{SU(2)} \times E_7 \\
\downarrow \\
\textcolor{blue}{SU(3)} \times E_7
\end{array}$ \\
\hline

$\begin{array}{c}
	e_8\\
	\end{array}$ &
$\raisebox{-.5\height}{\begin{tikzpicture}[x=1cm,y=.8cm]
\node (g1) at (-4,0) [gauge,label=below:{$2$}] {};
\node (g2) at (-3,0) [gauger,label=below:{$4$}] {};
\node (g3) at (-2,0) [gauge,label=below:{$8$}] {};
\node (g4) at (-1,0) [gauge,label=below:{$12$}] {};
\node (g5) at (0,0) [gauge,label=below:{$16$}] {};
\node (g6) at (1,0) [gauge,label=below:{$20$}] {};
\node (g7) at (2,0) [gauge,label=below:{$24$}] {};
\node (g8) at (3,0) [gauge,label=below:{$16$}] {};
\node (g9) at (4,0) [gauge,label=below:{$8$}] {};
\node (g10) at (2,1) [gauge,label=left:{$12$}] {};
\draw (g1)--(g2)--(g3)--(g4)--(g5)--(g6)--(g7)--(g8)--(g9);
\draw (g7)--(g10);
\end{tikzpicture}}$ & 
$\renewcommand{\arraystretch}{1.5}
\begin{array}{c}
\textcolor{blue}{SU(2)} \times E_8 \\
\downarrow \\
\textcolor{blue}{SU(3)} \times E_8
\end{array}$ \\
\hline

\end{tabular}
 \caption{Quivers resulting from adding all possible simply laced elementary slices to (\ref{A2 enhancement}) to absorb $a$.}
    \label{tab: sl A2 enhancement}
\end{table}

\begin{table}[htbp!]
    \centering
    
\begin{tabular}{|c|c|c|} \hline
 
Added Slice & Quiver & Global Symmetry \\

\hline

$\begin{array}{c}
	b_k\\
	\\
	k \geq 3
	\end{array}$ &
$\raisebox{-.5\height}{\begin{tikzpicture}[x=1cm,y=.8cm]
\node (g0) at (-2.5,0) [gauge,label=below:{$2$}] {};
\node (g1) at (-1.5,0) [gauger,label=below:{$4$}] {};
\node (g2) at (-0.5,0) [gauge,label=below:{$8$}] {};
\node (g3) at (0.5,0) {$\cdots$};
\node (g4) at (1.5,0) [gauge,label=below:{$8$}] {};
\node (g5) at (2.5,0) [gauge,label=below:{$4$}] {};
\node (g11) at (-0.5,1) [gauge,label=above:{$4$}] {};
\draw (g0)--(g1)--(g2)--(g3)--(g4);
\draw (g2)--(g11);
\draw[transform canvas={yshift=-1.5pt}] (g5)--(g4);
\draw[transform canvas={yshift=1.5pt}] (g5)--(g4);
\draw (1.9,-0.2)--(2.1,0)--(1.9,0.2);
\draw [decorate,decoration={brace,mirror,amplitude=6pt}] (-0.5,-0.8) --node[below=6pt] {$k-2$} (1.5,-0.8);
\end{tikzpicture}}$ & 
$\renewcommand{\arraystretch}{1.5}
\begin{array}{c}
\textcolor{blue}{SU(2)} \times SO(2k+1) \\
\downarrow \\
\textcolor{blue}{SU(3)} \times SO(2k+1)
\end{array}$ \\
\hline

$\begin{array}{c}
	c_k\\
	\\
	k \geq 2
	\end{array}$ &
$\raisebox{-.5\height}{\begin{tikzpicture}[x=1cm,y=.8cm]
\node (g0) at (-2.5,0) [gauge,label=below:{$2$}] {};
\node (g1) at (-1.5,0) [gauger,label=below:{$4$}] {};
\node (g2) at (-0.5,0) [gauge,label=below:{$4$}] {};
\node (g3) at (0.5,0) {$\cdots$};
\node (g4) at (1.5,0) [gauge,label=below:{$4$}] {};
\node (g5) at (2.5,0) [gauge,label=below:{$4$}] {};
\draw (g0)--(g1);
\draw (g2)--(g3)--(g4);
\draw[transform canvas={yshift=-1.5pt}] (g1)--(g2);
\draw[transform canvas={yshift=1.5pt}] (g1)--(g2);
\draw (-1.1,0.2)--(-0.9,0)--(-1.1,-0.2);
\draw[transform canvas={yshift=-1.5pt}] (g5)--(g4);
\draw[transform canvas={yshift=1.5pt}] (g5)--(g4);
\draw (2.1,-0.2)--(1.9,0)--(2.1,0.2);
\draw [decorate,decoration={brace,mirror,amplitude=6pt}] (-0.5,-0.8) --node[below=6pt] {$k-1$} (1.5,-0.8);
\end{tikzpicture}}$ & 
$\renewcommand{\arraystretch}{1.5}
\begin{array}{c}
\textcolor{blue}{SU(2)} \times Sp(k) \\
\downarrow \\
\textcolor{blue}{SU(3)} \times Sp(k)
\end{array}$ \\
\hline

$\begin{array}{c}
	f_4\\
	\end{array}$ &
$\raisebox{-.5\height}{\begin{tikzpicture}[x=1cm,y=.8cm]
\node (g0) at (-2.5,0) [gauge,label=below:{$2$}] {};
\node (g1) at (-1.5,0) [gauger,label=below:{$4$}] {};
\node (g2) at (-0.5,0) [gauge,label=below:{$8$}] {};
\node (g3) at (0.5,0) [gauge,label=below:{$12$}] {};
\node (g4) at (1.5,0) [gauge,label=below:{$8$}] {};
\node (g5) at (2.5,0) [gauge,label=below:{$4$}] {};
\draw (g0)--(g1)--(g2)--(g3);
\draw (g4)--(g5);
\draw[transform canvas={yshift=-1.5pt}] (g3)--(g4);
\draw[transform canvas={yshift=1.5pt}] (g3)--(g4);
\draw (0.9,-0.2)--(1.1,0)--(0.9,0.2);
\end{tikzpicture}}$ & 
$\renewcommand{\arraystretch}{1.5}
\begin{array}{c}
\textcolor{blue}{SU(2)} \times F_4 \\
\downarrow \\
\textcolor{blue}{SU(3)} \times F_4
\end{array}$ \\
\hline

$\begin{array}{c}
    g_2
    \end{array}$ &
$\raisebox{-.5\height}{\begin{tikzpicture}[x=1cm,y=.8cm]
\node (g1) at (-1.5,0) [gauge,label=below:{$2$}] {};
\node (g2) at (-0.5,0) [gauger,label=below:{$4$}] {};
\node (g3) at (0.5,0) [gauge,label=below:{$8$}] {};
\node (g4) at (1.5,0) [gauge,label=below:{$4$}] {};
\draw (g1)--(g2)--(g3);
\draw[transform canvas = {yshift=-2pt}] (g4)--(g3);
\draw[transform canvas = {yshift=0pt}] (g4)--(g3);
\draw[transform canvas = {yshift =2pt}] (g4)--(g3);
\draw (0.9,0.2)--(1.1,0)--(0.9,-0.2);
\end{tikzpicture}}$ &
$\renewcommand{\arraystretch}{1.5}
\begin{array}{c}
\textcolor{blue}{SU(2)} \times G_2\\
\downarrow \\
\textcolor{blue}{SU(3)} \times G_2
\end{array}$\\
\hline

\end{tabular}
 \caption{Quivers resulting from adding all possible non-simply laced elementary slices to (\ref{A2 enhancement}) to absorb $a$.}
    \label{tab:nsl A2 enhancement}
\end{table}

\clearpage

\section*{Acknowledgements}
We would like to thank Antoine Bourget, Julius Grimminger, Zhenghao Zhong and Rudolph Kalveks for great help and support throughout this project with both computations and conceptual understanding. Also Jacques Distler for helpful email exchanges providing a double check for some of the results presented, Gabi Zafrir for correspondence regarding the previous appearances of symmetry enhancements and Siyul Lee for his result of the Hilbert series for the $d_4/S_4$ discrete quotient. The work of KG is supported by STFC DTP research studentship grant ST/V506734/1, and AH and KG are both supported by STFC grant ST/P000762/1 and ST/T000791/1.

\appendix

\section{Quiver Subtraction}
\label{app:quiver subtraction}
Here we give the techniques involved in quiver subtraction that are needed to perform the computations in this paper: how to identify which elementary slices can be subtracted from a given unframed unitary $3d$ $\mathcal{N}=4$ quiver, and how to perform said subtractions. This algorithm can be used to find the Coulomb branch Hasse diagram of an unframed quiver, and in particular to learn about its global symmetry. The techniques in the method we are about to outline are all detailed in other papers \cite{Cabrera:2018ann,Bourget:2019aer,Bourget:2020mez}, but we compile them here for the readers convenience. It will be necessary to consider the magnetic quivers for all known elementary slices, and the complete up to date collection can be found by compiling Table $1$ of \cite{Bourget:2021siw}, and Table \ref{tab:twisted affine Dds} of this note.\\

\subsection*{Quiver subtraction algorithm} 
Consider an unframed unitary $3d$ $\mathcal{N}=4$ quiver $Q$. A valid elementary slice that can be subtracted from $Q$ is one whose magnetic quiver $\sigma$ has nodes that ``lie within" $Q$. By this, we mean that a connected subset of the nodes in $Q$, $Q_\sigma$, are in the shape of $\sigma$,\footnote{More precisely, $Q_\sigma$ being in the shape of $\sigma$ means that, up to some permutations of rows, the Cartan matrix describing the links between nodes in $Q_\sigma$ is equal to the Cartan matrix describing the links between nodes in $\sigma$.} and that the rank of each node in $Q_\sigma$ are greater than or equal to that of its corresponding nodes in $\sigma$. Note that there may be many possible choices for $Q_\sigma$ for a given $\sigma$. The following steps then show us how to perform the subtraction $Q-\sigma$ for a particular $Q_\sigma$.
\begin{inparaenum}
    \item \textbf{Subtract}. For each node in $Q_\sigma$, subtract the rank of the corresponding node in $\sigma$.
    \item \textbf{Rebalance}. Identify the nodes which have undergone a change in excess, as defined in (\ref{excess defn}), due to the subtraction. Call the set of such nodes $E$, and define $E_l$ as the subset of $E$ that are long nodes and $E_s$ as the subset of $E$ that are short nodes. The nodes in $E$ must be ``rebalanced" so that they have the same excess as before. This is done differently depending on the scenario:
    \begin{inparaenum}
        \item If the subtraction performed was \textit{not} identical to the previous subtraction (i.e the same slice $\sigma$ being taken from exactly the same subset $Q_\sigma$ of $Q$), then add a $U(1)$ node $u$ to the quiver, and connect it to the nodes in $E$ in the following way:
        \begin{inparaenum}
            \item Connect all nodes in $E_l$ to $u$ with sufficiently many simply laced edges such that the excess of the nodes in $E_l$ is restored to the values they took before the subtraction.
            \item  Connect all nodes in $E_s$ to $u$ with sufficiently many \textit{non-}simply laced edges whose multiplicity is equal to the ``shortness" of the node in $E_s$ in question, such that $u$ is the long node and the excess of the nodes in $E_s$ is restored to the values they took before the subtraction.
        \end{inparaenum}
        \item If the subtraction performed was the $n^{th}$ ($n \geq 2$) in a string of identical subtractions (i.e. subtracting the same slice $\sigma$ from precisely the same subset $Q_\sigma$ of $Q$ multiple times in a row), then, calling the $U(1)$ node added to rebalance after the \textit{first} such subtraction $u$, apply the following:
        \begin{inparaenum}
            \item If $n=2$, add an adjoint hypermultiplet to $u$, and increase the rank of this node to two.
            \item If $n \geq 3$, $u$ will already have an adjoint hypermultiplet by the previous step, so simply raise the rank of $u$ by one (this will mean its rank is $n$).
        \end{inparaenum}
    \end{inparaenum}
\end{inparaenum}

\paragraph{Example} Consider the quiver
\begin{equation}\label{Qs eg}
\begin{tikzpicture}[x=1cm,y=.8cm]
\node (g1) at (-2,0) [gauge,label=below:{$1$}] {};
\node (g1b) at (-2,-1) {$\textcolor{red}{0}$};
\node (g2) at (-1,0) [gauge,label=below:{$2$}] {};
\node (g2b) at (-1,-1) {$\textcolor{red}{0}$};
\node (g3) at (0,0) [gauge,label=below:{$3$}] {};
\node (g3b) at (0,-1) {$\textcolor{red}{0}$};
\node (g4) at (1,0) [gauge,label=below:{$2$}] {};
\node (g4b) at (1,-1) {$\textcolor{red}{0}$};
\node (g5) at (2,0) [gauge,label=below:{$1$}] {};
\node (g5b) at (2,-1) {$\textcolor{red}{0}$};
\node (g12) at (-0.5,1) [gauge,label=left:{$1$}] {};
\node (g12b) at (-0.5,1.5) {$\textcolor{red}{1}$};
\node (g13) at (0.5,1) [gauge,label=right:{$1$}] {};
\node (g13b) at (0.5,1.5) {$\textcolor{red}{1}$};
\node (g14) at (2.5,0) {$,$};
\draw (g1)--(g2)--(g3)--(g4)--(g5);
\draw (g3)--(g12);
\draw (g3)--(g13);
\end{tikzpicture}
\end{equation}
where the ranks of the gauge nodes are given in black, and the excess of each node is given in red. We can see that the central five nodes form the shape of the magnetic quiver for the $d_4$ slice and have appropriately large ranks, so this slice can be subtracted. No other slice is a subset of the nodes of this quiver, so this is the only possible subtraction. Performing Step $1$ gives
\begin{equation}\label{QS eg}
\begin{tikzpicture}[x=1cm,y=.8cm]
\node (g1) at (-2,0) [gauge,label=below:{$1$}] {};
\node (g1b) at (-2,-1) {$\textcolor{red}{-1}$};
\node (g2) at (-1,0) [gauge,label=below:{$1$}] {};
\node (gb2) at (-1,-1) {$\textcolor{red}{0}$};
\node (g3) at (0,0) [gauge,label=below:{$1$}] {};
\node (g3b) at (0,-1) {$\textcolor{red}{0}$};
\node (g4) at (1,0) [gauge,label=below:{$1$}] {};
\node (g4b) at (1,-1) {$\textcolor{red}{0}$};
\node (g5) at (2,0) [gauge,label=below:{$1$}] {};
\node (g5b) at (2,-1) {$\textcolor{red}{-1}$};
\node (g14) at (2.5,0) {$.$};
\draw (g1)--(g2)--(g3)--(g4)--(g5);
\end{tikzpicture}
\end{equation}
Here, the nodes which have changed excess are the two end nodes. They are both long nodes, and so $E=E_l$ ($E_s$ is empty). This means that in Step $2$, we must follow option $a)i)$.\footnote{An example of a quiver where one must instead follow Step $2)b)$ during the quiver subtraction process is (\ref{double e6 subtraction eg}).} The nodes in $E$ both need just one extra flavour to restore their balance, and so we connect the new $U(1)$ to either end node with just a single simply laced edge:
\begin{equation}\label{a5 quiv}
\begin{tikzpicture}[x=1cm,y=.8cm]
\node (g1) at (-2,0) [gauge,label=below:{$1$}] {};
\node (g1b) at (-2,-1) {$\textcolor{red}{0}$};
\node (g2) at (-1,0) [gauge,label=below:{$1$}] {};
\node (gb2) at (-1,-1) {$\textcolor{red}{0}$};
\node (g3) at (0,0) [gauge,label=below:{$1$}] {};
\node (g3b) at (0,-1) {$\textcolor{red}{0}$};
\node (g4) at (1,0) [gauge,label=below:{$1$}] {};
\node (g4b) at (1,-1) {$\textcolor{red}{0}$};
\node (g5) at (2,0) [gauge,label=below:{$1$}] {};
\node (g5b) at (2,-1) {$\textcolor{red}{0}$};
\node (g6) at (0,1) [gaugeb,label=45:{$1$},label=135:{$\textcolor{red}{0}$}] {};
\node (g14) at (2.5,0) {$.$};
\draw (g1)--(g2)--(g3)--(g4)--(g5)--(g6)--(g1);
\end{tikzpicture}
\end{equation}
We have coloured the rebalancing node $u$ in blue. (\ref{a5 quiv}) is precisely the elementary slice $\sigma=a_5$, and so clearly this is all that can be subtracted. Doing so leaves nothing left, and so this concludes the exploration of the foliation of the Coulomb branch of (\ref{Qs eg}), telling us that its Hasse diagram is
\begin{equation}\label{HD eg}
    \begin{tikzpicture}
    \node (1) [hasse] at (0,0) {};
    \node (2) [hasse] at (0,-1) {};
    \node (3) [hasse] at (0,-2) {};
    \node (4) at (0.5,-1) {$.$};
    \draw (1) edge [] node[label=left:$d_4$] {} (2);
    \draw (2) edge [] node[label=left:$a_5$] {} (3);
    \end{tikzpicture}
\end{equation}
The interpretation here is if you pick a point on the Coulomb branch moduli space, it will lie on one of the three leaves in the Hasse diagram (\ref{HD eg}), each of which correspond to a certain set of massless states. The Coulomb branch of the quiver theory of (\ref{Qs eg}) is obviously the closure of the top leaf (the whole Hasse diagram), as this is the moduli space we're studying. A generic point on this Coulomb branch will have the maximal number of massless states: $10$ massless vectormultiplets,\footnote{The number of massless vectormultiplets can be read from the quiver. The quaternionic dimension of the Coulomb branch will be the number of monople operators we have, which is equal to the rank of the gauge group. The complex or real dimension then is twice or four times this respectively.} and no massless hypermultiplets (by the BPS formula). All such points live on the top leaf, associated to the quiver (\ref{Qs eg}). There are then certain points on the Coulomb branch which have fewer massless states: just $5$ massless vectormultiplets now, plus a massless hypermultiplet.\footnote{This massless hypermultiplet opens up Higgs branch directions in the Hasse diagram for the full moduli space as it may now acquire a VEV \cite{Grimminger:2020dmg}: there is a set of moduli associated to this hypermultiplet which can be tuned away from zero to explore the Higgs branch. It can be seen from the brane picture that these moduli correspond to the $D_4$ variety. Such a set of moduli is called a transverse slice, as discussed later in this paragraph.} Such points lie on the middle leaf of (\ref{HD eg}), and are associated to the quiver (\ref{a5 quiv}). The Coulomb branch of (\ref{a5 quiv}) is then the closure of this middle leaf (i.e all points on this leaf and the bottom leaf). At one particular point on the Coulomb branch all $20$ vectormultiplets and $22$ hypermultiplets are massless. This is the origin of the Coulomb branch, and is the sole point which lives on the bottom leaf of (\ref{HD eg}). The whole Hasse diagram obviously contains its bottom half, and so the Coulomb branch moduli space of (\ref{Qs eg}) contains that of (\ref{a5 quiv}). The transverse slice between these two moduli spaces is $d_4$, and between the origin and the (\ref{a5 quiv}) moduli space is $a_5$. The transverse slices connecting two leaves tell us the moduli that need tuning away from or to zero to move between the two corresponding leaves. The lowest elementary slice is $a_5$, and thus $SU(6)$ must be at least a subgroup of the global symmetry. Indeed this is confirmed upon computation of the Hilbert series, which tells us that the global symmetry is $SU(6) \times U(1)$. \hfill $\square$

\section{Fugacity Maps}
\label{app:fugmap}
In this appendix we discuss the notion of fugacity maps, and end by giving the fugacity map for the quiver in (\ref{double e6 RHS example quiver}) and its derivation.\\ 

\noindent The Hilbert series counts the chiral operators that parameterise the moduli space of a theory, ``graded" by their representations under global symmetries. The Coulomb branch is parameterised by monopole operators charged under the topological symmetry, and in the UV this is comprised of a $U(1)$ global symmetry for each $U(1)$ factor in the gauge group: $U(1)^r$, where $r$ is the rank of the gauge group. Upon flowing to the IR, this symmetry often grows due to the appearance of extra monopole operators. To account for this topological global symmetry in the monopole formula, a fugacity $z_i$ is introduced for each gauge node in the quiver, and is raised to the power of the charges of the monopole operators under this symmetry.\footnote{Note that since the symmetry comes from Abelian factors in the gauge group, this charge will just be the sum of the magnetic weights involved.} When we sum over the magnetic lattice in the monopole formula we can see the $U(1)^r$ topological symmetry from the UV becoming larger in the IR: the coefficients of the powers of $t$, which will be functions of the fugacities $z_i$, form characters of the representations of some group. This group is exactly the topological global symmetry of the Coulomb branch moduli space.\\

\noindent We are used to working with characters which are given as arbitrary fugacities graded by the weights of the representation in question. We will call characters expressed in this way fundamental weight characters, and their fugacities fundamental weight fugacities $x_i$. However, the coefficients of $t$ appearing in the Hilbert series are not always immediately characters of this form. Often, the $z_i$ must undergo some sort of mapping before becoming the fundamental weight fugacities $x_i$ so that the coefficients they form can be readily recognised as some sum of fundamental weight characters of the relevant global symmetry group. Such a mapping is called a \textbf{fugacity map.} Note that in an unframed quiver, the map can change depending on where you ungauge. Recall that the $t^2$ coefficient of the Hilbert series of a moduli space forms the character for the adjoint representation of its global symmetry. This means that we can isolate just the $t^2$ term to find the fugacity map.\\

\noindent In the simplest cases, the $t^2$ coefficient in the Hilbert series comes out as the character of the global symmetry in terms of the simple roots (we will call such characters simple root characters\footnote{Recall that characters encode a representation. Said representation contains certain weights, written in terms of a linear combination of fundamental weights, and the coefficients in this linear combination are how we grade the fundamental weight fugacities in the fundamental weight character. For the simple root character, the only difference is that the weights are written in terms of a linear combination of the simple roots instead. This linear combination will clearly have different coefficients to the equivalent linear combination of fundamental weights, hence the different character.}) without any manipulation of fugacities. In these cases the fugacity map required is simply given by the Cartan matrix. This is the case for example for any affine Dynkin diagram (see Table \ref{tab:affine Dds}) when the ungauging is performed on the affine node.\\

\paragraph{Example} Consider the affine $A_2$ Dynkin diagram, which we know has global symmetry $SU(3)$. The ungauged quiver is
\begin{equation}\label{ungauged affine A2}\begin{tikzpicture}[x=1cm,y=.8cm]
\node (g1) at (-0.5,0) [gauge,label=below:{$1$}] {};
\node (g4) at (-0.5,1) [flavor,label=left:{$1$}] {};
\node (g2) at (0.5,0) [gauge,label=below:{$1$}] {};
\node (g3) at (0.5,1) [flavor,label=right:{$1$}] {};
\draw (g4)--(g1)--(g2)--(g3);
\node (f) at (1,0) {$.$};
\end{tikzpicture}
\end{equation}
If we call the simple roots of $SU(3)$ $\alpha_1$ and $\alpha_2$, then the full root system is 
\begin{equation}\label{rootsyst su3}
    \{ \alpha_1, \ \alpha_2, \ \alpha_1 + \alpha_2, \ -\alpha_1, \ -\alpha_2, \ -\alpha_1-\alpha_2 \}.
\end{equation}
Assigning fugacities $z_1$ and $z_2$ to the remaining gauge nodes, the Hilbert series of (\ref{ungauged affine A2}) can be computed to $t^2$ as
\begin{equation}\label{a2 t2 coeff}
    1+(2 + \frac{1}{z_1} + z_1 + \frac{1}{z_2} + \frac{1}{z_1 z_2} + z_2 + z_1 z_2) \, t^2 + \mathcal{O}(t^4).
\end{equation}
The $t^2$ term has unrefined dimension $8$, and so we expect an $SU(3)$ global symmetry. This can be confirmed by inspecting the refined $t^2$ coefficient: it is indeed the root decomposition of the algebra of $SU(3)$. The Cartan subalgebra is encoded in the constant term equal to $rank(SU(3))=2$, and all positive and negative roots are encoded by the products of $z_1$, $z_2$ and their reciprocals: $z_1$ and $z_2$ are raised to the powers of the coefficients of the simple roots that are equal to these positive and negative roots. That is, if one uses the identification
\begin{equation}
    c_{1}\, \alpha_1 +c_{2}\, \alpha_2 \ \longleftrightarrow \ z_1^{c_{1}}\, z_2^{c_{2}},
\end{equation}
we see that the root system of $SU(3)$ (\ref{rootsyst su3}) and the two Cartan elements completely comprises the $t^2$ coefficient of the Hilbert series (\ref{a2 t2 coeff}). This tells us that (\ref{a2 t2 coeff}) is written in terms of simple root characters of its global symmetry $SU(3)$. To convert to the more familiar fundamental weight characters then, we need to apply the Cartan matrix as our fugacity map, yielding the Hilbert series
$$1+ (2 + \frac{x_1}{x_2^2} + \frac{1}{x_1 x_2} + \frac{x_1^2}{x_2} + \frac{x_2}{x_1^2} + x_1 x_2 + \frac{x_2^2}{x_1}) \, t^2 + \mathcal{O}(t^4),$$
the $t^2$ coefficient of which we indeed recognise as the usual fundamental weight character of the adjoint representation of $SU(3)$. This confirms the $SU(3)$ global symmetry. \hfill $\square$\\

\noindent In most cases the fugacity maps are a bit trickier to find. However there are some well known tricks that work in a lot of instances, and we will try to illustrate these in the example of (\ref{double e6 RHS example quiver}).\\

\paragraph{Example} Consider the quiver $Q$ of (\ref{double e6 RHS example quiver}). Label the fugacities assigned to the nodes as follows\footnote{The general intuition for doing this is that there is a balanced $D_6$ Dynkin diagram, and so we label the nodes in this with index corresponding to the weight that node represents in highest weight notation.}:
\begin{equation}
    \begin{tikzpicture}[x=1cm,y=.8cm]
    \node (g2) at (-2,0) [gauge,label=below:{$2$}] {};
    \node (g2b) at (-2,-1) {$\textcolor{red}{z_1}$};
    \node (g3) at (-1,0) [gauge,label=below:{$4$}] {};
    \node (g3b) at (-1,-1) {$\textcolor{red}{z_2}$};
    \node (g4) at (0,0) [gauge,label=below:{$6$}] {};
    \node (g4b) at (0,-1) {$\textcolor{red}{z_3}$};
    \node (g5) at (1,0) [gauge,label=below:{$8$}] {};
    \node (g5b) at (1,-1) {$\textcolor{red}{z_4}$};
    \node (g6) at (2,0) [gauge, label=below:{$5$}] {};
    \node (g6b) at (2,-1) {$\textcolor{red}{z_5}$};
    \node (g7) at (1,1) [gauge, label=left:{$5$}] {};
    \node (g7b) at (1.5,1) {$\textcolor{red}{z_6}$};
    \node (g8) at (1,2) [gauger, label=left:{$2$}] {};
    \node (g7b) at (1.5,2) {$\textcolor{red}{z_8}$};
    \node (g9) at (3,0) [gauger, label=below:{$2$}] {};
    \node (g9b) at (3,-1) {$\textcolor{red}{z_7}$};
    \node (f) at (3.5,0) {$.$};
    \draw (g2)--(g3)--(g4)--(g5)--(g6)--(g9);
    \draw (g5)--(g7)--(g8);
    \end{tikzpicture}
\end{equation}
The $t^2$ coefficient that we get unrefined is $79$, which is the dimension of $SO(13) \times U(1)$.\footnote{The character of a product group irreducible representation is equal to the sum of the characters of the individual irreducible representations of each group in the product.} Refined, it has virtually no fractional terms, which means it can't be in the form of the root system of this global symmetry.\footnote{The weight system of a real representation is comprised of some set of weights and their inverses, and possibly some trivial elements. The adjoint representation is real, and so here the weights are plus and minus the positive root system, in addition to the Cartan elements. This means the refined $t^2$ coefficient would have an even number of fractional and non-fractional terms that would be inverses of one another.} We have two unbalanced nodes, which we don't expect to contribute to the non-Abelian global symmetry. A nice trick that often works is to map a fugacity $z_{i}$ corresponding to an unbalanced node of rank $r_{i}$ to the $r_{i}^{th}$ root of the inverse of the product of all other fugacities raised to the power of their node ranks:
\begin{equation}\label{unbalanced node fug map}
    z_i \longrightarrow \sqrt[\leftroot{-2}\uproot{2}r_i]{\frac{1}{\prod_{j \neq i} z_j^{r_j}}}.
\end{equation}
In this case there are two unbalanced nodes: $z_7$ and $z_8$. We choose to pick $z_8$ to be our $z_i$ of (\ref{unbalanced node fug map}), and thus our map here will be 
\begin{equation}
    z_8 \longrightarrow \sqrt[\leftroot{-2}\uproot{2}2]{\frac{1}{z_1^2 \, z_2^4 \, z_3^6 \, z_4^8 \, z_5^5 \, z_6^5 \, z_7^2}}.
\end{equation}
After applying this, the $z_7$ fugacity also drops out of the Hilbert series, and so we have just $z_1,...,z_6$ left. In the resulting Hilbert series there are terms of the form 
$$\sqrt{z_5 \, z_6}, \ \sqrt{\frac{z_5}{z_6}}$$
appearing, and we don't have fractional powers in root systems. So a natural map to take next is 
$$z_5 \longrightarrow \frac{z_9}{z_{10}} , \ z_6 \longrightarrow z_9 \, z_{10}.$$
In fact we find that then just sending 
$$z_9 \longrightarrow z_9 \, z_{10}$$
gives us the root decomposition of $SO(13) \times U(1)$,\footnote{Recall the adjoint representation of $U(1)$ is just the trivial representation.} i.e. its simple root character. We can then just note that the simple roots are given by the Cartan matrix acting on the fundamental weights to find the more recognisable fundamental weight characters. Overall the fugacity map between the $z_i$ in the monopole formula and the fundamental weight fugacities $x_i$ of $B_6$ is given by 
\begin{equation}\label{fug map for e6 eg}
    \begin{pmatrix}
    z_1\\
    z_2\\
    z_3\\
    z_4\\
    z_5\\
    z_6\\
    z_7\\
    z_8\\
    \end{pmatrix}
    =M \, C \, 
    \begin{pmatrix}
    x_1\\
    x_2\\
    x_3\\
    x_4\\
    x_5\\
    x_6\\
    f\\
    \end{pmatrix}
\end{equation}
where $f$ is just some auxhiliary fugacity that disappears in the Hilbert series under this map, and
\begin{equation}
   M= \begin{pmatrix}
    1 & 0 & 0 & 0 & 0 & 0 & 0\\
    0 & 1 & 0 & 0 & 0 & 0 & 0\\
    0 & 0 & 1 & 0 & 0 & 0 & 0\\
    0 & 0 & 0 & 1 & 0 & 0 & 0\\
    0 & 0 & 0 & 0 & 1 & 0 & 0\\
    0 & 0 & 0 & 0 & 1 & 2 & 0\\
    0 & 0 & 0 & 0 & 0 & 0 & 1\\
    -1 & -2 & -3 & -4 & -5 & -5 & -1\\
    \end{pmatrix}, \ \ \
    C =\begin{pmatrix}
    2 & -1 & 0 & 0 & 0 & 0 & 0\\
    -1 & 2 & -1 & 0 & 0 & 0 & 0\\
    0 & -1 & 2 & -1 & 0 & 0 & 0\\
    0 & 0 & -1 & 2 & -1 & 0 & 0\\
    0 & 0 & 0 & -1 & 2 & -2 & 0\\
    0 & 0 & 0 & 0 & -1 & 2 & 0\\
    0 & 0 & 0 & 0 & 0 & 0 & 1\\
    \end{pmatrix}.
\end{equation}
$M$ is the matrix used to multiply the simple roots to find the $z_i$, and $C$ is the Cartan matrix of $B_6$ (with an extra trivial row tagged along to respect the auxiliary fugacity). Note that the matrix multiplication isn't meant in the usual sense here: rather than the entries of the matrix being coefficients of the vector they multiply, they are instead the powers that the vector elements (that they would traditionally multiply) are raised to. The fundamental weight fugacities $x_i$ are indexed in the usual order corresponding to the $B$ type Dynkin diagram:
\begin{equation}
    \begin{tikzpicture}[x=1cm,y=.8cm]
    \node (g2) at (-2.5,0) [gauge,label=below:{$\textcolor{red}{x_1}$}] {};
    \node (g3) at (-1.5,0) [gauge,label=below:{$\textcolor{red}{x_2}$}] {};
    \node (g4) at (-0.5,0) [gauge,label=below:{$\textcolor{red}{x_3}$}] {};
    \node (g5) at (0.5,0) [gauge,label=below:{$\textcolor{red}{x_4}$}] {};
    \node (g6) at (1.5,0) [gauge, label=below:{$\textcolor{red}{x_5}$}] {};
    \node (g7) at (2.5,0) [gauge, label=below:{$\textcolor{red}{x_6}$}] {};
    \node (f) at (3,0) {$.$};
    \draw (g2)--(g3)--(g4)--(g5)--(g6);
    \draw[transform canvas={yshift=-1.5pt}] (g6)--(g7);
    \draw[transform canvas={yshift=1.5pt}] (g6)--(g7);
    \draw (1.9,-0.2)--(2.1,0)--(1.9,0.2);
    \end{tikzpicture}
\end{equation} \hfill $\square$\\
\noindent This concludes our discussion on finding fugacity maps.

\section{Discrete Projections}
\label{app:discrete projection}
As mentioned in Section \ref{sec:quivers with adj matter} (and at multiple subsequent points) the Coulomb branches of quivers containing nodes with an adjoint hypermultiplet can be realised as discrete quotients of the Coulomb branches of quivers with a bouquet of $U(1)$ nodes (\ref{bouquet to adjoint conjecture}) \cite{Hanany:2018vph,Hanany:2018cgo,Hanany:2018dvd,Bourget:2020bxh}. In this appendix, we explain how this can be verified, and see it in practise with an example. As in the rest of the paper we omit ``Coulomb branch" from phrases (for example Coulomb branch HWG will be just HWG), as it is always assumed in this appendix that we are discussing the Coulomb branch.\\

\noindent Suppose we have two quivers $A$ and $B$ which are conjectured to satisfy the relation
\begin{equation}\label{appC eg}
   \mathcal{C}(B)/S_n=\mathcal{C}(A),
\end{equation}
due to $B$ having some $S_n$ outer automorphism. The way we prove this conjecture to the best of our ability is to show the equality of Hilbert series. This is done by calculating the Hilbert series for $B$, and finding some $S_n$ action on its generators and relations that obtains the Hilbert series of $A$. To be more precise, on the left hand side of (\ref{appC eg}) we are trying to calculate the $S_n$ gauged Coulomb branch of $B$. This Coulomb branch should be $S_n$ invariant, and so this gauging is realised by finding some action of $S_n$ on the Coulomb branch and performing the corresponding Molien sum on the the Hilbert series of $B$ to find the Hilbert series of $\mathcal{C}(B)/S_n$ such that it matches the Hilbert series of $\mathcal{C}(A)$, which can just be computed in the usual manner using the monopole formula.\\

\noindent Schematically, this Molien sum (which is responsible for making an object gauge invariant) over our discrete group $G=S_n$ goes like 
\begin{equation}
    HS(\mathcal{C}(B)/G)=\frac{1}{|G|}\sum_{g \in G} g \cdot HS(\mathcal{C}(B)).
\end{equation}
where $HS$ stands for Hilbert series and $\cdot$ is an action of $G$. The action of $G$ on a Hilbert series is fully determined by its action on the generators and relations of said Hilbert series. Each element of $G$ will in general act differently, and the contributions from each of these actions are summed together before their total is divided by the cardinality of $G$. To find the action of $G$, we need to analyse and understand its representations and characters. This is just an exercise in the theory of finite groups. The example we show here will be that of $G=S_2=\mathbb{Z}_2$, for which the game is a bit easier as there are just two one dimensional representations. However the method can be extended and applied to any finite group, provided the representations and characters are known and understood.\\

\paragraph{Example} Consider the magnetic quiver of the next to minimal nilpotent orbit of $B_3$:
\begin{equation}\label{nminB3}
     \begin{tikzpicture}[x=1cm,y=.8cm]
    \node (g2) at (-0.4,0) [gauge,label=below:{$2$}] {};
    \node (g1) at (-0.4,1.1) [gauge,label=left:{$2$}] {};
    \node (g3) at (0.4,0.8) [gauge,label=right:{$1$}] {};
    \node (g4) at (0.4,-0.8) [gauge,label=right:{$1$}] {};
    \node (f) at (0.9,0) {$.$};
    \draw (g1)--(g2)--(g3);
    \draw (g2)--(g4);
    \draw (g1) to [out=45,in=135,looseness=10] (g1);
    \end{tikzpicture}
\end{equation}
In this case, the conjecture (\ref{bouquet to adjoint conjecture}) tells us that
\begin{equation}\label{disc quot app eg}
    \mathcal{C}\left( \raisebox{-.5\height}{\begin{tikzpicture}[x=1cm,y=.8cm]
    \node (g1) at (-1.5,0) [gauge,label=below:{$2$}] {};
    \node (g2) at (-2.3,0.8) [gauge,label=left:{$1$}] {};
    \node (g3) at (-2.3,-0.8) [gauge,label=left:{$1$}] {};
    \node (g4) at (-0.7,0.8) [gauge,label=right:{$1$}] {};
    \node (g5) at (-0.7,-0.8) [gauge,label=right:{$1$}] {};
    \draw (g2)--(g1)--(g3);
    \draw (g1)--(g4);
    \draw (g1)--(g5);
    \end{tikzpicture}}
    \right)\Big/\mathbb{Z}_2 \, =\, \mathcal{C}\left(
    \raisebox{-.5\height}{\begin{tikzpicture}[x=1cm,y=.8cm]
    \node (g7) at (2,0) [gauge,label=below:{$2$}] {};
    \node (g6) at (2,1.1) [gauge,label=left:{$2$}] {};
    \node (g8) at (2.8,0.8) [gauge,label=right:{$1$}] {};
    \node (g9) at (2.8,-0.8) [gauge,label=right:{$1$}] {};
    \draw (g6)--(g7)--(g8);
    \draw (g7)--(g9);
    \draw (g6) to [out=45,in=135,looseness=10] (g6);
    \end{tikzpicture}}
    \right).
\end{equation}
Let's see how to show this. First, let's compute the Hilbert series of the quiver on the right hand side, which we call $Q_{nminB_3}$, so we know what we are looking to obtain from the left hand side via the discrete quotient. We can find the HWG of $Q_{nminB_3}$, which completely encodes the refined Hilbert series, to be
\begin{equation} \label{nminb3 hwg}
    HWG_{B_3}(Q_{nminB_3})= PE[\mu_2 t^2+\mu_1^2 t^4]=\frac{1}{(1-\mu_2 t^2)(1-\mu_1^2 t^4)},
\end{equation}
where the $B_3$ subscript on $HWG$ is used to illustrate that $\{\mu_1,\mu_2,\mu_3 \}$ are the Dynkin label fugacities (i.e. highest weight fugacities) for $B_3$.\\

\noindent Now it's time to realise this as the $\mathbb{Z}_2$ quotient of the minimal nilpotent orbit of $d_4$, which is the Coulomb branch of the quiver on the left hand side of (\ref{disc quot app eg}), that we'll call $Q_{minD_4}$. The HWG of this quiver can be easily computed to be
\begin{equation} \label{mind4 hwg}
    HWG_{D_4}(Q_{minD_4})=PE[\tilde{\mu}_2 t^2]=\frac{1}{1-\tilde{\mu}_2 t^2},
\end{equation}
where as before the $D_4$ subscript on $HWG$ tells us that $\{\tilde{\mu}_1,\tilde{\mu}_2,\tilde{\mu}_3,\tilde{\mu}_4 \}$ are the Dynkin label fugacities for $D_4$. The goal now is to find a $\mathbb{Z}_2$ action on this that will reproduce (\ref{nminb3 hwg}). However in order to do this we need to have (\ref{mind4 hwg}) in terms of the same fugacities as (\ref{nminb3 hwg}), so we must decompose the adjoint representation of $D_4$ into irreducible representations of $B_3$. Writing the characters of $D_4$ in terms of fundamental weight fugacities $\{x_1,x_2,x_3,x_4\}$ and those of $B_3$ in terms of $\{x_1,x_2,x_3\}$, then under the fugacity map $x_4 \rightarrow x_3$ one can find that the adjoint representation of $D_4$ decomposes into the sum of the adjoint and fundamental representation of $B_3$:
\begin{equation}
    \tilde{\mu}_2 \rightarrow \mu_2 + \mu_1.
\end{equation}
Based on this, we guess that the HWG of the minimal nilpotent orbit of $d_4$, i.e. $\mathcal{C}(Q_{minD_4})$, in terms of $B_3$ Dynkin label fugacities is
\begin{equation} \label{mind4 hwg in terms of b3}
    HWG_{B_3}(Q_{minD_4})=PE[(\mu_2 + \mu_1)t^2]=\frac{1}{(1-\mu_2 t^2)(1-\mu_1 t^2)}.
\end{equation}
We need to check that this is correct, as it could be that these representations of $B_3$ will actually overcount the representations of $D_4$ we wanted, and to correct this we'd need to impose relations. The way to check for this is to turn (\ref{mind4 hwg in terms of b3}) into a refined or unrefined Hilbert series, and compare it to that obtained from using the monopole formula on $Q_{minD_4}$ (after the appropriate fugacity map in the refined case). Here, under performing this check we see that (\ref{mind4 hwg}) and (\ref{mind4 hwg in terms of b3}) yield the same Hilbert series, and so (\ref{mind4 hwg in terms of b3}) is indeed the correct HWG for the minimal nilpotent orbit of $D_4$ in terms of $B_3$ fugacities.\\

\noindent Now all that's left to do is find the $\mathbb{Z}_2$ action that when used in a Molien sum on (\ref{mind4 hwg in terms of b3}) will yield (\ref{nminb3 hwg}). The action on the generators will fully determine the action everywhere, and we can see here our generators are $\mu_1 t^2$ and $\mu_2 t^2$. The group $\mathbb{Z}_2$ has two elements: the identity and some other element which squares to the identity, e.g. $\{1,-1\}$. So our Molien sum looks like
\begin{equation}
    HWG_{B_3}(Q_{minD_4}])/\mathbb{Z}_2 = \frac{1}{2} \sum_{g \in \{1,-1\}} PE[g \cdot \mu_1 t^2 + g \cdot \mu_2 t^2].
\end{equation}
The representations of $\mathbb{Z}_2$ that $\mu_1$ and $\mu_2$ are in determine the action of $\mathbb{Z}_2$ on them. There are just two representations of $\mathbb{Z}_2$: the trivial representation and the sign representation. The trivial representation is obviously invariant under all group elements, and the sign representation is mapped to plus or minus itself by the elements $1$ or $-1$ of $\mathbb{Z}_2$ respectively. It turns out that if the $\mu_1$ is in the sign representation and $\mu_2$ is in the trivial representation, we reproduce (\ref{nminb3 hwg}):
\begin{equation}
\begin{split}
    HWG(Q_{minD_4})/\mathbb{Z}_2 &= \frac{1}{2} \left( PE[1 \cdot \mu_1 t^2 + 1 \cdot \mu_2 t^2] + PE[-1 \cdot \mu_1 t^2 + -1 \cdot \mu_2 t^2] \right)\\
    &= \frac{1}{2} \left( PE[\mu_1 t^2 + \mu_2 t^2] + PE[-\mu_1 t^2 + \mu_2 t^2] \right)\\
    &= PE[\mu_2 t^2 + \mu_1^2 t^4]\\
    &= HWG(Q_{nminB_3}).
\end{split}
\end{equation}
This completes the proof of the equality of Hilbert series for $\mathcal{C}(Q_{minD_4})/S_2$ and $\mathcal{C}(Q_{nminB_3})$, and hence validating the conjecture (\ref{disc quot app eg}) to the best of our ability.\footnote{We say only to the best of our ability as the Hilbert series is not a complete characterisation of the Coulomb branch moduli space, but at present it is the most complete encapsulation that we have.} $\square$ \\

\bibliographystyle{JHEP}
\bibliography{bibli.bib}

\end{document}